\newtheorem{theorem}{Theorem}
\newtheorem{lemma}[theorem]{Lemma}
\newtheorem{definition}[theorem]{Definition}
\newtheorem{proposition}[theorem]{Proposition}
\newtheorem{corollary}[theorem]{Corollary}
\newcommand{\mypar}[1]{\vspace{0.1in}\noindent{\bf #1.}}
\DeclareMathAlphabet\mathbfcal{OMS}{cmsy}{b}{n}
\title{Detecting random walks on graphs with heterogeneous sensors}
\author{Dragana Bajovi\'c, Jos\'e M. F. Moura, Dejan Vukobratovi\'c}
\date{\today}
\begin{document}
\maketitle
%\displaydate
%

\mypar{Abstract}  We consider the problem of detecting a random walk on a graph, based on observations of the graph nodes. When visited by the walk, each node of the graph observes a signal of elevated mean, which we assume can be different across different nodes. Outside of the path of the walk, and also in its absence, nodes measure only noise. Assuming the Neyman-Pearson setting, our goal then is to characterize detection performance by computing the error exponent for the probability of a miss, under a constraint on the probability of false alarm. Since exact computation of the error exponent is known to be difficult, equivalent to computation of the Lyapunov exponent, we approximate its value by finding a tractable lower bound. The bound reveals an interesting detectability condition: the walk is detectable whenever the entropy of the walk is smaller than one half of the expected signal-to-noise ratio. We derive the bound by extending the notion of Markov types to Gauss-Markov types. These are sequences of state-observation pairs with a given number of node-to-node transition counts and the {\color{black} same average signal values across nodes, computed from the measurements made during the times the random walk was visiting each node's respective location}. The lower bound has an intuitive interpretation: among all Gauss-Markov types that are asymptotically feasible in the absence of the walk, the bound finds the most typical one under the presence of the walk. Finally, we show by a sequence of judicious problem reformulations that computing the bound reduces to solving a convex optimization problem, which is a result in its own right.

\mypar{Keywords} Random walk, hypothesis testing, error exponent, large deviations principle, threshold effect, Gauss-Markov type, convex analysis, Lyapunov exponent.

\section{Introduction}

Suppose we have a network of $N$ nodes, where each node is equipped with a sensor that measures the network environment. The environment can be in two states: 1) either a certain activity is present (e.g., an intruder, a signal), and the nodes have elevated mean; or 2) the environment is static, and the nodes measure only noise. We assume that the activity has the form of a random walk on the nodes of the graph, with a certain transition matrix $P$. We also assume that the measured signals are embedded in additive white Gaussian noise. The goal is to detect the random walk, based on the network nodes' observations.

This detection problem has widespread applicability. In~\cite{Hero06}, the authors consider the problem of detecting the spin of electrons using magnetic resonance force microscopy (MRFM). The observed signal is modeled as a random telegraph signal (i.e., Markov chain with states $0$ and $1$). Detecting an intruder by a sensor network (e.g., a video network) can also be modelled by this model. In~\cite{Candes06}, a similar, graphical model methodology is applied for detection of highly oscillatory signals (``chirps'').  In this paper, we present an application for random medium access in communications systems with extremely low signal-to-noise ratio (SNR) and unknown frequency selective fading. To establish communication with its associated base station, a user sends an access signal that hops from one frequency to another according to a Markov chain with a specified transition matrix; to detect a user, the base station then implements the corresponding likelihood ratio test (see eq.~\eqref{eq-LLR} further ahead). By performing frequency hopping, the effects of frequency selective fading are significantly alleviated, and furthermore without the need for synchronization between the sender and the receiver. This kind of scenario might be of interest for the emerging Narrow Band Internet of Things (IoT) standard, which envisions a similar random access setup for an extremely large number of communicating IoT devices and similar effects of unknown environments.

\mypar{Related literature} Detection of Markov chains hidden in noise has been considered in~\cite{Zig77},~\cite{Burnashev81},~\cite{Hero06},~\cite{Leong07}, and~\cite{Agaskar15}. For a more general review of hidden Markov processes, we refer the reader to the excellent overview paper~\cite{EphraimMerhav02}. {\color{black} In~\cite{Zig77} the author studies conditions for detectability of a random walk on a $d$-dimensional lattice of integers. It is shown that when $d=1$ or $d=2$, the random walk is always detectable, even for arbitrarily small values of signal to noise ratio (SNR). The reference also considers the general case $d\geq 3$ and finds a sufficient condition that guarantees that the walk is detectable and a necessary condition which, when violated, asserts that the walk cannot be detected. Under the same setup of a lattice random walk, reference~\cite{Burnashev81} shows that, for $d\geq 3$, there in fact exists a threshold on the SNR value which marks the border between the regions of detectability and undetectability. Although the existence of detectability threshold was proven, its exact computation remained an open question.} In~\cite{Hero06}, the authors consider spin detection and are concerned with deriving efficient detection tests and evaluating and comparing their performance. They show that, at any given time $t$, the optimal, likelihood ratio test can be conveniently expressed as a product $\Pi_t$ of $2t$ matrices, where the transition matrix $P$ alternates with independent and identically distributed (i.i.d.) diagonal matrices $D_t$ defined by network observations.

Papers closest to our work are~\cite{Leong07},~\cite{Agaskar15}. In~\cite{Leong07},~\cite{Agaskar15}, the authors consider Neyman-Pearson detection and study asymptotic performance of the likelihood ratio test as the number of observations per sensor grows. The assumed performance metric is the error exponent of the probability of a miss, under a fixed constraint on the probability of false alarm. Reference~\cite{Leong07} evaluates numerically the error exponent for a two-state Markov chain. Reference~\cite{Agaskar15} shows that finding the error exponent of the probability of a miss is equivalent to computing the Lyapunov exponent of the product $\Pi_t$ above -- a problem well-known to be difficult, see~\cite{TsitsiBlondel97}. Assuming identical SNR across nodes, the paper then finds a lower bound on the error exponent and shows by simulations that this bound is very close to the true error exponent value. {\color{black} The lower bound exhibits the same threshold value as the sufficient condition from~\cite{Zig77}: the walk is detectable if the SNR is greater than twice the entropy of the walk. While~\cite{Zig77} proves this result for a regular lattice (with equal transition probabilities at each node), reference~\cite{Agaskar15} generalizes this bound for general (although finite dimensional) random walks.}

In our previous work~\cite{Bajovic13}, we also studied products of i.i.d. random matrices. However, in contrast with the problem of evaluating the Lyapunov exponent, in~\cite{Bajovic13} we were concerned with evaluating the large deviations rate for the probability of the event that the product stays away from its limiting matrix (i.e., away from the Lyapunov exponent limit).

The setup that we study here is also related to random dynamical systems (iterated random functions)~\cite{Diaconis99}. In particular, the log likelihood ratio, see equation~\eqref{eq-zeta-via-Lyap-exp} in Section~\ref{sec-Setup}, can be represented  in the form of a random linear dynamical system, in which the random linear transformation has a specific form: a deterministic component of the dynamics -- the transition matrix $P$ -- is intertwined with a
random one -- the measurement dependent matrix $D_t$.

\mypar{Contributions} So far, random walk detection has only been considered under the assumption that the SNR values on the walk's path are equal. However, although convenient for analytical purposes, this assumption is often not realistic in practice. For example, in a video network, cameras closer to the intruder’s path will have better SNR than those further away. A communication signal that uses frequency hopping often experiences frequency selective fading -- an effect that the receiver must account for when designing signal detection tests. Motivated by these practical considerations, in this paper we address the scenario when the random walk signal is observed with different SNRs across nodes of the graph. We find a lower bound on the error exponent for Neyman-Pearson detection. {\color{black} The bound exhibits a threshold on the expected SNR value, thus generalizing the threshold of~\cite{Zig77} and~\cite{Agaskar15} to the case of unequal SNRs. Finally, we show that computing the error exponent lower bound is equivalent to solving a convex optimization problem, thus showing that it can be evaluated in a computationally efficient manner. The latter was not known even in the case of equal SNRs across nodes. We illustrate computation of the lower bound using the Frank-Wolfe (conditional gradient) algorithm~\cite{FrankWolfe56},~\cite{Jaggi13}. The bound is then compared with the true error exponent obtained by Monte Carlo simulations and the results show that the bound closely follows the true error exponent curve, under different simulation settings. }

To find the bound on the error exponent under the assumption of heterogeneous sensors, we had to follow a proof path different from~\cite{Agaskar15}. In particular, assuming that the SNRs across sensors are different, it is no longer possible to lump all measurements into a single quantity through summation (and similarly with the probability of the state sequence, see eq. (15) and the following text in~\cite{Agaskar15}). Instead, in our proofs, we increase the level of granularity by using the notion of Markov types -- sequences of states with the same transition counts, introduced by Davisson, Longo and Sgarro~\cite{Davisson81}. Following the intrinsic structure of the problem, where {\color{black} the random walk signal realizations are naturally grouped per each node where they have the same statistical properties, we extend the notion of Markov types to what we term Gauss-Markov types. The latter define pairs of state--signal sequences where the state sequences have equal Markov types and the  signal sequences have equal per-node average values}. We then prove the large deviations principle (LDP) for the {\color{black} empirical measure of the realization of the} Gauss-Markov type, when the state sequence is chosen uniformly at random from the set of all possible random walk sequences up to a given time. This result is the core technical result behind the lower bound on the error exponent.

\mypar{Paper organization} In Section~\ref{sec-Setup}, we pose the problem. In Section~\ref{sec-Gauss-Markov-types}, we introduce Gauss-Markov types, give preliminaries, and state the LDP result for Gauss-Markov types. In Section~\ref{sec-Lower-bound}, we state and prove Theorem~\ref{theorem-main}, the main result on the lower bound of the error exponent, while Section~\ref{sec-LDP-for-Gauss-Markov-proof} proves the LDP for Gauss-Markov types. Section~\ref{sec-Convexity} proves convexity of the error exponent lower bound and, as a by product, derives a solution by a single letter parametrization. {\color{black} Section~\ref{sec-Num-Results} illustrates evaluation of the bound and compares it with the true error exponent, and Section~\ref{sec-Conclusion} concludes the paper.}

\mypar{Notation}  For $N\in \mathbb N$, we denote by $1$ the vector of all ones in $\mathbb R^N$, and by $e_i$ the $i$-th canonical vector of $\mathbb R^N$ (that has value one on the $i$-th entry and the remaining entries are zero); we denote by $\mathbb R_{+}^N$ the set of vectors in $\mathbb R^N$ that have all elements non-negative {\color{black} and, similarly, by $\mathbb R_{+}^{N\times N}$ the set of all matrices in $\mathbb R^{N\times N}$ with nonnegative elements; for simplicity, sometimes $\mathbb R^{N\times N}$ is shortened to $\mathbb R^{N^2}$}. For a matrix $A$, we let $[A]_{ij}$ and $A_{ij}$ denote its $i,j$ entry and  for a vector $a\in \mathbb R^d$, we denote its $i$-th entry by $a_i$, $i,j=1,...,d$. For a function $f:\mathbb R^d\mapsto \mathbb R$, we denote its domain by $\mathcal D_f=\left\{ x\in \mathbb R^d: -\infty <f(x)<+\infty \right\}$; $\log$ denotes the natural logarithm and $\log_{+}$ denotes the function $\max\{0,\log\}$.  We let $\|\cdot\|$ denote the spectral norm of a square matrix. For $N$ real numbers $d_1,...,d_N$, we let $\mathrm{diag}\left\{d_1,...,d_N\right\}$ denote the diagonal matrix whose $i$th diagonal entry is $d_i$, for $i=1,...,N$.  {\color{black} A closed box in $\mathbb R^d$ of width $\rho$ and centered at $x$ is denoted by $B_{x}(\rho)$}; the closure, the interior, the boundary, and the complement of an arbitrary set $G\subseteq \mathbb R^d$ are respectively denoted by $\overline G$, $G^{\mathrm{o}}$, $\partial G$, and $G^{\mathrm{c}}$; $\mathcal B(\mathbb R^d)$ denotes the Borel sigma algebra on $\mathbb R^d$; {\color{black} $(\Omega,\mathcal F,\mathbb P)$ denotes a probability space, with sample space $\Omega$, sigma algebra $\mathcal F$, and probability measure $\mathbb P$, and we denote an outcome in $\Omega$ by $\omega$}; $\mathbb E$ denotes the expectation operator; $\mathcal N(m,S)$ denotes Gaussian distribution with mean vector $m$ and covariance matrix $S$; $\mathcal M(V,v_0,P)$ denotes a Markov chain on a finite set of states $V$ with initial distribution $v_0$ and a transition matrix $P$. For any integer $t$, $S_t$ denotes the state of the Markov chain at time $t$. For any integer $t$, $S^t$ denotes the sequence of the first $t$ states of the Markov chain, i.e., $S^t=(S_1,S_2,...,S_t)$.

\section{Problem setup}
\label{sec-Setup}
We consider testing the hypothesis whether or not there is an object (an agent) following a certain random walk on a given graph $G=(V,E)$ of $N$ nodes, where $V=\{1,2,...,N\}$ denote the set of nodes and $E$  denotes the set of edges of the graph. The transition matrix of the random walk is known and we denote it by $P$. We assume that $P$ is irreducible and aperiodic, and we denote the (unique) stationary distribution of the walk by {\color{black} $\pi>0$ (note that uniqueness and positivity follow from the Perron-Frobenius theorem for irreducible matrices, e.g.,~\cite{MatrixAnalysis})}. We also assume that the initial distribution is the stationary distribution, i.e., walk starts at node $i$ with probability $\pi_i$, for $i=1,...,N$.

At any time $t$, we denote by $S_t$ the node that the agent visits at time $t$ (the state of the Markov chain at time $t$). Each node in the graph $i$, at each time $t$, produces a noisy measurement of the activity at its location, which we denote by $X_{i,t}\in \mathbb R$. We assume that, in the absence of the walk, the measurement of node $i$ is standard normal (and hence the same for all nodes), and when visited by the walk, the measurement of node $i$ is  normally distributed with mean $\beta_i$ and variance equal to one.Thus, the SNR resulting from the presence of an agent performing the random walk (``activity'') is different across different nodes.  Summarizing, the two hypotheses we consider are:
\begin{align}
\label{def-hypothesis-testing}
\mathcal H_0:\; &X_{i,k}\stackrel{\mathrm{i.i.d.}}{\sim}\mathcal N(0,1)\\
\mathcal H_1:\; &X_{i,k}|S^t \stackrel{\mathrm{indep.}}{\sim}
\left\{\begin{array}[+]{ll}
\mathcal N(\beta_i,1), &\mathrm{\;if\;} S_k=i\\
\mathcal N(0,1), &\mathrm{\;if\;} S_k\neq i
\end{array}\right.,\\
& \phantom{X_{i,t}|S^t \stackrel{\mathrm{indep.}}{\sim}}\mathrm{where}\;\;\; S^t=(S_1,...,S_t)\sim \mathcal M(V,\pi, P),
\end{align}
for $i=1,2,...,N,$ $k=1,2,...,t$, and where, we recall, $\mathcal M(V,\pi,P)$ denotes a Markov chain on the set of states $V$, with initial distribution $\pi$ (the stationary distribution of the chain), and the transition matrix $P$.  For each $t$, we let the random matrix $\mathbfcal{X}^t\in \mathbb R^{N\times t}$ collect measurements of all nodes up to time $t$ in column-wise fashion, such that the $(i,k)$ entry of $\mathbfcal {X}^t$ stores the measurement of node $i$ at time $k$, i.e., $\mathbfcal {X}^t_{ik}=X_{i,k}$, for each $i$ and $k\leq t$.

We denote the probability laws corresponding to $\mathcal H_0$ and $\mathcal H_1$ by $\mathbb P_0$ and $\mathbb P_1$, respectively. Similarly, the expectations with respect to $\mathbb P_0$ and $\mathbb P_1$ are denoted by $\mathbb E_0$ and $\mathbb E_1$, respectively. The probability density functions of $\mathbfcal X^t$ under $\mathcal H_0$ and $\mathcal H_1$ are denoted by $f_{0,t}(\cdot)$ and $f_{1,t}(\cdot)$. It will also be of interest to introduce the conditional probability density function of $\mathbfcal X^t$ given $S^t=s^t$ (i.e., the likelihood functions{\color{black}~\cite{Zig77}},~\cite{EphraimMerhav02}), which we denote by $f_{1,t|S^t}(\cdot|s^t)$, for any $s^t$. Finally, the likelihood ratio at time $t$ is denoted by $L_t$ and at a given realization of $\mathbfcal X^t$ is computed by $L_t(\mathbfcal X^t)= \frac{f_{1,t}({\mathbfcal{X}}^t )}{f_{0,t}({\mathbfcal{X}}^t)}$.

\mypar{Error exponent} In this paper, we consider Neyman-Pearson hypothesis testing, and we are interested in computing the error exponent for the probability of a miss, given a threshold $\alpha$  on the probability of false alarm. For each $t$, let $P_{\mathrm{miss},t}^\alpha$ denote the infimum of {\color{black} the} probability of a miss among all tests such that the resulting probability of false alarm is below $\alpha$. {\color{black} Then, our goal is to compute, or characterize,
\begin{equation}\label{eq-error-exponent}
\lim_{t\rightarrow +\infty} - \frac{1}{t} \log P_{\mathrm{miss},t}^\alpha = \zeta,
\end{equation}
provided that the above limit exists. It is well known that, in the case of i.i.d. observations under both hypotheses, the above limit does exist, i.e., $P_{\mathrm{miss},t}^\alpha$ decays exponentially fast to zero, for any $\alpha>0$, and with the rate $\zeta$ equal to the Kullback-Leibler distance between $\mathbb P_0$ and $\mathbb P_1$. The latter result is known as the Stein's (or Chernoff-Stein's) lemma~\cite{Chernoff56},~\cite{Bahadur71},~\cite{DemboZeitouni93}. A generalization of the Stein's lemma for ergodic stochastic processes is presented in~\cite{Bahadur80}. This work is concerned with the limit of the scaled log-likelihood ratios in the almost sure sense:
\begin{equation}\label{eq-asymptotic-KL-rate}
\lim_{t\rightarrow +\infty} - \frac{1}{t} \log L_t(\mathbfcal X^t) = \kappa,
\end{equation}
and it asserts that when the above limit exists under $\mathbb P_0$, then the error exponent $\zeta$ also exists and moreover it holds that $\zeta=\kappa$~\cite{Bahadur80}{ \footnote{{\color{black}The quantity in~\eqref{eq-asymptotic-KL-rate} is sometimes referred to in the literature as the ``asymptotic KL rate'', see, e.g.,~\cite{Sung06}. In another line of works, concerned also with the existence of the error exponent in the ergodic case~\cite{Vajda89},~\cite{Vajda90} (see also~\cite{Luschgy93},~\cite{Chen96}), ``asymptotic KL rate'' is defined, not as the almost sure limit, but as the limit in expectation of the scaled log-likelihood ratios:
\begin{equation}
\label{eq-asymptotic-KL-rate-2}
\kappa^\prime=\lim_{t\rightarrow+\infty} - \frac{1}{t} \mathbb E_0\left[\log L_t(\mathbfcal X^t)\right],
\end{equation}
or, in other words, the limit of the (normalized) KL divergences across iterations. In~\cite{Vajda89},~\cite{Vajda90},~\cite{Chen96} it was shown that, if under $\mathbb P_0$ the scaled log-likelihood ratio converges with probability one to a limit,  then the limit in~\eqref{eq-error-exponent} exists and equals $\zeta= \kappa^\prime$. Finally, we remark that conditions for discriminating between two measures corresponding to arbitrary random sequences were studied in earlier works~\cite{Kakutani48} (the case of independent, but not identically distributed observations), and in~\cite{Skorokhod74},~\cite{Kabanov77}, but without the study of the rate of discrimination (the error exponent). In these works it was shown that the measures can be discriminated if and only if, under $\mathbb P_0$, the sequence of likelihood ratios $L_t$ converges with probability one to $0$. }}}}.

To compute the error exponent $\zeta$, we express the likelihood ratio in terms of the likelihood functions $f_{1,t|S^t}$~{\color{black}~\cite{Zig77}},~\cite{EphraimMerhav02}. For short, we denote (with some abuse of notation) $P(s^t)= \mathbb P_1(S^t=s^t)$, i.e., for any $s^t$, $P(s^t) = \pi_{s_1} \prod_{k=1}^{t-1} P_{s_{k}s_{k+1}}$. Further, for each $t$, let $\mathcal S^t$ denote the set of all feasible sequences $s^t$ of length $t$, i.e., $\mathcal S^t = \left\{s^t = (s_1,...,s_t): P(s^t)>0\right\}$, and let $C_t$ denote its cardinality, $C_t=|\mathcal S^t|$. By conditioning on the random walk realizations up to time $t$, it is easy to see that the likelihood ratio at time $t$ can be expressed as:
\begin{align}
\label{eq-LLR}
L_t(\mathbfcal{X}^t)&= \sum_{s^t \in \mathcal S^t} P(s^t)\frac{f_{1,t|S^t}(\mathbfcal {X}^t|s^t)}
{f_{0,t}(\mathbfcal {X}^t)}\nonumber\\
&=\sum_{s^t \in \mathcal S^t} P(s^t)e^{\sum_{k=1}^t \beta_{s_k} X_{s_k,k}- \frac{\beta_{s_k}^2}{2}}.
\end{align}

In~\cite{Hero06},~\cite{HeroICASSP06} the authors show that, seemingly combinatorial in nature, the sum in~\eqref{eq-LLR} can in fact be conveniently expressed as a matrix product:
\begin{equation}
\label{eq-zeta-via-Lyap-exp}
L_t(\mathbfcal{X}^t) =  {\color{black}\pi^\top} D_t P D_{t-1} P\ldots D_1 P 1,
\end{equation}
where, for each $t=1,2,...$, $D_t$ is a diagonal matrix defined by $D_t=\mathrm{diag} \left(e^{ \beta_1 X_{1,t} - \frac{\beta_1^2}{2}},...,e^{\beta_N X_{N,t} - \frac{\beta_N^2}{2}} \right)$, {\color{black} and where the measurements are taken under $\mathcal H_0$}. We remark that from~\eqref{eq-asymptotic-KL-rate} it follows that the error exponent $\zeta$, see~\eqref{eq-error-exponent}, is equal to the (top) Lyapunov exponent~\cite{Furstenberg1960}. Using the fact that the measurement {\color{black}vectors $\left(X_{1,t},...,X_{N,t}\right)$ are i.i.d. under $\mathcal H_0$, it follows that the matrices $D_t$ are also  i.i.d., and the existence of the limit in~\eqref{eq-asymptotic-KL-rate} follows} by the well-known Furstenberg-Kesten theorem~\cite{Furstenberg1960}. We formally state this result in Lemma~\ref{lemma-asymptotic-KL-rate}. Using the fact that $\pi>0$, the proof of Lemma~\ref{lemma-asymptotic-KL-rate} consists of verifying the condition of the Furstenberg-Kesten's theorem, i.e., proving that the expectation $\mathbb E_0\left[ \log_{+} \|PD_t\|\right]$ is finite.
{\color{black}
\begin{lemma}\label{lemma-asymptotic-KL-rate}
The limit in~\eqref{eq-asymptotic-KL-rate} exists almost surely, with respect to $\mathbb P_0$.
\end{lemma}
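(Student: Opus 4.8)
\emph{Proof idea.} The plan is to deduce the statement from the matrix-product representation~\eqref{eq-zeta-via-Lyap-exp} combined with the Furstenberg--Kesten theorem for products of i.i.d.\ random matrices. I would set $B_k:=D_kP$ for $k=1,2,\ldots$ and $\Pi_t:=B_tB_{t-1}\cdots B_1$, so that~\eqref{eq-zeta-via-Lyap-exp} reads $L_t(\mathbfcal X^t)=\pi^\top\Pi_t 1$. Under $\mathcal H_0$ the measurement vectors $(X_{1,k},\ldots,X_{N,k})$, $k\ge1$, are i.i.d., hence so are the diagonal matrices $D_k$ and therefore so are the matrices $B_k=D_kP$. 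Consequently, as soon as $\mathbb E_0\big[\log_{+}\|B_1\|\big]<+\infty$, the Furstenberg--Kesten theorem guarantees that $\tfrac1t\log\|\Pi_t\|$ converges $\mathbb P_0$-almost surely (and in $L^1$) to a deterministic constant $\gamma\in[-\infty,+\infty)$ -- the top Lyapunov exponent of the sequence $\{B_k\}$.

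The first thing to verify is therefore the integrability condition. Using $\|B_1\|=\|D_1P\|\le\|D_1\|\,\|P\|$, the inequality $\log_{+}(ab)\le\log_{+}a+\log_{+}b$, and $\|D_1\|=\max_i e^{\beta_iX_{i,1}-\beta_i^2/2}$, one gets
\[
\log_{+}\|B_1\|\ \le\ \log_{+}\|D_1\|+\log_{+}\|P\|\ \le\ \sum_{i=1}^{N}\Big(|\beta_i|\,|X_{i,1}|+\tfrac{\beta_i^2}{2}\Big)+\log_{+}\|P\| .
\]
Taking $\mathbb E_0$ and using $\mathbb E_0|X_{i,1}|=\sqrt{2/\pi}<\infty$ for every $i$ gives $\mathbb E_0[\log_{+}\|B_1\|]<\infty$. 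The same bound applies verbatim to $\|PD_t\|$, which is the quantity mentioned just before the statement.

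It then remains to pass from the matrix norm $\|\Pi_t\|$ to the scalar $L_t=\pi^\top\Pi_t 1$, i.e.\ to argue that the boundary vectors $\pi$ and $1$ do not affect the exponential rate. Here I would use that every entry of $\Pi_t=D_tP\cdots D_1P$ is nonnegative (a product of positive diagonal matrices and the nonnegative matrix $P$), together with $\pi>0$ (Perron--Frobenius). Writing $\pi_{\min}=\min_i\pi_i>0$, $\pi_{\max}=\max_i\pi_i$, and using that all norms on $\mathbb R^{N\times N}$ are equivalent -- so that $c_1\|A\|\le\sum_{i,j}|A_{ij}|\le c_2\|A\|$ for constants $c_1,c_2>0$ depending only on $N$ -- we obtain
\[
\pi_{\min}c_1\,\|\Pi_t\|\ \le\ \pi_{\min}\sum_{i,j}[\Pi_t]_{ij}\ \le\ \pi^\top\Pi_t 1\ \le\ \pi_{\max}\sum_{i,j}[\Pi_t]_{ij}\ \le\ \pi_{\max}c_2\,\|\Pi_t\| .
\]
Taking logarithms, dividing by $t$ and letting $t\to\infty$, the additive constants vanish and both outer terms converge $\mathbb P_0$-a.s.\ to $\gamma$; hence $\tfrac1t\log L_t(\mathbfcal X^t)\to\gamma$, i.e.\ $-\tfrac1t\log L_t(\mathbfcal X^t)\to\kappa:=-\gamma$ $\mathbb P_0$-almost surely, which is exactly~\eqref{eq-asymptotic-KL-rate}. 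If finiteness of $\kappa$ is also wanted, then $\gamma\le0$ since $\mathbb E_0[L_t]=1$ forces $\mathbb E_0[\log L_t]\le0$ for all $t$, while $\gamma>-\infty$ follows by lower-bounding $L_t$ by a single term $P(s^t)\exp\!\big(\sum_{k}(\beta_{s_k}X_{s_k,k}-\beta_{s_k}^2/2)\big)$ for a fixed feasible path $s^t$ (e.g.\ repeatedly traversing a positive-probability cycle) and invoking the strong law of large numbers under $\mathbb P_0$.

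I expect no serious obstacle: the integrability check is routine once one notes $\mathbb E_0|X_{i,1}|<\infty$, and the only genuinely delicate point is the last step -- making sure the end vectors $\pi$ and $1$ neither kill nor inflate the exponential growth of $L_t$ -- which is precisely where the hypotheses $P\ge0$ and $\pi>0$ are used.
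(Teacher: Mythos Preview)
Your proposal is correct and follows exactly the route the paper indicates: invoke the Furstenberg--Kesten theorem after checking $\mathbb E_0[\log_{+}\|PD_t\|]<\infty$, and use $\pi>0$ together with nonnegativity of the product to pass from the matrix norm to the scalar $\pi^\top\Pi_t 1$. The paper only sketches this (see the sentence preceding the lemma); your write-up simply fills in the details the paper leaves to the reader.
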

As a side remark, and as a byproduct we note that the application of the Furstenberg-Kesten's gives that the limit in~\eqref{eq-asymptotic-KL-rate} equals the limit of the normalized KL divergence computed across iterations $t=1,2,...$, i.e.,
\begin{equation}\label{eq-asymptotic-KL-rate-exp}
\lim_{t\rightarrow +\infty}  - \frac{1}{t} \log L_t(\mathbfcal X^t) = \lim_{t\rightarrow +\infty}  - \frac{1}{t} \mathbb E_0\left[\log L_t(\mathbfcal X^t)\right],
\end{equation}
where the latter holds almost surely with respect to $\mathbb P_0$. Thus, the two conditions for the generalized Stein's lemma referred to in the preceding text --  the almost sure limit (the left-hand side of~\eqref{eq-asymptotic-KL-rate-exp}) from~\cite{Bahadur80} and the limit in expectation (the right-hand side of ~\eqref{eq-asymptotic-KL-rate-exp}) from~\cite{Vajda89},~\cite{Vajda90}, are in our case equivalent.
}

The computation of the Lyapunov exponent is known to be a very difficult problem~\cite{TsitsiBlondel97}, even in the case when the sample space of random matrices consists of only two matrices~\cite{TsitsiBlondel97}. Thus, our goal is finding tractable upper and lower bounds for $\zeta$ (as computed from~\eqref{eq-asymptotic-KL-rate}). %Similarly as in~\cite{Agaskar15}, we base our analysis of the error exponent $\zeta$ on the expression~\eqref{eq-asymptotic-KL-rate-exp}.

\mypar{Upper bound for $\zeta$} We end this section with a simple and intuitive upper bound for $\zeta$. Suppose that we knew in advance the exact path $s^t$ that the random walk will take. {\color{black} With increasing probability, this path will be a typical one, implying that, when $t$ is large, the number of times that the random walk is in state $i$ along $s^t$ is very close to $\pi_i t$. } {\color{black} The likelihood ratio then equals $ L_t(\mathbfcal X^t)= e^{\sum_{k=1}^t \beta_{s_k} X_{s_k,k}- \frac{\beta_{s_k}^2}{2}}$} and the error exponent~\eqref{eq-asymptotic-KL-rate-exp} is computed by
\begin{align}
\label{eq-error-exponent-perfect-knowledge-of-the-path}
\lim_{t\rightarrow +\infty}  - \frac{1}{t} \mathbb E_0\left[ \sum_{k=1}^t \beta_{s_k} X_{s_k,k}- \frac{\beta_{s_k}^2}{2}\right] & = \lim_{t\rightarrow +\infty} \frac{1}{t} \sum_{k=1}^t \frac{\beta_{s_k}^2}{2}\\
%& = \lim_{t\rightarrow +\infty} \frac{1}{t} \mathbb E_1\left[\beta_{S_t}^2\right]
& = \sum_{i=1}^N \pi_i \frac{\beta_i^2}{2},
\end{align}
where the last equality follows by the typicality of $s^t$. Now, given that it operates with knowledge of the exact path of the random walk, it is intuitive to expect that the error exponent in~\eqref{eq-error-exponent-perfect-knowledge-of-the-path} will be an upper bound for the error exponent~\eqref{eq-error-exponent} for the random walk detection problem~\eqref{def-hypothesis-testing}. Proposition~\ref{prop-UB} which we present next formalizes mathematically this intuitive notion; the proof is given in Appendix~\ref{app-A}.
\begin{proposition}
\label{prop-UB}
  There holds $\zeta \leq \overline \zeta$, where
\begin{equation}
\label{eq-main-UB}
\overline \zeta = \sum_{i=1}^N \pi_i \frac{\beta_i^2}{2}.
\end{equation}
%which equals one half of the limiting value of the total SNR averaged over the random walk's path,  $\lim_{t\rightarrow +\infty} \frac{1}{2t}\mathbb E_1\left[\beta_{S_t}^2\right]$.
\end{proposition}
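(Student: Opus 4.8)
The plan is to work with the representation of the error exponent as the limit of normalized expected log-likelihood ratios, i.e., $\zeta = \lim_{t\to\infty} -\tfrac1t\,\mathbb E_0\!\left[\log L_t(\mathbfcal{X}^t)\right]$, which is available through~\eqref{eq-asymptotic-KL-rate-exp} (a byproduct of Lemma~\ref{lemma-asymptotic-KL-rate}). Then it suffices to show, for every $t$, the single-letter lower bound
\[
\mathbb E_0\!\left[\log L_t(\mathbfcal{X}^t)\right]\;\ge\;-\,t\sum_{i=1}^N \pi_i\,\frac{\beta_i^2}{2},
\]
after which dividing by $-t$ and letting $t\to\infty$ delivers $\zeta\le\overline\zeta$. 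The lower bound will follow from Jensen's inequality: convexity of the exponential lets us push the logarithm inside the average over random walk paths appearing in~\eqref{eq-LLR}, and the zero mean of the noise under $\mathbb P_0$ then does the rest.

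Concretely, I would regard the sum in~\eqref{eq-LLR} as an expectation over an auxiliary chain $S^t\sim\mathcal M(V,\pi,P)$ taken independently of $\mathbfcal{X}^t$, so that $L_t(\mathbfcal{X}^t)=\mathbb E_{S^t}\!\left[\exp\!\big(\sum_{k=1}^t \beta_{S_k}X_{S_k,k}-\tfrac{\beta_{S_k}^2}{2}\big)\right]$. Since $u\mapsto e^u$ is convex, Jensen's inequality gives, pointwise (almost surely),
\[
\log L_t(\mathbfcal{X}^t)\;\ge\;\mathbb E_{S^t}\!\Big[\sum_{k=1}^t \beta_{S_k}X_{S_k,k}-\frac{\beta_{S_k}^2}{2}\Big].
\]
Applying $\mathbb E_0$ to both sides and interchanging the two expectations --- legitimate by Tonelli/Fubini, since the finitely many terms $\beta_{S_k}X_{S_k,k}$ are each integrable (conditionally on $S_k$, the variable $X_{S_k,k}$ is standard normal under $\mathbb P_0$ and independent of $S^t$) --- we obtain $\mathbb E_0[\beta_{S_k}X_{S_k,k}]=0$ and hence $\mathbb E_0[\log L_t(\mathbfcal{X}^t)]\ge -\tfrac12\sum_{k=1}^t \mathbb E[\beta_{S_k}^2]$. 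Because the auxiliary chain starts from stationarity, $\mathbb P(S_k=i)=\pi_i$ for every $k$, so $\sum_{k=1}^t\mathbb E[\beta_{S_k}^2]=t\sum_{i=1}^N\pi_i\beta_i^2$, which is precisely the claimed bound with $\overline\zeta$ as in~\eqref{eq-main-UB}.

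I do not expect a genuine obstacle here; the statement is essentially a one-line Jensen argument once the right form of $\zeta$ is in hand. The only points requiring a little care are the appeal to~\eqref{eq-asymptotic-KL-rate-exp} so as to work with $\mathbb E_0[\log L_t]$ instead of the almost-sure limit, and the Fubini interchange together with the (automatic) check that $\mathbb E_0[\log L_t]$ is finite --- it lies above $-t\,\overline\zeta$ by the computation above and below $\log\mathbb E_0[L_t]=0$ by Jensen in the other direction. As an alternative one can bypass~\eqref{eq-asymptotic-KL-rate-exp} and argue directly with the almost-sure limit: the same Jensen step yields $\tfrac1t\log L_t(\mathbfcal{X}^t)\ge \sum_{i=1}^N\pi_i\beta_i\big(\tfrac1t\sum_{k=1}^t X_{i,k}\big)-\sum_{i=1}^N\pi_i\tfrac{\beta_i^2}{2}$, and since each empirical noise average $\tfrac1t\sum_{k=1}^t X_{i,k}$ tends to $0$ almost surely by the strong law of large numbers, passing to $\liminf_{t\to\infty}$ gives $\lim_t \tfrac1t\log L_t(\mathbfcal{X}^t)\ge -\overline\zeta$ almost surely, i.e., $\zeta\le\overline\zeta$.
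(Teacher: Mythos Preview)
Your proof is correct and takes essentially the same route as the paper: Jensen's inequality on the mixture in~\eqref{eq-LLR}, followed by $\mathbb E_0$ and the vanishing of the noise mean. Your version is in fact slightly cleaner---you use stationarity of the initial distribution directly to get $\mathbb P(S_k=i)=\pi_i$ for every $k$, whereas the paper takes an unnecessary Ces\`aro-limit detour, and your almost-sure alternative via the SLLN is a nice touch not present in the paper.
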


\section{Gauss-Markov types}
\label{sec-Gauss-Markov-types}
In this section, we review concepts and results from the literature that we used in our study of the limit in~\eqref{eq-asymptotic-KL-rate}. We also define some novel concepts that will prove instrumental in the analysis of this limit. Specifically, building on the notion of Markov types from~\cite{Davisson81}, we introduce the notion of Gauss-Markov types which, to each sequence of states $s^t$, besides Markov type, associates also the vector of {\color{black} the nodes' local average signal values, computed at each node from the measurements collected during the random walk's visits to their respective locations}. We then state the main result behind the lower bound on the error exponent, Theorem~\ref{theorem-Q-t-satisfies-LDP}, which asserts that the {\color{black} sequence of empirical measures of the realizations of the  Gauss-Markov type} satisfies the LDP. Given the importance of Theorem~\ref{theorem-Q-t-satisfies-LDP} in the analysis of the error exponent, we dedicate Section~\ref{sec-LDP-for-Gauss-Markov-proof} to its proof.

\mypar{Transition counts matrix and Markov types} For any given $t\geq 1$ and any given sequence $s^t\in V^t$, for each $i=1,...,N$, we denote by $K_{t,i}(s^t)$  the number of times $k$ along the sequence $s^t$ {\color{black} the chain visits $i$, i.e., such that} $s_k=i$, in other words $K_{t,i} (s^t)=|\left\{ 1\leq k\leq t: s_k=i\right\}|$. Similarly, for every pair $(i,j)$, we denote by ${[K_t]}_{ij} (s^t)$ the number of times along the sequence $s^t$ when the state switches from $i$ to $j$, i.e., ${[K_t]}_{ij} (s^t)=|\left\{1\leq k\leq  t: s_k=i,\,s_{k+1}=j\right\}|$, {\color{black}where for $k=t$, we take $s_{t+1}=s_1$ (intuitively speaking, the definition of $K_t$ ``sees'' the sequence $s^t$ as being circular -- the motivation for this is given the text further below)}~\footnote{{\color{black} This ``adjustment'' in the definition of $[K_t]_{ij}$ is not essential for our analysis and is made only to make certain arguments more elegant and also to simplify the notation. In particular, we use $[K_t]_{ij} $ in the proof of Lemma~\ref{lemma-main-proof-step-2}, Appendix~\ref{app-B}, to express the probability $P(s^t)$ of a given sequence $s^t$: $P(s^t)= \frac{\pi_{s_1}}{P_{s_t, s_1}} e^{\sum_{i,j=1}^N K_{ij} (s^t) \log P_{ij} }$, where we note that dividing by $P_{s_t, s_1}$ takes care of the extra count from $s_t$ to $s_1$ in the definition of $[K_t]_{ij}$, see also eq.~\eqref{eq-P-s-t-expressed} and the text following this equation in the proof of Lemma~\ref{lemma-main-proof-step-2} in Appendix~\ref{app-B}}}.   Matrix $K_t$ defined in this way is known as the transition counts matrix~\cite{Boza71}.
{\color{black}  Then, it is clear that for any $i$ such that $i\neq s_1$ and $i\neq s_{t}$, the number of occurrences of state $i$ along $s^t$ must be the same whether we counted times when the Markov chain enters or leaves this state. Consider now $i=s_1$. Looking at the times when the Markov chain enters state $i$, we see from the definition of $[K_t]_{ji}$ that we are accounting for this first occurrence of state $i$ by the ``virtual'' transition from $s_t$ (back) to $s_1$. A similar analysis applies to the case $i=s_t$. Summarizing, we have that ${K_{t,i}}\equiv \sum_{j=1}^N {[K_t]}_{ij}$, and also ${K_{t,i}}\equiv \sum_{j=1}^N {[K_t]}_{ji}$.}

\begin{definition}[Markov type~\cite{Davisson81}]
\label{def-Markov-type} Markov type on the set of states $V$ at time $t$ is the matrix $\Theta_t: V^t \mapsto \mathbb R^{N\times N}$, where, for each $s^t\in V^t$, $\Theta_t(s^t)$ is defined by
\begin{equation}\label{def-Markov-type}
{[\Theta_t]}_{ij}(s^t):= \frac{{[K_t]}_{ij}(s^t)}{t},
\end{equation}
for any $i,j=1,...,N$.
\end{definition}

For any fixed $t=1,2,...$, we define the set $\Delta_t$ that contains all possible Markov types\footnote{The set of all possible Markov types at time $t$, $\Delta^\star_t\subset \Delta_t$, compared to $\Delta_t$ has an extra condition in its definition: it requires that the submatrix obtained by deleting all zero rows and columns from the candidate matrix $\theta$, is irreducible, see~\cite{Natarajan85}. However, for our purposes this condition can be omitted, due to the fact that both $\cup_{t=1}^{+\infty}\Delta^\star_t$ and $\cup_{t=1}^{+\infty}\Delta_t$ are dense in $\Delta$ (defined further ahead in~\eqref{def-Delta}), see~\cite{Natarajan85}.} at time $t$:
\begin{align}
\label{def-Delta-t}
\Delta_{t}&=\left\{\theta \in \mathbb R^{N\times N}: \mathrm{for\; every\;}(i,j),
\mathrm{\;there\; exists\;} k_{ij}\in\mathbb Z \mathrm{\;s.t.\;}  \theta_{ij}=\frac{k_{ij}}{t},
\phantom{\sum_{i,j=1}^N k_{ij}=t}\right. \nonumber\\
&  \;\;\;\;\;\;\; \left. \mathrm{\;where\;} \sum_{i,j=1}^N k_{ij}=t,\, {\color{black} \sum_{j=1}^N k_{ij}=\sum_{j=1}^N k_{ji},\,\mathrm{for}\;i=1,...,N,\;}
k_{ij}\geq 0\;\mathrm{and}\;k_{ij}=0\;\mathrm{if}\;(i,j)\notin E\right\},
\end{align}
It will also be of interest to introduce the set $\Delta$ that contains the union of all $\Delta_t$ sets, $t\geq 1$:
\begin{align}
\label{def-Delta}
\Delta&=\left\{\theta \in \mathbb R^{N\times N}: \sum_{i,j=1}^N \theta_{ij}=1,{\color{black} \,\sum_{j=1}^N \theta_{ij}
=\sum_{j=1}^N \theta_{ji},\,\mathrm{for}\;i=1,...,N,\;} \theta_{ij}\geq 0\;\mathrm{and}\;\theta_{ij}=0\;\mathrm{if}\;(i,j)\notin E\right\}.
\end{align}
For a given $\theta\in \Delta$, we let $\overline \theta$ denote the vector of row sums of $\theta$,
$\overline \theta=\theta 1$; note that $\overline \theta \in \mathbb R_{+}^N$.

For each $\theta \in \Delta$, we define, for each $t\geq 1$, the set of sequences of length $t$ that have the same Markov type $\theta$:
\begin{equation}
\label{def-C-t-theta}
\mathcal S^t_\theta=\left\{s^t\in \mathcal S^t: K_{ij}(s^t) = \theta_{ij} t,\, i,j=1,...,N\right\}.
\end{equation}
We let $C_{t,\theta}$ denote its cardinality, $C_{t,\theta}= |\mathcal S^t_{\theta}|$. Note that $\mathcal S^t_{\theta}$ is non-empty if and only if $\theta \in \Delta_t$ (if $\theta \notin \Delta_t$, by the definition of $\Delta_t$, we have that the number of transition fractions $\theta$ is not realizable at time $t$, i.e., there is no sequence of length $t$ such that, for each $i,j$, the number of transitions from $i$ to $j$ equals $\theta_{ij}t$).

\mypar{Entropy functions} We now define relevant entropy functions~\cite{Cover06} (see also~\cite{Boza71},~\cite{Vasek80},~\cite{Natarajan85},~\cite{Davisson81}), that we will utilize in estimating the size of the sets $C_{t,\theta}$, $\theta \in \Delta_t$. To this end, note that each $\theta \in \Delta$ defines the respective Markov chain with transition matrix $Q$, defined by $Q_{ij}:=\theta_{ij}/\overline \theta_i$,  when $\overline \theta_i\neq 0$, and $Q_{ij}=0$, otherwise, for $i,j=1,...,N$. (We note in passing that if $\theta \in \Delta_t$, then $Q$ is the empirical transition matrix for the transition count matrix $K$ that validates the fact that $\theta$ belongs to $\Delta_t$, see eq.~\eqref{def-Delta-t}). {\color{black} Then, for each $\theta\in \mathbb R^{N\times N}_{+}$ we define:
\begin{equation}
\label{def-entropy}
H(\theta)= - \sum_{i,j=1}^N \theta_{ij}\log \frac{\theta_{ij}}{\overline \theta_i},
\end{equation}
see~\cite{Boza71}, and
\begin{equation}
\label{def-entropy}
D(\theta|| P)= \sum_{i,j=1}^N \theta_{ij}\log \frac{\theta_{ij}}{\overline \theta_i P_{ij}},
\end{equation}
see, e.g.,~\cite{Natarajan85}. We remark that $H(\theta)$ has the physical interpretation of the entropy rate of the Markov chain $Q$, and $D(\theta || P)$ is the relative entropy of the Markov chain $Q$ with respect to the Markov chain $P$. We refer to $H$ as the entropy function and to $D(\cdot ||P)$ as the relative entropy function.}

The following lemma asserts that $H$ is concave, and $D(\cdot||P)$ is convex (in $\theta$). We will use these results in Section~\ref{sec-Convexity}, when proving convexity of the error exponent lower bound. A (sketch of the) proof based on a certain matrix decomposition can be found in~\cite{Boza71}. We provide more direct proofs here, based on inspection of the Hessian matrix; see Appendix~\ref{app-A}.
\begin{lemma}
\label{lemma-H-functions-convexity}
\begin{enumerate}
\item \label{part-1-H-concave}
The entropy function $H: \mathbb R^{N\times N}_+\mapsto \mathbb R$ is concave.
\item \label{part-2-H-relative-convex}
The relative entropy function $D(\cdot||P):\mathbb R^{N\times N}_+\mapsto \mathbb R$ is convex.
\end{enumerate}
\end{lemma}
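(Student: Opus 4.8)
The plan is to establish both concavity of $H$ and convexity of $D(\cdot\|P)$ by directly examining the Hessian of each function on the open cone where all entries $\theta_{ij}$ are strictly positive (the boundary cases follow by continuity, since both $H$ and $D(\cdot\|P)$ extend continuously to $\mathbb R^{N\times N}_+$ under the convention $0\log 0 = 0$). Since the difference $D(\theta\|P) - \bigl(-H(\theta)\bigr) = \sum_{i,j}\theta_{ij}\log\theta_{ij} - \sum_{i,j}\theta_{ij}\log(\overline\theta_i P_{ij})$ involves only the term $-\sum_{i,j}\theta_{ij}\log\theta_{ij}$ (which is concave, being a sum of the concave scalar maps $x\mapsto -x\log x$) plus a \emph{linear} term $-\sum_{i,j}\theta_{ij}\log P_{ij}$ (which contributes nothing to the Hessian), it in fact suffices to prove \emph{one} of the two statements; the other is then immediate. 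I would prove part~\ref{part-1-H-concave} and deduce part~\ref{part-2-H-relative-convex}. Indeed, $-H(\theta) = \sum_{i,j}\theta_{ij}\log\theta_{ij} - \sum_i \overline\theta_i\log\overline\theta_i$, so once we know this is convex, adding the linear functional $-\sum_{i,j}\theta_{ij}\log P_{ij}$ keeps it convex, giving $D(\cdot\|P)$ convex.

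The core computation is thus the Hessian of $g(\theta) := -H(\theta) = \sum_{i,j}\theta_{ij}\log\theta_{ij} - \sum_i \overline\theta_i \log\overline\theta_i$, where $\overline\theta_i = \sum_k \theta_{ik}$. First I would differentiate: $\partial g/\partial\theta_{ij} = \log\theta_{ij} + 1 - \log\overline\theta_i - 1 = \log(\theta_{ij}/\overline\theta_i)$. Differentiating again, the second partials vanish unless the two indices lie in the same row $i$: for fixed $i$, $\partial^2 g/\partial\theta_{ij}\partial\theta_{ik} = \delta_{jk}/\theta_{ij} - 1/\overline\theta_i$. Hence the Hessian is block-diagonal across rows, with the $i$-th block being the $N\times N$ matrix $M^{(i)} = \mathrm{diag}(1/\theta_{i1},\ldots,1/\theta_{iN}) - \tfrac{1}{\overline\theta_i}\mathbf 1\mathbf 1^\top$. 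It remains to show each $M^{(i)}$ is positive semidefinite. For any $v\in\mathbb R^N$, $v^\top M^{(i)} v = \sum_j v_j^2/\theta_{ij} - (\sum_j v_j)^2/\overline\theta_i$, and since $\overline\theta_i = \sum_j \theta_{ij}$, the Cauchy–Schwarz inequality $\bigl(\sum_j v_j\bigr)^2 = \bigl(\sum_j (v_j/\sqrt{\theta_{ij}})\sqrt{\theta_{ij}}\bigr)^2 \le \bigl(\sum_j v_j^2/\theta_{ij}\bigr)\bigl(\sum_j \theta_{ij}\bigr)$ gives $v^\top M^{(i)} v \ge 0$. Therefore the Hessian of $g$ is positive semidefinite on the positivity cone, so $g$ is convex there, i.e.\ $H$ is concave; extending by continuity to $\mathbb R^{N\times N}_+$ completes part~\ref{part-1-H-concave}, and the linear-perturbation remark completes part~\ref{part-2-H-relative-convex}.

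The main subtlety — not an obstacle so much as a point requiring care — is the treatment of the boundary, i.e.\ configurations $\theta$ with some $\theta_{ij}=0$ (including whole zero rows, where $\overline\theta_i = 0$ and one must invoke the convention $0\log 0 = 0$ and, for $D$, $0\log(0/0)=0$ as set in the definition). Here the Hessian argument does not apply directly, but convexity/concavity pass to the closure of the domain by a standard limiting argument: for $\theta^{(0)},\theta^{(1)}\in\mathbb R^{N\times N}_+$ and $\lambda\in[0,1]$, apply the inequality along the relative interior of the segment and take limits using continuity of $H$ (resp.\ $D(\cdot\|P)$, where one should note the domain constraint $\theta_{ij}=0$ whenever $P_{ij}=0$ is needed for finiteness — though within $\Delta$ this is guaranteed by the edge-set condition $\theta_{ij}=0$ if $(i,j)\notin E$). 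I would state this closure step in one line and refer to the explicit Hessian computation in Appendix~\ref{app-A} for the details, exactly as the lemma's surrounding text already promises.
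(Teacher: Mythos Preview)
Your proposal is correct and follows essentially the same route as the paper: compute the Hessian of $H$, observe its block-diagonal structure with $i$-th block $-\mathrm{diag}(1/\theta_{i1},\ldots,1/\theta_{iN})+\tfrac{1}{\overline\theta_i}\,1\,1^\top$, show each block is (negative) semidefinite, and then deduce convexity of $D(\cdot\|P)$ from $D(\theta\|P)=-H(\theta)-\sum_{i,j}\theta_{ij}\log P_{ij}$. The only cosmetic difference is that the paper establishes the block inequality via Jensen's inequality for $z\mapsto 1/z$ (after normalizing the test vector to a probability vector), whereas you use Cauchy--Schwarz directly on an arbitrary $v$; your version is slightly cleaner since it avoids the normalization step, and you are more explicit about the boundary via continuity.
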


In Lemma~\ref{lemma-C-t-theta-bounds} that we state next we use the entropy function $H$ to approximate the cardinalities $C_{t,\theta}$ of sets $\mathcal S_{\theta}^t$, $\theta \in \Delta_t$. In particular, the result in part~\ref{part-C-t-theta} asserts that $C_{t,\theta}$ increases exponentially fast in the sequence length $t$, with the rate equal to $H(\theta)$; this result is originally proved by Whittle's formula, see~\cite{Whittle55},~\cite{Boza71}, and~\cite{Davisson81}, but we remark that it can also be proven using the asymptotic equipartition property (AEP) for Markov sources, see Chapter 3.1 in~\cite{Cover06}. It is also of interest to estimate the cardinality $C_t$ of the set $\mathcal S^t$ of all feasible sequences until time $t$. The result in part~\ref{part-C-t} of the lemma states that $C_t$ also increases exponentially in $t$, with the rate {\color{black}equal to the logarithm of the spectral radius} $\rho_0$ of $P_0$, where $P_0:=P^0$ is the zero-one sparsity matrix of the transition matrix $P$; this result easily follows from the fact that $\pi>0$, {\color{black} which then implies that the chain can start in any state, and thus} $C_t= 1^\top P_0^{t-1} 1$. For completeness, we provide proofs of both results, see Appendix~\ref{app-A}.
\begin{lemma}
\label{lemma-C-t-theta-bounds}
\begin{enumerate}
\item \label{part-C-t} For each $\epsilon>0$, there exists $t_0=t_0(\epsilon, P_0)$ such that
\begin{equation}
\label{eq-bounds-on-C-t}
\rho_0^t \leq C_t \leq \rho_0^t e^{t\epsilon},
\end{equation}
for all $t\geq t_0$, where $\rho_0>1$.
\item \label{part-C-t-theta} For each $\epsilon>0$, there exists  $t_1=t_1(\epsilon, P_0)$ such that, for each {\color{black}$t\geq t_1$} and each $\theta \in \Delta_t$, there holds
\begin{equation}
\label{eq-C-t-theta-bounds}
e^{t H(\theta)-t\epsilon} \leq C_{t,\theta}\leq e^{t H(\theta)+t\epsilon}.
\end{equation}
\end{enumerate}
\end{lemma}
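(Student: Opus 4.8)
The plan is to prove the two parts by different means: part~\ref{part-C-t} reduces to spectral analysis of the $0/1$ pattern matrix $P_0$ via the identity $C_t=\mathbf 1^\top P_0^{t-1}\mathbf 1$, while part~\ref{part-C-t-theta} is a Markov method-of-types estimate, obtainable either from Whittle's enumeration formula together with Stirling's approximation, or from the asymptotic equipartition property for Markov chains. For part~\ref{part-C-t}, note first that a state sequence of length $t$ is feasible exactly when it is a walk in $G$, and since $\pi>0$ the walk may start at any node; hence $C_t=\sum_{i,j}[P_0^{t-1}]_{ij}=\mathbf 1^\top P_0^{t-1}\mathbf 1$. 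Since $P$ is irreducible and aperiodic, $P_0$ is primitive; write $\rho_0$ for its Perron root and $u,v>0$ for the corresponding right and left Perron eigenvectors. I would first establish $\rho_0>1$: every row sum of $P_0$ is at least one, so $P_0\mathbf 1\ge\mathbf 1$ and therefore $\rho_0\,v^\top\mathbf 1=v^\top P_0\mathbf 1\ge v^\top\mathbf 1$, giving $\rho_0\ge1$; moreover $\rho_0=1$ would force $v^\top(P_0\mathbf 1-\mathbf 1)=0$, hence (as $v>0$ and $P_0\mathbf 1-\mathbf 1\ge0$) $P_0\mathbf 1=\mathbf 1$, i.e.\ every row sum equals one, so $P_0$ is the adjacency matrix of a single directed $N$-cycle, which is periodic, contradicting aperiodicity.

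For the bounds in part~\ref{part-C-t}: the upper bound follows from primitivity, since $P_0^{t-1}/\rho_0^{t-1}$ converges entrywise to a strictly positive matrix, so $C_t/\rho_0^{t-1}$ converges to a positive constant and $\tfrac1t\log C_t\to\log\rho_0$, which yields $C_t\le\rho_0^t e^{t\epsilon}$ for all $t\ge t_0(\epsilon,P_0)$. For the lower bound, rescale $u$ so that $\max_iu_i=1$; then $u\le\mathbf 1$ componentwise, hence $C_t=\mathbf 1^\top P_0^{t-1}\mathbf 1\ge\mathbf 1^\top P_0^{t-1}u=\rho_0^{t-1}\,\mathbf 1^\top u=\rho_0^{t-1}\|u\|_1$, and if $i^\star$ attains $u_{i^\star}=1$ then $\rho_0=\rho_0u_{i^\star}=(P_0u)_{i^\star}=\sum_{j:(i^\star,j)\in E}u_j\le\sum_ju_j=\|u\|_1$, so in fact $C_t\ge\rho_0^t$ for every $t\ge1$. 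Thus $t_0$ is needed only for the upper inequality, and $\rho_0>1$ has been established.

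For part~\ref{part-C-t-theta}, $C_{t,\theta}$ counts the walks whose circular transition-count matrix equals $K=t\theta$. Whittle's formula expresses this count as $\bigl(\prod_{i=1}^{N}\tfrac{(\overline\theta_i t)!}{\prod_j(\theta_{ij}t)!}\bigr)$ times a cofactor factor built from the stochastic matrix $Q$ with $Q_{ij}=\theta_{ij}/\overline\theta_i$; Stirling's approximation turns the first factor into $e^{tH(\theta)+O(\log t)}$ (the $t\log t$ and $-t$ terms cancel because $\sum_j\theta_{ij}=\overline\theta_i$, leaving $-t\sum_{i,j}\theta_{ij}\log(\theta_{ij}/\overline\theta_i)=tH(\theta)$), while the cofactor factor contributes only a $t^{O(1)}$ correction; combining these gives $\tfrac1t\log C_{t,\theta}=H(\theta)+O(\tfrac{\log t}{t})$, uniformly over $\theta\in\Delta_t$, which is the claim once $t\ge t_1(\epsilon,P_0)$. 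An equivalent route is to run the Markov chain $Q$ from the initial law $\overline\theta$ — which is $Q$-stationary because of the balance condition $\sum_j\theta_{ij}=\sum_j\theta_{ji}$ in the definition of $\Delta$ — observe that every $s^t\in\mathcal S^t_\theta$ then has $Q$-probability $e^{-tH(\theta)+O(\log t)}$, so $C_{t,\theta}=\mathbb P_Q(\mathcal S^t_\theta)\,e^{tH(\theta)+O(\log t)}$, and close the estimate using $\mathbb P_Q(\mathcal S^t_\theta)\le1$ together with $\mathbb P_Q(\mathcal S^t_\theta)\ge t^{-c}$ — the latter holding because $\theta$ is precisely the typical type of this chain — by the Markov AEP of~\cite{Cover06}. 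The real difficulty in part~\ref{part-C-t-theta} is not the exponential rate but the uniformity over the whole polytope $\Delta_t$: the sub-exponential factors (the Whittle cofactor, or the initial-distribution and wrap-around corrections) must be bounded by $\mathrm{poly}(t)$ from above and below uniformly, including for degenerate types with $\overline\theta_i=0$ or $\theta_{ij}=0$ and for ``thin'' types with $\theta_{ij}$ as small as $1/t$; this is where one restricts to the support of $\theta$, uses $0\log0=0$, and invokes the density of $\bigcup_t\Delta^\star_t$ in $\Delta$ noted in the footnote. Once this uniform $O(\log t)$ bound is in hand, it is absorbed into $t\epsilon$ for all $t$ past a threshold, completing the proof.
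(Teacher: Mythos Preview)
Your proposal is correct and follows the same overall approach as the paper: part~\ref{part-C-t} via the identity $C_t=\mathbf 1^\top P_0^{t-1}\mathbf 1$ and spectral analysis of $P_0$, part~\ref{part-C-t-theta} via Whittle-type enumeration plus Stirling. The tactical details in part~\ref{part-C-t} differ slightly---the paper bounds $C_t$ between $\|P_0^{t-1}\|_\infty$ and $N\|P_0^{t-1}\|_\infty$, invokes Gelfand's formula for the upper bound and the inequality $\rho(A)\le\|A\|$ for the lower, and proves $\rho_0>1$ by taking $M$ with $P_0^M>0$ and using $\rho(P_0^M)\ge N$---whereas you argue directly with the Perron eigenvector and rule out $\rho_0=1$ by the permutation/periodicity contradiction; both are valid and yours is arguably cleaner, giving $C_t\ge\rho_0^t$ for all $t$ without an asymptotic threshold.
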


\mypar{Gauss-Markov types} In our problem, we have with each sequence $s^t$ an associated sequence of Gaussian random variables $X_{s_k,k}$, $k=1,2,...,t$. From the expression for the likelihood ratio~\eqref{eq-LLR}, we see that the likelihood ratio depends on the sequence $X_{s_k,k}$ only {\color{black} through the sums of signal values $\sum_{k: s_k=i} X_{i,k}$, computed across nodes $i=1,...,N$. The intuitive interpretation of this sum is the following: if a node $i$ were aware of the presence of the walk each time when the walk visits this node, the node could sum-up the recorded signal values during each of these visits, and the result would be exactly the sum $\sum_{k: s_k=i} X_{i,k}$. Motivated by this observation, we extend the notion of Markov types to, what we call, Gauss-Markov types, by constructing a pair $(\theta,\xi)$, where $\theta$ is a Markov type associated with the sequence of states $s^t$, and $\xi$ is a vector in $\mathbb R^N$ whose $i$-th component equals the average of the signal values obtained during the visits of the random walk to node $i$.}  More generally, for any pair $(s^t, Z^t)$ where $s^t=(s_1,...,s_t)$ is a sequence of states in $V$ and $Z^t=(Z_1,...,Z_t)$ is a vector in $\mathbb R^t$, the Gauss-Markov type is defined as the pair $(\theta,\xi)$, where $\theta $ is given as in eq.~\eqref{def-Markov-type} above and, for $i=1,...,N$,
\begin{equation}\label{def-Gauss-Markov-type}
\xi_i= \frac{\sum_{k: s_k=i} Z_{k}}{K_i(s^t)}.
\end{equation}
{\color{black} Interpreting the vector $Z^t$ as the sequence of $t$ random walk signals  (without the knowledge at which locations they were recorded), vector $\xi$ is then saying what would be the average signal values at each node if the random walk took exactly path $s^t$. }

\subsection{LDP for Gauss-Markov types}
\label{subsec-LDP-for-GM}

In this subsection, we introduce the {\color{black} empirical measure of the realization of the Gauss-Markov type} and show that it satisfies the large deviations principle. We first give a formal definition of the large deviations principle~\cite{DemboZeitouni93},\cite{Hollander00}.

\begin{definition}[Large deviations principle~\cite{DemboZeitouni93}]
  It is said that a sequence of measures $\mu_t$ satisfies the large deviations principle with rate function $I$ if for every measurable set $G$ the following two conditions hold:
  \begin{enumerate}
    \item
    \begin{equation}\label{def-LDP-upper}
      \limsup_{t\rightarrow +\infty} \frac{1}{t}\log  \mu_t(G) \leq - \inf_{x\in \overline G} I(x);
    \end{equation}
    \item
    \begin{equation}\label{def-LDP-lower}
      \liminf_{t\rightarrow +\infty} \frac{1}{t} \log \mu_t(G) \geq - \inf_{x\in G^{\mathrm{o}}} I(x).
    \end{equation}
  \end{enumerate}
\end{definition}

{\color{black} For each fixed $t$, consider an experiment in which elements of $\mathcal S^t$ are drawn uniformly at random.} %As previously, we let $S^t$ be a randomly chosen element of $\mathcal S^t$.
Further, suppose that with each $s^t$ we have an associated random vector $Z_{s^t} \in \mathbb R^t$ {\color{black} of $t$ i.i.d. standard Gaussian random variables. For each $t$ we assume that vectors $Z_{s^t}$ corresponding to different sequences $s^t$ are independent, and we also assume that {\color{black} any two vectors $Z_{s^t}\in \left\{Z_{s^t}: s^t\in \mathcal S^t\right\}$ and $Z_{s^{t^\prime}}\in \left\{Z_{s^{t^\prime}}: s^{t^\prime}\in \mathcal S^{t^\prime}\right\}$, where $t\neq t^\prime$,} are also independent. Thus, for every different $t$, we have $|\mathcal S^t|$  independent $t$-dimensional standard Gaussian vectors, each corresponding to a fixed sequence $s^t$.
We denote by $(\Omega,\mathcal F,\mathbb P)$ the probability space that generates the sequence of collections of these random vectors, $\{Z_{s^t}\sim \mathcal N(0,I): s^t\in \mathcal S^t\}$, $t=1,2,...$.}  %We assume that, given $S^t=s^t$, $Z_{S^t}$ is independent of $S^t$.

On $\mathcal S^t$, we define the following mappings:
\begin{align}
\label{def-Theta-t}
[\Theta_t]_{ij} (s^t)& = \frac{K_{ij} (s^t)}{t},\mathrm{\;for\;} i,j=1,...,N\\
[\overline {\Theta_t}]_{i} (s^t)& = \frac{{K_{i}} (s^t)}{t},\mathrm{\;for\;} i=1,...,N.
\end{align}
That is, $\Theta_t$ is the Markov type of the sequence $s^t$, and $\overline{\Theta}_t$ satisfies $\overline \Theta_t= \Theta_t 1$. Note that for each $s^t$ and for each $i$ $[\overline \Theta_t]_{i}$ equals $\sum_{j=1}^N [{\Theta_t}]_{ij}$.
Also, for each $\omega\in \Omega$, for each $t$ and $s^t$, define
\begin{equation}
\label{def-mathcal-Z-t}
\mathcal Z_{t,i}^{\omega} (s^t)= \left\{
\begin{array}{ll}
\frac{ \sum_{k\in [1,t]: s_k=i} Z_{s^t,k} (\omega)}{t}, & \mathrm{if\;} i \mathrm{\;is\;s.t.\;} K_i(s^t)>0\\
0,& \mathrm{otherwise}
\end{array}\right., \mathrm{\;for\;}i=1,...,N.
\end{equation}
We can see that $(\Theta_t,\mathcal Z_t^{\omega})$ is the Gauss-Markov type of the pair
$\left(s^t,Z_{s^t}(\omega)\right)$. %Note that $\mathcal Z_{t,i}^{\omega}$ has two sources of randomness: the first originating from the randomly chosen sequence $s^t$, and the second from the random realization of the Gaussian vector $Z_{s^t}= (Z_{s^t,1},..., Z_{s^t,t})\in \mathbb R^t$.
We see that, for any given $s^t$, for each $i$ such that $ K_i(s^t)>0$, $\mathcal Z_{t,i}^{\omega} (s^t)$ is a Gaussian random variable with mean $0$ and variance equal to $ K_i(s^t)/t^2= \overline{\Theta}_{t,i}(s^t)/t$. On the other hand, if for some $i$, $ K_i(s^t)=0$, then $\mathcal Z_{t,i}^{\omega} (s^t)$ is deterministic and thus has zero variance ( also equal to $K_i(s^t)/t^2= \overline{\Theta}_{t,i}(s^t)/t$). Thus, for each given $s^t$, we can write
\begin{equation}\label{def-mathcal-Z-t-distribution}
\mathcal Z_{t,i}^{\omega} (s^t) \sim \mathcal N\left(0, \frac{\overline{\Theta}_{t,i}(s^t)}{t}\right),\;\;\mathrm{\;for\;}i=1,...,N.
\end{equation}

{\color{black} For each outcome $\omega\in \Omega$, let $Q_t^\omega: \mathcal B{\left(\mathbb R^{N^2+N}\right)} \mapsto [0,1]$ denote the empirical measure of the realization of $(\Theta_t, \mathcal Z_t^\omega)$}:
\begin{equation}
\label{def-Q-t}
Q_t^{\omega} (B):= \frac{\sum_{ s^t\in \mathcal S^t} 1_{\left\{ {\color{black}(\Theta_t(s^t), \mathcal Z_t^\omega(s^t))} \in B\right\}}} {C_t},
\end{equation}
for arbitrary $B\in \mathcal B{\left(\mathbb R^{N^2+N}\right)}$, where, we recall, $C_t= |\mathcal S^t|$.
{\color{black} Or, in words, if $s^t$ is chosen uniformly at random from $\mathcal S^t$, then $Q^\omega_t (B)$ is the probability that $\left(\Theta_t(s^t), \mathcal Z_t^\omega(s^t)\right)\in B$. }(It is easy to verify that $Q_t^{\omega}$ is a probability measure.) The next result asserts that $Q_t^\omega$ satisfies the LDP with probability one (with respect to the probability law $\mathbb P$ that generates the random families $\left\{Z_{s^t}:s^t\in \mathcal S^t\right\}$) and computes the corresponding rate function. The proof of Theorem~\ref{theorem-Q-t-satisfies-LDP} is given in Section~\ref{sec-LDP-for-Gauss-Markov-proof}.
\begin{theorem}
\label{theorem-Q-t-satisfies-LDP}
For every measurable set $G$, the sequence of measures $Q_t^{\omega}$, $t=1,2,...,$ with probability one satisfies both the LDP upper bound~\eqref{def-LDP-upper} and the LDP lower bound~\eqref{def-LDP-lower}, with the same rate function $I: \mathbb R^{N^2+N}\mapsto \mathbb R$, equal for all sets $G$. The rate function $I$ is given by
\begin{equation}
I(\theta,\xi)=\left\{
\begin{array}{ll}
\log \rho_0 - H(\theta)+ J_{\theta}(\xi), & \mathrm{if}\; \theta \in \Delta\;\mathrm{and}\;H(\theta) \geq J_{\theta}(\xi)\\
+\infty,&\mathrm{otherwise}
\end{array}
\right.,
\end{equation}
where, for any $\theta \in \mathbb {R}^{N\times N}$, function $J_{\theta}: \mathbb  R^N \mapsto \mathbb R$ is defined as $J_{\theta}(\xi):= \sum_{i: \overline \theta_i>0} \frac{1}{\overline \theta_i} \frac{\xi_i^2}{2}$, for any $\xi \in \mathbb R^N$ such that $\xi_i=0$ if $\overline{\theta}_i=0$, and $J_{\theta}(\xi)= +\infty$ otherwise, where $\overline \theta_i$ is the $i$th component of $\overline \theta$.
\end{theorem}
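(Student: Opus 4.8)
We sketch the strategy we would follow. The plan is to establish the LDP for $Q_t^{\omega}$ by the standard three-step scheme: prove local (small-box) upper and lower estimates at every point $(\theta,\xi)$, prove exponential tightness, and then invoke the characterization of the LDP by local estimates together with the upgrade from weak to full LDP under exponential tightness~\cite{DemboZeitouni93}; every stochastic statement will be shown to hold $\mathbb P$-almost surely, first for a countable dense collection of points $(\theta,\xi)$ and rational box widths $\rho$, and then simultaneously for all measurable $G$ by a density argument using the continuity of $I$ on its effective domain. The structural fact to exploit is that, for a box $B_{(\theta,\xi)}(\rho)$,
\begin{equation*}
Q_t^{\omega}\bigl(B_{(\theta,\xi)}(\rho)\bigr)=\frac{1}{C_t}\sum_{\theta^\prime\in\Delta_t:\ \|\theta^\prime-\theta\|\le\rho}\ \sum_{s^t\in\mathcal S^t_{\theta^\prime}} 1_{\{\mathcal Z_t^\omega(s^t)\in B_\xi(\rho)\}},
\end{equation*}
where the number of relevant $\theta^\prime$ is polynomial in $t$; $|\mathcal S^t_{\theta^\prime}|=C_{t,\theta^\prime}$ and $C_t$ are controlled, to exponential order $H(\theta^\prime)$ and $\log\rho_0$ respectively, by Lemma~\ref{lemma-C-t-theta-bounds}; and, decisively, for each fixed $\theta^\prime$ the indicators $1_{\{\mathcal Z_t^\omega(s^t)\in B_\xi(\rho)\}}$, $s^t\in\mathcal S^t_{\theta^\prime}$, are i.i.d.\ Bernoulli with one common success probability, since $\mathcal Z_t^\omega(s^t)\sim\prod_i\mathcal N(0,\overline{\theta^\prime}_i/t)$ depends on $s^t$ only through $\theta^\prime$ (see~\eqref{def-mathcal-Z-t-distribution}) and the $Z_{s^t}$ are independent across $s^t$.

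First I would evaluate the Gaussian small-box probability: by the elementary Laplace estimate for the scalar density of $\mathcal N(0,\overline{\theta^\prime}_i/t)$ (integrating over a subinterval for the lower bound, bounding the integrand by its maximum for the upper bound), for $\overline{\theta^\prime}_i>0$ one gets exponential rate $\inf_{|z-\xi_i|\le\rho/2}z^2/(2\overline{\theta^\prime}_i)$, while for $\overline{\theta^\prime}_i=0$ the probability is $1_{\{|\xi_i|\le\rho/2\}}$. Taking the product over $i$, summing the decomposition, and letting $t\to\infty$ and then $\rho\to0$ — using continuity of $H$ on $\Delta$ and the density of $\bigcup_t\Delta_t$ in $\Delta$ to secure, for $\theta\in\Delta$, a sequence $\theta^{(t)}\in\Delta_t$ with $\theta^{(t)}\to\theta$ and $\mathcal S^t_{\theta^{(t)}}\ne\emptyset$ — this gives $\frac{1}{t}\log\mathbb E[Q_t^{\omega}(B_{(\theta,\xi)}(\rho))]\to -(\log\rho_0-H(\theta)+J_\theta(\xi))$ when $\theta\in\Delta$, and $\to-\infty$ otherwise (as it must, since $\theta\notin\Delta$ makes the box miss $\Delta_t$, and $\overline\theta_i=0<\xi_i$ forces a super-exponentially small Gaussian factor). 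The local \emph{upper} bound then follows by the first-moment method: Markov's inequality gives $\mathbb P(Q_t^{\omega}(B)>e^{t\epsilon}\mathbb E[Q_t^{\omega}(B)])\le e^{-t\epsilon}$, summable, so Borel--Cantelli yields the local upper bound with rate $I$, simultaneously on a countable family (rational $\rho$, $\epsilon=1/m$, dense set of points). Exponential tightness is immediate from Gaussian tails and a union bound, $\mathbb P(\exists\, s^t\in\mathcal S^t:\ \|\mathcal Z_t^\omega(s^t)\|_\infty>M)\le N\,C_t\,e^{-tM^2/2}$, summable for $M$ large; combined with the local upper bounds and the finiteness of each $\Delta_t$, this gives the LDP upper bound~\eqref{def-LDP-upper} almost surely, for every measurable $G$.

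The local \emph{lower} bound is the crux, because we must control the \emph{random} count from below, not merely its mean. Fix $(\theta,\xi)$ with $I(\theta,\xi)<\infty$ (so $\theta\in\Delta$ and $H(\theta)\ge J_\theta(\xi)$) and a box $B_{(\theta,\xi)}(\rho)\subseteq G$, $G$ open. In the strict case $H(\theta)>J_\theta(\xi)$, choosing $\theta^{(t)}\in\Delta_t$ with $\theta^{(t)}\to\theta$, the count $\mathcal N_t:=\sum_{s^t\in\mathcal S^t_{\theta^{(t)}}}1_{\{\mathcal Z_t^\omega(s^t)\in B_\xi(\rho/2)\}}$ is a sum of $C_{t,\theta^{(t)}}$ i.i.d.\ Bernoullis whose mean grows exponentially with rate $H(\theta)-J_\theta(\xi)+O(\rho)>0$; the multiplicative Chernoff bound gives $\mathbb P(\mathcal N_t<\tfrac12\mathbb E\mathcal N_t)\le e^{-c\,\mathbb E\mathcal N_t}$, summable, so Borel--Cantelli gives $\mathcal N_t\ge\tfrac12\mathbb E\mathcal N_t$ eventually almost surely, hence $\liminf_t\frac{1}{t}\log Q_t^{\omega}(G)\ge -(\log\rho_0-H(\theta)+J_\theta(\xi))+O(\rho)$; let $\rho\to0$. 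The boundary case $H(\theta)=J_\theta(\xi)$, where $I(\theta,\xi)=\log\rho_0$ and the mean count is only subexponential, I would reduce to the strict case: the point $(\theta,(1-\delta)\xi)$ lies in $G$ for small $\delta>0$, satisfies $J_\theta((1-\delta)\xi)=(1-\delta)^2 J_\theta(\xi)<H(\theta)$ and $I(\theta,(1-\delta)\xi)\le\log\rho_0=I(\theta,\xi)$, so the strict-case bound at this point already forces $\liminf_t\frac{1}{t}\log Q_t^{\omega}(G)\ge -I(\theta,\xi)$. Intersecting the countably many almost-sure events and passing to infima over $(\theta,\xi)\in G^{\mathrm o}$ via the continuity of $I$ on its effective domain gives the LDP lower bound~\eqref{def-LDP-lower} almost surely for every $G$, completing the proof.

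The main obstacles I anticipate are: (a) the lower-bound concentration step — showing the random number of sequences landing in a small box does not fall far below its exponentially large mean — handled by the Chernoff bound plus Borel--Cantelli; (b) the boundary stratum $H(\theta)=J_\theta(\xi)$, where the mean count is subexponential and a direct argument fails, which we circumvent by the inward perturbation $\xi\mapsto(1-\delta)\xi$; and (c) the combinatorial bookkeeping needed to guarantee a nearby realizable type $\theta^{(t)}\in\Delta_t$ for all large $t$, which rests on the density of $\bigcup_t\Delta_t$ in $\Delta$.
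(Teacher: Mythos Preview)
Your proposal is correct and follows essentially the same route as the paper: decompose $Q_t^\omega$ by Markov type, exploit that for fixed $\theta'$ the indicators are i.i.d.\ Bernoulli with a common Gaussian success probability, use a first-moment (Markov) bound plus Borel--Cantelli for the local upper bound and exponential tightness to upgrade, and for the lower bound pick $\theta_t\in\Delta_t$ with $\theta_t\to\theta$, use concentration plus Borel--Cantelli, handling the boundary $H(\theta)=J_\theta(\xi)$ by perturbing $\xi$ inward to the strict regime. The only tactical differences are that the paper uses Chebyshev rather than the multiplicative Chernoff bound for the lower-bound concentration (both suffice since the mean count is exponentially large), proves exponential tightness by showing $I$ has compact support rather than via a direct Gaussian-tail union bound, and works per fixed set $G$ rather than through a countable density argument---none of which changes the substance.
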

Theorem~\ref{theorem-Q-t-satisfies-LDP} is the core result behind the main result of the paper, namely, Theorem~\ref{theorem-main} that we state next. We dedicate Section~\ref{sec-LDP-for-Gauss-Markov-proof} to proving Theorem~\ref{theorem-Q-t-satisfies-LDP}.

\section{Lower bound on $\zeta$}
\label{sec-Lower-bound}

We are now ready to state our main result on the lower bound on $\zeta$.
\begin{theorem}
\label{theorem-main}
There holds $\underline \zeta \leq \zeta$, where $\underline \zeta$ is the optimal value of the following
optimization problem
\begin{equation}
\begin{array}[+]{lc}
{\color{black}\underset{\theta}{\mathrm{minimize}}}  &  D(\theta|| P) +
\sum_{i: \overline \theta_i >0} \overline \theta_i \frac{(\beta_i- \frac{1}{\overline \theta_i}\xi_i)^2}{2}\\
\mathrm{subject\; to} & H(\theta)\geq J_{\theta} (\xi)\\
%& \overline \theta = \theta 1\\
& \theta \in \Delta\\
& \xi \in \mathbb R^N.
\end{array}.
\label{eq-main-LB-opt}
\end{equation}
\end{theorem}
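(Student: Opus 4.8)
The plan is to reduce the assertion to an almost-sure upper bound on the likelihood ratio and then feed Theorem~\ref{theorem-Q-t-satisfies-LDP} into a Varadhan-type estimate. By Lemma~\ref{lemma-asymptotic-KL-rate} and the generalized Stein's lemma recalled in Section~\ref{sec-Setup}, $\zeta=\lim_{t\to\infty}\bigl(-\tfrac1t\log L_t(\mathbfcal X^t)\bigr)$ holds $\mathbb P_0$-almost surely, so it suffices to show $\limsup_{t\to\infty}\tfrac1t\log L_t(\mathbfcal X^t)\le-\underline\zeta$, $\mathbb P_0$-a.s. First I would express $L_t$ through the realized Gauss--Markov type. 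Put $\hat\xi_i(s^t):=\tfrac1t\sum_{k:\,s_k=i}X_{i,k}$, so that $\bigl(\Theta_t(s^t),\hat\xi(s^t)\bigr)$ is the Gauss--Markov type of $\bigl(s^t,(X_{s_k,k})_k\bigr)$; writing $\theta=\Theta_t(s^t)$ and $\overline\theta$ for the vector of row sums of $\theta$, one has $\sum_k\beta_{s_k}X_{s_k,k}=t\sum_i\beta_i\hat\xi_i(s^t)$, $\sum_k\tfrac{\beta_{s_k}^2}{2}=t\sum_i\overline\theta_i\tfrac{\beta_i^2}{2}$, and $P(s^t)=\tfrac{\pi_{s_1}}{P_{s_t,s_1}}e^{-t(H(\theta)+D(\theta\|P))}$ (using the circular convention in $K_t$ and the identity $\sum_{ij}\theta_{ij}\log P_{ij}=-H(\theta)-D(\theta\|P)$). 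Plugging these into~\eqref{eq-LLR} gives
\[
L_t(\mathbfcal X^t)=\sum_{s^t\in\mathcal S^t}\frac{\pi_{s_1}}{P_{s_t,s_1}}\,e^{\,tG(\Theta_t(s^t),\hat\xi(s^t))},\qquad
G(\theta,\xi):=\sum_i\beta_i\xi_i-\sum_i\overline\theta_i\frac{\beta_i^2}{2}-H(\theta)-D(\theta\|P).
\]
The prefactor lies in a fixed interval $[c_1,c_2]\subset(0,\infty)$ and $C_t\le\rho_0^{t}e^{t\epsilon}$ eventually (Lemma~\ref{lemma-C-t-theta-bounds}), hence $L_t(\mathbfcal X^t)\le c_2\,e^{t\epsilon}\rho_0^{t}\int e^{\,tG(\theta,\xi)}\,d\widehat Q_t(\theta,\xi)$, where $\widehat Q_t$ is the empirical measure of $\bigl(\Theta_t(s^t),\hat\xi(s^t)\bigr)$ over uniform $s^t\in\mathcal S^t$ --- the same object as $Q_t^\omega$ in~\eqref{def-Q-t}, except that $\hat\xi(s^t)$ is formed from the shared measurements $\{X_{i,k}\}$ rather than from the sequence-indexed auxiliary Gaussians $\{Z_{s^t,k}\}$.

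Next I would let the LDP do the work. By Theorem~\ref{theorem-Q-t-satisfies-LDP}, $Q_t^\omega$ satisfies the LDP with rate $I$; since here only the \emph{upper}-bound half of the LDP is used, and that half follows from first-moment/exponential-tightness estimates for the empirical measure that depend only on the per-sequence marginal $\hat\xi(s^t)\sim\mathcal N(0,\tfrac1t\,\mathrm{diag}(\overline\theta))$ --- which is the same in the shared-measurement model as in the fresh-Gaussian model of Theorem~\ref{theorem-Q-t-satisfies-LDP} --- the same upper bound holds for $\widehat Q_t$, $\mathbb P_0$-a.s. The upper half of Varadhan's lemma then gives $\limsup_{t\to\infty}\tfrac1t\log\int e^{\,tG(\theta,\xi)}\,d\widehat Q_t(\theta,\xi)\le\sup_{(\theta,\xi)}\bigl[G(\theta,\xi)-I(\theta,\xi)\bigr]$, its moment hypothesis being met because a first-moment count ($\le C_{t,\theta}\,\mathbb P_0(\mathcal N(0,\tfrac1t\,\mathrm{diag}(\overline\theta))\notin\{\xi:J_\theta(\xi)\le H(\theta)+\epsilon\})$, which is exponentially small) shows, via Borel--Cantelli, that $\widehat Q_t$ is a.s.\ eventually supported inside the compact set $\{\theta\in\Delta,\ J_\theta(\xi)\le H(\theta)+\epsilon\}$, on which $G$ is bounded above.

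It then remains to evaluate $\sup_{(\theta,\xi)}[G-I]$. For $\theta\in\Delta$ with $H(\theta)\ge J_\theta(\xi)$ one has $I(\theta,\xi)=\log\rho_0-H(\theta)+J_\theta(\xi)$, and by completing the square $\sum_i\beta_i\xi_i-J_\theta(\xi)-\sum_i\overline\theta_i\tfrac{\beta_i^2}{2}=-\sum_{i:\overline\theta_i>0}\overline\theta_i\tfrac{(\beta_i-\frac{1}{\overline\theta_i}\xi_i)^2}{2}$; therefore
\[
G(\theta,\xi)-I(\theta,\xi)=-\log\rho_0-\left[D(\theta\|P)+\sum_{i:\overline\theta_i>0}\overline\theta_i\frac{(\beta_i-\frac{1}{\overline\theta_i}\xi_i)^2}{2}\right],
\]
while $G-I=-\infty$ elsewhere. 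Hence $\sup_{(\theta,\xi)}[G-I]=-\log\rho_0-\underline\zeta$, with $\underline\zeta$ exactly the optimal value of~\eqref{eq-main-LB-opt}, and combining these estimates yields $\limsup_{t\to\infty}\tfrac1t\log L_t(\mathbfcal X^t)\le\epsilon+\log\rho_0-\log\rho_0-\underline\zeta=-\underline\zeta+\epsilon$ a.s.; letting $\epsilon\downarrow0$ proves $\underline\zeta\le\zeta$. (The $\epsilon$-relaxation of the constraint $H(\theta)\ge J_\theta(\xi)$, the density of $\bigcup_t\Delta_t$ in $\Delta$, and the circular convention with its boundary factor $\pi_{s_1}/P_{s_t,s_1}$ only need routine care, via compactness of $\Delta$ and continuity of $H$, $D(\cdot\|P)$, $J_\theta$; as a sanity check, $\xi=0$, $\theta_{ij}=\pi_iP_{ij}$ is feasible in~\eqref{eq-main-LB-opt} and gives $\underline\zeta\le\sum_i\pi_i\tfrac{\beta_i^2}{2}=\overline\zeta$, consistent with Proposition~\ref{prop-UB}.)

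\textbf{Main obstacle.} The hard part is that the bound must be \emph{quenched}: since $\mathbb E_0[L_t]=1$, any argument through the mean of $L_t$ (Markov, second moment, Jensen) yields only $\zeta\ge0$, because the exponential decay of $L_t$ is typical but invisible in expectation. Capturing it requires the concentration of the empirical Gauss--Markov-type measure at the sharp rate $I$, that is, Theorem~\ref{theorem-Q-t-satisfies-LDP} itself, whose proof (Section~\ref{sec-LDP-for-Gauss-Markov-proof}) --- especially its lower-bound half, where the covariances among the family $\{\hat\xi(s^t)\}_{s^t\in\mathcal S^t}$ have to be controlled --- is where the real effort lies; granting that theorem, everything above is bookkeeping, and for the present inequality one could even bypass it by redoing its short upper-bound half through direct first-moment estimates.
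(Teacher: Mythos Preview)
Your argument is correct and takes a genuinely different route from the paper. The paper works at the level of expectations: it first applies Slepian's comparison lemma (Lemma~\ref{lemma-main-proof-step-1}) to replace the correlated family $\overline{\mathcal X}(s^t)$ by the independent family $\overline{\mathcal Z}(s^t)$, obtaining $\mathbb E_0[\phi_t(\overline{\mathcal X})]\ge\mathbb E[\phi_t(\overline{\mathcal Z})]$ for every $t$; it then proves the \emph{exact} almost-sure limit $\phi_t(\overline{\mathcal Z})\to\underline\zeta$ via the full LDP (both halves) and Varadhan's lemma, and passes to expectations using uniform integrability (Lemma~\ref{lemma-main-proof-step-2}). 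Your approach stays with the original, correlated measurements and with the almost-sure characterization $\zeta=-\lim_t\tfrac1t\log L_t$ throughout: you only need $\limsup_t\tfrac1t\log L_t\le-\underline\zeta$ $\mathbb P_0$-a.s., which requires only the Varadhan \emph{upper} bound and hence only the LDP \emph{upper} bound; you then observe that the proof of the LDP upper bound in Section~\ref{subsec-LDP-UB} uses nothing but Markov's inequality on the per-sequence marginals of $\mathcal Z_t^\omega(s^t)$, and those marginals are identical for $\hat\xi(s^t)$ under $\mathbb P_0$, so the whole argument transfers verbatim to $\widehat Q_t$.

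What your approach buys is economy: no Slepian decoupling, no uniform integrability, no expectation limits, and the observation that the LDP lower bound in Section~\ref{subsec-LDP-LB} is actually not needed for Theorem~\ref{theorem-main}. What the paper's approach buys is the additional statement that the decoupled surrogate has error exponent \emph{equal} to $\underline\zeta$ (the full limit in Lemma~\ref{lemma-main-proof-step-2}, part~\ref{part-almost-sure}); this identifies $\underline\zeta$ intrinsically and justifies the LDP lower bound as more than scaffolding. Two small points you should make explicit when writing this up: (i) the LDP upper bound has to hold $\mathbb P_0$-a.s.\ simultaneously for all closed sets used in the Varadhan covering, so work with a countable base of rational boxes before invoking Borel--Cantelli; (ii) your tail/compact-support step can be phrased exactly as the paper's Case~2 argument, since the union bound $\sum_{\theta\in\Delta_t}C_{t,\theta}\,q_{t,\theta}(D)$ again depends only on marginals.
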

Before proving Theorem~\ref{theorem-main} in Subsection~\ref{subsec-proof-main}, we first give some interpretations of this result and provide an application example.

\subsection{Interpretations and an application example}
\label{subsec-Discussion}
\mypar{Interpretation through Gauss-Markov type} If we inspect the optimization problem~\eqref{eq-main-LB-opt} through the lenses of the Gauss-Markov type~\eqref{def-Gauss-Markov-type}, a very intuitive interpretation emerges. First, the last three constraints ensure that any candidate $(\theta,\xi)\in \mathbb R^{N\times N}\times \mathbb R^N$ is a valid Gauss-Markov type. The constraint $H(\theta)\geq J_{\theta} (\xi)$ then filters out all pairs $(\theta,\xi)$ corresponding to sequences that are, in the long run, infeasible under $H_0$; see also Case~2 in the proof of Theorem~\ref{theorem-Q-t-satisfies-LDP}, large deviations upper bound, in Subsection~\ref{subsec-LDP-UB}. The latter condition is very intuitive (as, in the long run) under the state of nature $H=H_1$ wrong decisions $H=H_0$  can only be made on the set of $H_0$ feasible types $(\theta,\xi)$. Finally, considering the objective function~\eqref{eq-main-LB-opt}, we see that it consists of two terms. The first term has the objective of choosing the sequence $s^t$ whose Markov type $\theta$ is as close as possible to the transition matrix of the random walk -- i.e., the ``true'' transition matrix. The second term has the objective of choosing the Gaussian signal (i.e., the random walk signal) $Z_{s^t}$ whose per node sample means $\xi_i/\overline \theta_i$ are as close as possible to the expected sample means $\beta_i$, $i=1,...,N$. Hence, the objective of~\eqref{eq-main-LB-opt} aims at finding the Gauss-Markov type whose probability is highest under the state of nature $H_1$. In summary, among all types that are asymptotically feasible under $H=H_0$, optimization problem~\eqref{eq-main-LB-opt} finds the one that has the highest probability (i.e., the slowest probability decay)
under $H=H_1$.

\mypar{Condition for detectability of the random walk} From~\eqref{eq-main-LB-opt}, it is easy to see that, if
\begin{equation}
\label{eq-detectability-condition}
H(P)\geq \sum_{i=1}^N \pi_i \frac{\beta_i^2}{2},
\end{equation}
then $(\theta^\star, \xi^\star)$ is an optimizer, where $\theta^\star_{ij}=\pi_i P_{ij}$, for $i,j=1,...,N$, and $\xi_i^\star = \pi_i \beta_i$, for $i=1,...,N$. The resulting optimal value of~\eqref{eq-main-LB-opt} equals zero. Hence, the lower bound $\underline \zeta$ on the error exponent equals zero, indicating that the random walk is not detectable.

\mypar{Special case $\beta_i\equiv\beta$} We next consider the special case when all sensors have the same SNR, i.e., when $\beta_i\equiv \beta$, for some $\beta \in \mathbb R$. In this case, it can be shown that problem~\eqref{eq-main-LB-opt} reduces to:
\begin{equation}
\begin{array}[+]{lc}
{\color{black}\underset{\theta,\xi}{\mathrm{minimize}}} &  D(\theta\left\|P\right.) + \frac{\left(\beta-\xi\right)^2}{2}\\
\mathrm{subject\; to} & H(\theta) \geq \frac{\xi^2}{2}\\
& \theta \in \Delta\\
& \xi \in \mathbb R.
\end{array}.
\label{eq-opt-equal-SNR}
\end{equation}
In particular, from~\eqref{eq-opt-equal-SNR}, we can easily recover the condition from~\cite{Agaskar15} for the random walk to be detectable: if the entropy of the random walk $H(P)$ is greater than the SNR,
\begin{equation}
\label{eq-detectability}
H(P)\geq \frac{\beta^2}{2},
\end{equation}
then the infimum of~\eqref{eq-opt-equal-SNR} is zero. Again, since the error exponent lower bound $\underline \zeta$ is then zero, this indicates that under condition~\eqref{eq-detectability} the random walk is not detectable. One can in fact show that optimization problem~\eqref{eq-opt-equal-SNR} yields the same value as the error exponent lower bound from~\cite{Agaskar15} (see eq.~(29) in~\cite{Agaskar15}). We omit the proof here, but we remark that the equivalence of the two optimization problems can be shown by using the method of limiting functions from~\cite{Vasek80} together with a single-letter parametrization in~\eqref{eq-solution-form-theta} of the set of candidate solutions $\theta$ (see also the proof of Lemma~\ref{lemma-H-and-R-properties} in Appendix~\ref{app-D} for how this can be achieved).

\mypar{Application example: Frequency hopping random access} We now explain how one can apply our methodology to develop a novel, frequency hopping based random access scheme. The motivating practical setting is NarrowBand IoT communications~\cite{5GMagazine16}. In this emerging standard, the aim is to design communication protocols for future IoT applications, where a large number of devices (e.g., smart electricity, gas or water meters etc.) transmit their data to a neighboring cellular base station. The intrinsic features in such communications are extremely simple communication protocols and extremely low SNR values (many such devices are deployed in basements). Due to low SNR values in this kind of systems the standard, repetition based mechanisms for random access may exhibit very low user detection rate. The situation is further exacerbated by the fact that fading can vary significantly over time over the frequency band used. Therefore, it is not possible to determine in advance what is the best frequency to send the access requesting signal. To overcome these issues, we propose a random access scheme based on Markov chain frequency hopping. We explain the setup formally. Let $P\in \mathbb R^{N\times N}$ be a given Markov chain transition matrix, and let $\mathcal F=\left\{f_1,\ldots,f_N\right\}$ denote the set of frequencies from the allocated frequency band.  Let also $\mathcal G$ denote the graph on $\mathcal F$ defined by the sparsity pattern of $P$. Then, to ask for medium access, a user transmits a sequence of signals, each at a different frequency, where the statistics of the transitions from one frequency to another are defined by the matrix $P$. The receiver then performs the optimal likelihood ratio test to detect the presence of the medium access signal. We remark that, by performing medium access based on the described Markov chain hopping, some detection power is indeed lost (in comparison with the repetition based scheme), but, on the other hand, the scheme is able to successfully combat (unknown) frequency selective fading, and this is furthermore achieved without the need for signal synchronization between sender and receiver.

\subsection{Proof of Theorem~\ref{theorem-main}}
\label{subsec-proof-main}
For notational convenience, for each $s^t\in \mathcal S^t$ denote
$\beta(s^t) = {\sqrt{\beta_{s_1}^2+\beta_{s_2}^2+...+\beta_{s_t}^2}}$ and
\begin{equation}
\label{eq-def-cal-X-beta}
\overline {\mathcal X}(s^t) = \sum_{k=1}^t \beta_{s_k} X_{s_k,k}.
\end{equation}
For each fixed $t$ define also function $\phi_t: \mathbb{R}^{C_t} \mapsto \mathbb{R}$,
\begin{equation}\label{def-phi-t}
\phi_t(x):= -\frac{1}{t} \log \left( \sum_{s^t\in \mathcal S^t} P(s^t) e^{-\frac{\beta(s^t)^2}{2} +  x_{s^t}} \right),
\end{equation}
where $x_{s^t}$ is an element of a vector $x=\left\{x_{s^t}: s^t\in \mathcal S^t\right\}\in \mathbb {R}^{C_t}$ whose index is $s^t$. Note (see eq.~\eqref{eq-LLR}) that
$\zeta$ can be expressed as the limit of expectations of functions $\phi_t$ evaluated at
$x=\left\{\overline {\mathcal X}(s^t): s^t \in \mathcal S^t\right\}$, $t=1,2,...$, i.e.,
\begin{equation}
\label{eq-zeta-compact}
\zeta = \lim_{t\rightarrow +\infty}\mathbb E_0 \left[  \phi_t
\left(\overline {\mathcal X}(s^t)\right) \right].
\end{equation}

Similarly as in~\cite{Agaskar15}, we approximate the sum in~\eqref{eq-zeta-compact} by dropping the correlations between terms  $\overline {\mathcal X}(s^t)$ that correspond to different sequences $s^t$. Specifically, for each sequence $s^t$, we replace the Gaussian vector $(X_{s_1,1},...,X_{s_t,t})$ by the Gaussian vector of the same dimension $t$, $Z_{s^t}=(Z_{s^t,1},...,Z_{s^t,t})$, where each component $Z_{s^t,k}$, $k=1,..,t$, is standard Gaussian, and where different components are mutually independent. For each $t$ we assume that vectors $Z_{s^t}$ corresponding to different sequences $s^t$ are independent, and we also assume that any two vectors $Z_{s^t}\in \left\{Z_{s^t}: s^t\in \mathcal S^t\right\}$ and $Z_{s^{t^\prime}}\in \left\{Z_{s^{t^\prime}}: s^{t^\prime}\in \mathcal S^{t^\prime}\right\}$, where $t\neq t^\prime$, are also independent. Thus, for every different $t$, we have $|\mathcal S^t|$  independent $t$-dimensional standard Gaussian vectors, each corresponding to a fixed sequence $s^t$. %We denote by $(\Omega, \mathcal F, \mathbb P)$ the probability space that generates
%$\left\{Z_{s^t}: s^t\in \mathcal S^t\right\}$, $t=1,2,...$, by $\omega$ we denote an element of $\Omega$, and by $\mathbb E$ the corresponding expectation operator.

Let $\overline {\mathcal Z}(s^t)$ denote the $Z$-variables counterpart of $\overline {\mathcal X}(s^t)$:
\begin{equation}
\label{eq-def-cal-Z-beta}
\overline {\mathcal Z}(s^t) = \sum_{k=1}^t \beta_{s_k} Z_{s^t,k}.
\end{equation}
Then the following result holds. The proof of Lemma~\ref{lemma-main-proof-step-1} is given in Appendix~\ref{app-B}.
\begin{lemma}\label{lemma-main-proof-step-1}
For every $t\geq 1$, there holds
\begin{equation}\label{eq-phi-t-X-and-Z}
\mathbb E_0\left[ \phi_t \left( \overline {\mathcal X} \right)\right] \geq
\mathbb E\left[ \phi_t \left( \overline {\mathcal Z} \right)\right].
\end{equation}
\end{lemma}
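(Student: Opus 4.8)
The plan is to recognize \eqref{eq-phi-t-X-and-Z} as an instance of a Gaussian comparison (Slepian--Kahane type) inequality. The function $\phi_t$ from \eqref{def-phi-t} turns out to be \emph{supermodular} -- all of its mixed second partial derivatives are nonnegative -- while, for each fixed $t$, the random vector $\overline{\mathcal X}=(\overline{\mathcal X}(s^t))_{s^t\in\mathcal S^t}$ is, under $\mathbb P_0$, a centered Gaussian vector having the same coordinate variances as $\overline{\mathcal Z}=(\overline{\mathcal Z}(s^t))_{s^t\in\mathcal S^t}$ but with \emph{nonnegative} pairwise covariances, whereas the coordinates of $\overline{\mathcal Z}$ are independent. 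Since replacing $\overline{\mathcal X}$ by $\overline{\mathcal Z}$ only lowers the off-diagonal covariances while leaving the diagonal unchanged, and since the expectation of a supermodular function is monotone in the correlations, the inequality $\mathbb E_0[\phi_t(\overline{\mathcal X})]\ge\mathbb E[\phi_t(\overline{\mathcal Z})]$ should follow.

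First I would record two elementary facts. (i) Writing $a_{s^t}:=P(s^t)e^{-\beta(s^t)^2/2}>0$ and, for $x\in\mathbb R^{C_t}$, $p_{s^t}(x):=a_{s^t}e^{x_{s^t}}\big/\sum_{r^t\in\mathcal S^t}a_{r^t}e^{x_{r^t}}$, so that $p(x)$ is a probability vector, a direct differentiation gives $\nabla\phi_t(x)=-\tfrac1t\,p(x)$ and, for $s^t\ne r^t$, $\partial^2\phi_t(x)\big/\partial x_{s^t}\partial x_{r^t}=\tfrac1t\,p_{s^t}(x)p_{r^t}(x)\ge0$; in particular $\phi_t$ is $C^\infty$, globally Lipschitz with constant $1/t$, and has a uniformly bounded Hessian. (ii) Since $\{X_{i,k}\}$ are i.i.d.\ $\mathcal N(0,1)$ under $\mathbb P_0$, the vector $\overline{\mathcal X}$ is centered Gaussian with $\mathrm{Cov}_0(\overline{\mathcal X}(s^t),\overline{\mathcal X}(r^t))=\sum_{k:\,s_k=r_k}\beta_{s_k}^2$, which equals $\beta(s^t)^2$ when $r^t=s^t$ and is nonnegative otherwise; and $\overline{\mathcal Z}$ is centered Gaussian with $\mathrm{Cov}(\overline{\mathcal Z}(s^t),\overline{\mathcal Z}(r^t))=\beta(s^t)^2\mathbf 1\{r^t=s^t\}$ -- the same diagonal, zero off-diagonal.

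Next I would realize $\overline{\mathcal X}$ and $\overline{\mathcal Z}$ as independent on a common probability space and interpolate: set $G^{(\lambda)}:=\sqrt{\lambda}\,\overline{\mathcal X}+\sqrt{1-\lambda}\,\overline{\mathcal Z}$ for $\lambda\in[0,1]$, so that $\psi(\lambda):=\mathbb E[\phi_t(G^{(\lambda)})]$ satisfies $\psi(0)=\mathbb E[\phi_t(\overline{\mathcal Z})]$ and $\psi(1)=\mathbb E_0[\phi_t(\overline{\mathcal X})]$. Differentiating under the expectation and applying Gaussian integration by parts (Stein's identity / Price's theorem) gives
\[
\psi'(\lambda)=\frac12\sum_{s^t,r^t\in\mathcal S^t}\Bigl(\mathrm{Cov}_0\bigl(\overline{\mathcal X}(s^t),\overline{\mathcal X}(r^t)\bigr)-\mathrm{Cov}\bigl(\overline{\mathcal Z}(s^t),\overline{\mathcal Z}(r^t)\bigr)\Bigr)\,\mathbb E\!\left[\frac{\partial^2\phi_t}{\partial x_{s^t}\,\partial x_{r^t}}\bigl(G^{(\lambda)}\bigr)\right].
\]
By fact (ii) all diagonal ($r^t=s^t$) terms vanish and the remaining coefficients are nonnegative, and by fact (i) the corresponding second partials are nonnegative; hence $\psi'(\lambda)\ge0$ for all $\lambda\in[0,1]$, so $\psi(1)\ge\psi(0)$, which is precisely \eqref{eq-phi-t-X-and-Z}.

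I expect the only real obstacle to be analytic rather than combinatorial: justifying the differentiation of $\psi$ under the expectation and the Gaussian integration-by-parts identity. This is handled by the uniform bounds from fact (i) -- $\|\nabla\phi_t\|_1\equiv\tfrac1t$ and a bounded $\|\nabla^2\phi_t\|$ -- which make the relevant Gaussian integrals absolutely convergent and legitimize the interchange of derivative and expectation. Alternatively, one can bypass reproving the comparison step and invoke Kahane's inequality directly, once facts (i)--(ii) have been established.
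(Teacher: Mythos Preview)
Your proposal is correct and follows essentially the same route as the paper: both recognize \eqref{eq-phi-t-X-and-Z} as a Slepian--Kahane Gaussian comparison, verify that $\phi_t$ has nonnegative mixed second partials, and check that $\overline{\mathcal X}$ and $\overline{\mathcal Z}$ share the same coordinate variances while the off-diagonal covariances of $\overline{\mathcal X}$ dominate those of $\overline{\mathcal Z}$. The only difference is cosmetic---the paper invokes Slepian's lemma as a black box (citing~\cite{ZeitouniNotes16}) and checks its hypotheses, whereas you inline the standard interpolation proof of that lemma; your added observations on the Lipschitz and Hessian bounds of $\phi_t$ cleanly justify the differentiation under the integral, which the paper's cited version handles via the growth condition~\eqref{slepian-first-condition}.
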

From~\eqref{eq-phi-t-X-and-Z} we immediately obtain
\begin{equation}\label{eq-step-1-implies}
\zeta \geq \limsup_{t\rightarrow +\infty} \mathbb E \left[ \phi_t \left( \overline {\mathcal Z}\right)\right].
\end{equation}

The next result implies that the upper limit in the preceding relation is in fact a limit, and moreover, it asserts that this limit equals $\underline {\zeta}$. The proof of Lemma~\ref{lemma-main-proof-step-2} is given in Appendix~\ref{app-B}.
\begin{lemma}\label{lemma-main-proof-step-2}
 There holds:
 \begin{enumerate}
   \item \label{part-UI} The sequence of random variables $\phi_t\left( \overline {\mathcal Z}\right)$ is uniformly integrable.
   \item \label{part-almost-sure} With probability one,
   \begin{equation}\label{eq-almost-sure-limit}
    \lim_{t\rightarrow +\infty} \phi_t\left( \overline {\mathcal Z}\right) = \underline{\zeta}.
   \end{equation}
 \end{enumerate}
\end{lemma}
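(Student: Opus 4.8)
The plan is to derive both claims from the LDP for the Gauss--Markov type empirical measures $Q_t^\omega$ established in Theorem~\ref{theorem-Q-t-satisfies-LDP}, together with Lemma~\ref{lemma-C-t-theta-bounds} and the convexity/concavity facts in Lemma~\ref{lemma-H-functions-convexity}. First I would rewrite $\phi_t(\overline{\mathcal Z})$ in terms of the empirical measure $Q_t^\omega$. Using $P(s^t)=\frac{\pi_{s_1}}{P_{s_t,s_1}}e^{\sum_{i,j}K_{ij}(s^t)\log P_{ij}}$ (the expression recalled in the footnote on the transition counts matrix) and the fact that $\beta(s^t)^2 = t\sum_i \overline\Theta_{t,i}(s^t)\beta_i^2$, while $\overline{\mathcal Z}(s^t)=\sum_i \beta_i \big(t\,\mathcal Z_{t,i}^\omega(s^t)\big)$, the exponent $-\tfrac{\beta(s^t)^2}{2}+\overline{\mathcal Z}(s^t)$ becomes (up to the $O(1)$ boundary factor $\pi_{s_1}/P_{s_t,s_1}$, which contributes nothing to the exponential rate) a function only of the Gauss--Markov type $(\Theta_t(s^t),\mathcal Z_t^\omega(s^t))$. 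Grouping the sum over $s^t$ by type, and replacing the count $C_{t,\theta}$ by $e^{tH(\theta)}$ via Lemma~\ref{lemma-C-t-theta-bounds}, one gets
\begin{equation*}
\phi_t(\overline{\mathcal Z}) = -\frac{1}{t}\log\!\left(C_t \int e^{t\,\Psi(\theta,\xi) + o(t)}\, dQ_t^\omega(\theta,\xi)\right),
\end{equation*}
where $\Psi(\theta,\xi) := \sum_{i,j}\theta_{ij}\log P_{ij} - \tfrac12\sum_i\overline\theta_i\beta_i^2 + \sum_i \beta_i\xi_i$ is bounded and continuous on the compact set $\Delta\times(\text{a suitable }\xi\text{-range})$ but $\to-\infty$ as $\theta_{ij}\to 0$ on an edge; this last point needs care and is handled the same way Markov-type sums are usually handled (separating the contribution of types with some coordinate below a threshold $\delta$).

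Next I would apply Varadhan's integral lemma (the Laplace principle) to the sequence $Q_t^\omega$, which by Theorem~\ref{theorem-Q-t-satisfies-LDP} satisfies the LDP with rate function $I$ with probability one. Varadhan's lemma gives, almost surely,
\begin{equation*}
\lim_{t\to\infty}\frac{1}{t}\log\int e^{t\Psi}\,dQ_t^\omega = \sup_{(\theta,\xi)}\big(\Psi(\theta,\xi) - I(\theta,\xi)\big),
\end{equation*}
and combining this with $\tfrac1t\log C_t \to \log\rho_0$ (Lemma~\ref{lemma-C-t-theta-bounds}, part~\ref{part-C-t}) yields
\begin{equation*}
\lim_{t\to\infty}\phi_t(\overline{\mathcal Z}) = -\log\rho_0 - \sup_{(\theta,\xi)}\big(\Psi(\theta,\xi)-I(\theta,\xi)\big) = \inf_{(\theta,\xi)}\big(I(\theta,\xi)-\Psi(\theta,\xi)-\log\rho_0\big),
\end{equation*}
almost surely. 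Substituting $I(\theta,\xi)=\log\rho_0 - H(\theta)+J_\theta(\xi)$ on its domain $\{\theta\in\Delta,\ H(\theta)\ge J_\theta(\xi)\}$, the $\log\rho_0$ terms cancel and the objective becomes $-H(\theta)+J_\theta(\xi) - \sum_{i,j}\theta_{ij}\log P_{ij} + \tfrac12\sum_i\overline\theta_i\beta_i^2 - \sum_i\beta_i\xi_i$. A direct algebraic regrouping shows $-H(\theta)-\sum_{i,j}\theta_{ij}\log P_{ij} = D(\theta\|P)$, and $J_\theta(\xi)+\tfrac12\sum_i\overline\theta_i\beta_i^2-\sum_i\beta_i\xi_i = \sum_{i:\overline\theta_i>0}\overline\theta_i\frac{(\beta_i-\xi_i/\overline\theta_i)^2}{2}$ (completing the square), so the infimum is exactly the optimization problem~\eqref{eq-main-LB-opt} defining $\underline\zeta$. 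This proves part~\ref{part-almost-sure}. To check that Varadhan's lemma applies, I need the tail/moment condition $\limsup_t \tfrac1t\log\int e^{\gamma t\Psi}\,dQ_t^\omega <\infty$ for some $\gamma>1$; since $\Psi$ is bounded above on the relevant region (the $\xi$ variance scales as $\overline\theta_i/t$, so large $\xi$ values are super-exponentially rare under $Q_t^\omega$), this holds, but making it precise — particularly controlling the $\theta_{ij}\to0$ boundary behavior of $\Psi$ jointly with the $\xi$-tails — is the main technical obstacle and will require the same $\delta$-truncation argument used in the proof of the LDP upper bound.

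For part~\ref{part-UI}, uniform integrability, the plan is to produce a fixed integrable dominating-type bound on $|\phi_t(\overline{\mathcal Z})|$. An upper bound on $\phi_t(\overline{\mathcal Z})$ independent of $\omega$ follows by keeping only the single typical-path term (or any fixed term) in the sum defining $\phi_t$, giving $\phi_t(\overline{\mathcal Z}) \le -\tfrac1t\log P(s^t) + \tfrac{\beta(s^t)^2}{2t} - \tfrac1t\overline{\mathcal Z}(s^t)$ for a convenient choice of $s^t$, and the right side has bounded expectation and light (Gaussian) tails uniformly in $t$. A lower bound follows by bounding the sum above by $C_t \cdot P_{\max} \cdot e^{\max_{s^t}(-\beta(s^t)^2/2 + \overline{\mathcal Z}(s^t))}$ and using that $\max_{s^t \in \mathcal S^t}\overline{\mathcal Z}(s^t)$ is a maximum of $C_t \le \rho_0^t e^{t\epsilon}$ Gaussians each with variance $\le t\beta_{\max}^2$, so $\tfrac1t\max_{s^t}\overline{\mathcal Z}(s^t)$ is bounded in $L^p$ uniformly in $t$ by a standard Gaussian-maximum estimate. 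Together these give $\sup_t \mathbb E[|\phi_t(\overline{\mathcal Z})|^{1+\epsilon}]<\infty$, which implies uniform integrability. Finally, parts~\ref{part-UI} and~\ref{part-almost-sure} combined give $\lim_t \mathbb E[\phi_t(\overline{\mathcal Z})] = \underline\zeta$, which with~\eqref{eq-step-1-implies} completes the proof of Theorem~\ref{theorem-main}.
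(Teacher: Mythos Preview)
Your approach is essentially the paper's: rewrite $\phi_t(\overline{\mathcal Z})=-\tfrac1t\log C_t-\tfrac1t\log\int e^{tF}\,dQ_t^\omega$ with $F=\Psi$, apply Varadhan's lemma via the LDP of Theorem~\ref{theorem-Q-t-satisfies-LDP}, and algebraically identify the result with~\eqref{eq-main-LB-opt}; for part~\ref{part-UI} the paper simply cites~\cite{Agaskar15}, and your Gaussian-maximum sketch is in the same spirit. One simplification worth noting: your $\Psi$ is in fact affine in $(\theta,\xi)$ (the coefficients $\log P_{ij}$ are fixed finite constants on edges, so nothing diverges as $\theta_{ij}\to0$), and the paper disposes of the Varadhan tail condition more directly by observing that $Q_t^\omega$ is eventually supported on a fixed compact box (cf.\ the proof of Lemma~\ref{lemma-Q-t-exp-tight}), on which $F$ is bounded---so your $\delta$-truncation concern and the $\gamma>1$ moment condition are unnecessary.
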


Almost sure convergence of $\phi_t$, together with its uniform integrability, implies convergence in expectation, i.e.,
we have
\begin{equation}\label{eq-in-expectation}
\lim_{t\rightarrow +\infty} \mathbb{E} \left[\phi_t\left( \overline {\mathcal Z}\right) \right]
= \underline{\zeta}.
\end{equation}
Using the preceding equality in~\eqref{eq-step-1-implies}, the claim of Theorem~\ref{theorem-main} follows.

\section{Proof of LDP for Gauss-Markov types}
\label{sec-LDP-for-Gauss-Markov-proof}
In this section we prove Theorem~\ref{theorem-Q-t-satisfies-LDP}. The LDP upper bound~\eqref{def-LDP-upper} is proven in Subsection~\ref{subsec-LDP-UB}, and the LDP lower bound~\eqref{def-LDP-lower} is proven in Subsection~\ref{subsec-LDP-LB}.

\subsection{Proof of the LDP upper bound}
\label{subsec-LDP-UB}

\mypar{Upper bound for boxes} We first show the LDP upper bound for all \emph{boxes} in $\mathbb R^{N^2+N}$. Fix an arbitrary box $B=C\times D$, where $C$ is a box in $\mathbb R^{N\times N}$ and $D$ is a box in $\mathbb R^N$. Fix an arbitrary $t\geq 1$. Using the fact that the random matrix $\Theta_t$ is discrete and that it can only take values in  $\Delta_t$, we have
\begin{align*}
1_{\left\{ {\color{black}(\Theta_t(s^t), \mathcal Z^\omega_t(s^t))}\in B\right\}} & = 1_{\left\{ {\color{black}\Theta_t(s^t)}\in C\right\}}
1_{\left\{ {\color{black}\mathcal Z^\omega_t (s^t)}\in D\right\}} \nonumber\\
& = \sum_{\theta \in C\cap \Delta_t} 1_{\left\{ {\color{black}\Theta_t(s^t)} = \theta\right\}}
1_{\left\{ {\color{black}\mathcal Z^\omega_t (s^t)}\in D\right\}} .
\end{align*}
Thus,
\begin{align*}
Q_t^{\omega}(B) &= \frac{\sum_{\theta \in C \cap \Delta_t} \sum_{s^t\in \mathcal S^t}
1_{\left\{{\color{black}\Theta_t(s^t)}=\theta, {\color{black}\mathcal Z^\omega_t (s^t)} \in D\right\}} }  {C_t}.
\end{align*}
Consider now a fixed $\theta \in C\cap \Delta_t$ and let $K={k_{ij}}$ be the matrix of integers that verifies the fact that $\theta$ belongs to $\Delta_t$, {\color{black} that is, for any $i,j=1,...,N$, $k_{ij}= \theta_{ij} t$}; for $i=1,...,N$, denote also $k_i= \overline \theta_i$. Recall that $\mathcal S^t_{\theta}$ denotes all sequences (of length $t$) such that, for each $(i,j)$, the number of transitions from $i$ to $j$ equals $\theta_{ij}\,t=k_{ij}$. Then, we have
\begin{align*}
\sum_{s^t \in \mathcal S^t} 1_{\left\{{\color{black}\Theta_t(s^t)}=\theta, {\color{black}\mathcal Z^\omega_t (s^t)} \in D\right\}} =
\sum_{s^t \in \mathcal S^t_{\theta}} 1_{\left\{ {\color{black}\mathcal Z_t^{\omega} (s^t)} \in D\right\}}.
\end{align*}
Introducing, for each $\omega$ and $\theta \in \Delta_t$, a new probability measure
$Q_{t,\theta}^\omega : \mathcal B\left(\mathbb R^N\right) \mapsto \mathbb R$, defined by
\begin{equation}
\label{def-Q-t-theta}
Q_{t,\theta}^{\omega}(D^\prime):= \frac{\sum_{s^t\in \mathcal S^t_\theta}
1_{\left\{ {\color{black}\mathcal Z_t^{\omega} (s^t)} \in D^\prime\right\}} } {C_{t,\theta}},
\end{equation}
for any $D^\prime \in \mathcal B\left(\mathbb R^N\right)$, we obtain
\begin{equation}
\label{eq-Q-t-equals-sum-Q-t-theta}
Q_t^{\omega}(B) = \sum_{\theta\in C\cap \Delta_t} \frac{C_{t,\theta}}{C_t} Q_{t,\theta}^\omega(D).
\end{equation}
We now analyze the empirical distribution $Q_{t,\theta}^\omega$. Since random vectors $Z_{s^t}$,
$s^t \in \mathcal S^t$, are independent, we have that the indicator functions in the family
$\left\{1_{\left\{ {\color{black}\mathcal Z_t^{\omega} (s^t)} \in D\right\}}:\,s^t\in \mathcal S^t\right\}$ are independent -- hence they are independent in the subfamily $\mathcal S^t_{\theta} \subseteq \mathcal{S}^t$ as well. Further, any sequence $s^t \in \mathcal{S}^t_\theta$ has the same Markov type $\theta$. Thus, for each $s^t \in \mathcal S^t_\theta$, we have that ${K}_i(s^t)= \overline{\theta}_i t$, for each $i$. Recalling that, for $i=1,...,N$, $\mathcal Z_{t,i}^{\omega}(s^t) \sim \mathcal N(0, {K}_i(s^t)/t^2)$, it follows that, for each fixed $i$, $\left\{\mathcal Z_{t,i}^{\omega}(s^t):\, s^t \in \mathcal S^t_\theta\right\}$ is a family of i.i.d. Gaussian random variables, with mean $0$ and variance $\overline{\theta}_i/t$. Thus, $\left\{\mathcal Z_t^{\omega}(s^t): s^t \in \mathcal S^t_\theta\right\}$ is also i.i.d.; we denote by $q_{t,\theta}: \mathcal B\left(\mathbb{R}^N\right) \mapsto \mathbb{R}$ the corresponding probability measure -- i.e.,  $q_{t,\theta}$ is a probability measure induced by $\mathcal Z_{t}^{\omega}(s^t)$, where $s^t$ is an arbitrary element of $\mathcal{S}^t_{\theta}$.

Further, since for each fixed $s^t\in \mathcal S^t$ individual components $Z_{s^t,k}$ of vector $Z_{s^t}$ are independent, by the disjoint block theorem~\cite{Karr93}, we have that, for any fixed sequence $s^t$, the individual elements of random vector $\mathcal Z_t^{\omega}(s^t)$, $ \mathcal Z_{t,i}^{\omega}(s^t)$, $i=1,...,N$, are independent. Let $[q_{t,\theta}]_i: \mathcal B\left(\mathbb{R}\right) \mapsto \mathbb{R}$ denote marginal probability measures induced by $\mathcal Z_{t,i}^{\omega}(s^t)$, for $i=1,...,N$. Recall that  $D$ is a box, and suppose that $D =D_1\times...\times D_N$, for some arbitrary closed intervals $D_i$ in $\mathbb R$, $i=1,...,N$. Then, we have
\begin{align}
\label{eq-product-of-marginals}
q_{t,\theta}(D)&:= \mathbb P\left( \mathcal Z_t^{\omega} (s^t) \in D \right)\nonumber\\
&= \prod_{i=1}^{N}[q_{t,\theta}]_i(D_i).
\end{align}
From~\eqref{def-mathcal-Z-t-distribution}, it is easy to see that
\begin{equation}
\label{eq-marginals}
[q_{t,\theta}]_i(D_i) = \left\{
\begin{array}{ll}
\frac{t}{\sqrt{2 \pi k_i}} {\color{black}\int_{D_i}} e^{-t^2 \frac{\eta_i^2}{2 k_i}} d \eta_i
& \mathrm{if\;} k_i\geq 1\\
1,& \mathrm{if\;} 0\in D_i \mathrm{\;and\;}  k_i=0\\
0,& \mathrm{if\;} 0\notin D_i \mathrm{\;and\;} k_i=0
\end{array}\right.,
\end{equation}
where we remark that the last two equalities are due to the fact that, when $k_i=0$, then $\mathcal Z_{t,i}^{\omega}(s^t)$ is deterministic and equal to zero.

Going back to the family of indicator functions
$\left\{1_{\left\{ {\color{black}\mathcal Z_t^{\omega} (s^t)} \in D\right\}}: s^t \in \mathcal S^t_{\theta}\right\}$, we conclude that these are i.i.d. Bernoulli random variables, each with success probability equal to $q_{t,\theta}(D)$, and thus, the empirical measure $ Q_{t,\theta}^\omega (D)$ has the expected value equal to this quantity,
\begin{equation}\label{eq-expexted-value-Q-t-theta}
\mathbb E\left[ Q_{t,\theta}^\omega (D) \right] = q_{t,\theta} (D).
\end{equation}
The following lemma upper bounds and computes the exponential decay rate for the probability $q_{t,\theta}(D)$.
\begin{lemma}
\label{lemma-UB-q-t}
 For any $\epsilon>0$, there exists $t_2=t_2(\epsilon, D)$ such that, for each $t\geq t_2$ and $\theta \in \Delta_t$,
 \begin{equation}
 \label{eq-UB-q-t}
 q_{t,\theta}(D) \leq e^{t\epsilon} e^ {- t \inf_{\xi \in D} J_\theta (\xi)}.
 \end{equation}
  \end{lemma}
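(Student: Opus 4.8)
The plan is to reduce everything to a one-dimensional Gaussian tail estimate. By~\eqref{eq-product-of-marginals}--\eqref{eq-marginals}, $q_{t,\theta}$ factors over the coordinate intervals $D_1,\dots,D_N$ of the box $D$, $q_{t,\theta}(D)=\prod_{i=1}^N [q_{t,\theta}]_i(D_i)$, and the rate function is likewise separable, $J_\theta(\xi)=\sum_{i:\,\overline\theta_i>0}\frac{1}{\overline\theta_i}\frac{\xi_i^2}{2}$ with the convention $\xi_i=0$ whenever $\overline\theta_i=0$. So it suffices to bound each marginal $[q_{t,\theta}]_i(D_i)$ by $e^{-t m_i}$, where $m_i:=\inf_{\eta\in D_i}\frac{\eta^2}{2\overline\theta_i}$ when $\overline\theta_i>0$, $m_i:=0$ if $\overline\theta_i=0$ and $0\in D_i$, and $m_i:=+\infty$ if $\overline\theta_i=0$ and $0\notin D_i$; multiplying these bounds and using that $D$ is a product set then gives $q_{t,\theta}(D)\le e^{-t\sum_i m_i}=e^{-t\inf_{\xi\in D}J_\theta(\xi)}$, which already implies~\eqref{eq-UB-q-t} with room to spare.

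First I would fix $\theta\in\Delta_t$ and write $k_i=\overline\theta_i t$. By the definition of $\Delta_t$, each $k_i$ is a nonnegative integer, so for every $i$ either $\overline\theta_i=0$ (whence $k_i=0$) or $\overline\theta_i\ge 1/t$ (whence $k_i\ge 1$); in particular the three cases in~\eqref{eq-marginals} are exhaustive for every $t\ge 1$, and no separate treatment of ``small but positive $\overline\theta_i$'' is ever required. The degenerate coordinates are then immediate from~\eqref{eq-marginals}: if $\overline\theta_i=0$ then $[q_{t,\theta}]_i(D_i)$ equals $1$ when $0\in D_i$ and $0$ when $0\notin D_i$, matching $e^{-t m_i}$.

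The substantive step is the case $\overline\theta_i>0$, where $[q_{t,\theta}]_i$ is the law of $\mathcal N(0,\overline\theta_i/t)$. If $0\in D_i$ then $m_i=0$ and the bound $[q_{t,\theta}]_i(D_i)\le 1=e^{-t m_i}$ is trivial. If $0\notin D_i$, then the closed interval $D_i$ lies entirely in $(0,\infty)$ or entirely in $(-\infty,0)$; setting $a_i=\inf\{|\eta|:\eta\in D_i\}>0$ and using symmetry of the Gaussian in the second case, $[q_{t,\theta}]_i(D_i)\le\mathbb P\big(\mathcal N(0,\overline\theta_i/t)\ge a_i\big)$, and the elementary Chernoff bound for a centered Gaussian gives $\mathbb P\big(\mathcal N(0,\overline\theta_i/t)\ge a_i\big)\le e^{-t a_i^2/(2\overline\theta_i)}=e^{-t m_i}$, since $m_i=a_i^2/(2\overline\theta_i)$. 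Crucially, this estimate is uniform in the variance parameter $\overline\theta_i$, which is what lets the final bound hold uniformly over $\theta\in\Delta_t$.

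Finally I would multiply over $i=1,\dots,N$ and invoke separability to identify $\sum_{i=1}^N m_i$ with $\inf_{\xi\in D}J_\theta(\xi)$, obtaining $q_{t,\theta}(D)\le e^{-t\inf_{\xi\in D}J_\theta(\xi)}\le e^{t\epsilon}e^{-t\inf_{\xi\in D}J_\theta(\xi)}$ for all $t\ge 1$ and all $\theta\in\Delta_t$; one may thus take $t_2=1$. I do not expect a genuine obstacle here: the only care needed is the bookkeeping of the degenerate coordinates $\overline\theta_i=0$ and of intervals $D_i$ touching or straddling the origin. The slack $e^{t\epsilon}$ and the threshold $t_2(\epsilon,D)$ in the statement are not actually used for this upper bound; they are kept only to parallel the form of the matching lower estimate used elsewhere in the LDP proof.
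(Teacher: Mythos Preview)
Your proof is correct and follows the same coordinate-wise factorization as the paper, but the marginal estimate you use is sharper. The paper bounds each nondegenerate factor by the crude ``maximum of the density times the length of the interval'',
\[
[q_{t,\theta}]_i(D_i)=\frac{t}{\sqrt{2\pi k_i}}\int_{D_i}e^{-t^2\eta_i^2/(2k_i)}\,d\eta_i
\;\le\;\frac{t}{\sqrt{2\pi}}\,\delta_i\,e^{-t\inf_{\eta_i\in D_i}\eta_i^2/(2\overline\theta_i)},
\]
which produces a polynomial prefactor $t^{N_1}(2\pi)^{-N_1/2}\prod_i\delta_i$ that must then be absorbed into $e^{t\epsilon}$ for $t\ge t_2(\epsilon,D)$. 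You instead split on whether $0\in D_i$ and, when it does not, invoke the Gaussian Chernoff bound $\mathbb P(\mathcal N(0,\sigma^2)\ge a)\le e^{-a^2/(2\sigma^2)}$, which carries no prefactor at all. This yields the stronger conclusion $q_{t,\theta}(D)\le e^{-t\inf_{\xi\in D}J_\theta(\xi)}$ for every $t\ge1$ and every $\theta\in\Delta_t$, so the $e^{t\epsilon}$ and the threshold $t_2$ in the statement are indeed superfluous for the upper bound, exactly as you note. The paper's cruder estimate has the minor advantage of treating all intervals $D_i$ uniformly without a case split on the position of~$0$, but your argument is cleaner and buys a genuinely sharper inequality.
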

The proof of Lemma~\ref{lemma-UB-q-t} is given in Appendix~\ref{app-C}.

We now introduce
\begin{equation*}
b:=\inf_{ (\theta, \xi) \in C\times D} J_{\theta}(\xi) - H(\theta)
\end{equation*}
and proceed with the proof by separately analyzing the cases: 1) $b\leq 0$; and 2) $b> 0$.

\mypar{Case 1: $\mathbf{b\leq 0}$}
Fix an arbitrary $\epsilon>0$ and for each $t\geq 1$ define
\begin{equation*}
\mathcal A_{t} = \left\{ \omega:  Q^{\omega}_{t,\theta} (D)  \geq q_{t,\theta}(D) e^{t \epsilon},
\mathrm{\;for\;some\;}\theta \in \Delta_t \right\}.
\end{equation*}

By the union bound, followed by the Markov's inequality applied to each of the terms in the obtained sum, we have
\begin{align}
\label{eq-BC-for-event-A-t}
\mathbb P\left( \mathcal A_{t}\right) & \leq \sum_{\theta \in \Delta_t} \mathbb P\left(Q^{\omega}_{t,\theta} (D)  \geq q_{t,\theta}(D) e^{t \epsilon}\right)\nonumber\\
& \leq \sum_{\theta \in \Delta_t} \frac{\mathbb E\left[Q^{\omega}_{t,\theta} (D)\right]}{ q_{t,\theta}(D) e^{t\epsilon}}\nonumber\\
& \leq \frac{ (t+1)^{N^2}}{e^{t \epsilon}},
\end{align}
where in the last inequality we used~\eqref{eq-expexted-value-Q-t-theta} and the fact that each coordinate $\theta_{ij}$ takes values in the set $\{0,1,...,t\}$, and therefore the number of points in $\Delta_t$ is upper bounded by $(t+1)^{N^2}$.

The quantity in~\eqref{eq-BC-for-event-A-t} decays exponentially fast for any $\epsilon>0$ and, by the Borel-Cantelli lemma, we have $\mathbb P \left(\mathcal A_{t}\mathrm{, inf.\,often}\right)=0$. Thus, there exists a set $\Omega_0^\star\subseteq \Omega$, with $\mathbb P(\Omega_0^\star)=1$, such that, for each $\omega \in \Omega_0^\star$,
for any $\epsilon>0$,
\begin{equation}
\label{eq-aux-1}
Q_{t,\theta}^{\omega} (D)\leq q_{t,\theta} (D) e^{t\epsilon},
\end{equation}
for all $\theta \in \Delta_t$, for all $t\geq t_3$, where $t_3=t_3(\omega,\epsilon, D)$. Combining with the bounds on $C_t$ and $C_{t,\theta}$ from Lemma~\ref{lemma-C-t-theta-bounds}, together with the upper bound on $q_{t,\theta}$ from Lemma~\eqref{lemma-UB-q-t}, we obtain that, for any fixed $\epsilon>0$, for all $t\geq t_4=t_4(\omega, \epsilon, D, P_0):= \max\{t_0,t_1,t_2, t_3\}$,
\begin{equation}
\label{eq-combining}
\frac{C_{t,\theta}}{C_t} Q_{t,\theta}^{\omega} (D)\leq  \frac{1} {\rho_0^t }e^{t H(\theta) - t \inf_{\xi \in D} J_{\theta} (\xi)} e^{4 t\epsilon},
\end{equation}
for all $\theta\in \Delta_t$. Going back to equation~\eqref{eq-Q-t-equals-sum-Q-t-theta}, and applying~\eqref{eq-combining} for each $\theta \in C \cap\Delta_t$, we get
\begin{align}
\label{eq-Q-t-via-inf}
Q_t^{\omega}(B) &\leq (t+1)^{N^2} \max_{\theta \in C\cap \Delta_t} Q_{t,\theta}^{\omega}(D) \frac{C_{t,\theta}}{C_t}\nonumber\\
& \leq (t+1)^{N^2} e^{4 t\epsilon} \max_{\theta \in C\cap \Delta} \frac{1} {\rho_0^t }e^{t H(\theta) - t \inf_{\xi \in D} J_{\theta} (\xi)} \nonumber\\
& \leq (t+1)^{N^2} e^{4 t\epsilon} e^{ - t \inf_{\theta \in C\cap \Delta, \xi \in D} \log \rho_0 + J_{\theta}(\xi) - H(\theta)}.
\end{align}
Now, note that the following holds
\begin{align}
& \inf_{(\theta,\xi) \in C\cap \Delta \times D} \log \rho_0 + J_{\theta}(\xi) - H(\theta)  \nonumber \\
& = \min \left\{\inf_{(\theta,\xi) \in C\cap \Delta \times D: H(\theta) \geq J_{\theta} (\xi)} \log \rho_0 + J_{\theta}(\xi) - H(\theta),
\inf_{(\theta,\xi) \in C\cap \Delta \times D: H(\theta) \leq J_{\theta} (\xi)} \log \rho_0 + J_{\theta}(\xi) - H(\theta)\right\}\nonumber\\
& = \inf_{(\theta,\xi) \in C\cap \Delta \times D: H(\theta) \geq J_{\theta}} \log \rho_0 + J_{\theta}(\xi) - H(\theta)\label{eq-two-infs}\\
& =\inf_{(\theta,\xi) \in C \times D} I(\theta,\xi)\label{eq-recognize-rate-fcn},
\end{align}
where~\eqref{eq-two-infs} follows from the fact that $b\leq 0$ (note that, since function $J_{\theta} (\xi) - H(\theta)$ is lower semi-continuous, the set $\left\{(\theta,\xi) \in C\cap \Delta \times D: H(\theta) \geq J_{\theta} (\xi)\right\}$ is non-empty), and~\eqref{eq-recognize-rate-fcn} holds by the definition of the rate function $I$. Thus, combining~\eqref{eq-recognize-rate-fcn} with~\eqref{eq-Q-t-via-inf} proves the upper bound in Case 1.

\mypar{Case 2: $\mathbf{b>0}$}
We now prove the upper bound for the case when $b>0$. Define
\begin{equation*}
\mathcal B_t = \left\{ \omega: \sum_{s^t \in \mathcal S^t_{\theta}}
1_{\left\{ {\color{black}\mathcal Z_t^{\omega} (s^t)} \in D\right\}}\geq 1,\mathrm{\,for\,some\,} \theta \in C\cap \Delta_t\right\}.
\end{equation*}
Again, by the union bound and Markov's inequality, we have
\begin{align*}
\mathbb P\left(\mathcal B_t\right) & \leq \sum_{\theta\in C\cap \Delta_t} C_{t,\theta}\, q_{t,\theta}(D)\nonumber\\
& \leq \sum_{\theta\in C\cap \Delta_t} e^{2 t \epsilon} e^{t\, \left( H(\theta)-  \inf_{\xi \in D} J_{\theta} (\xi)\right)}\\
& \leq  (t+1)^{N^2} e^{- t (b - 2 \epsilon)},
\end{align*}
which holds for any fixed $\epsilon$ and all $t\geq t_5=t_5(\epsilon, P_0, D):=\max\{t_1,t_2\}$.
For sufficiently small $\epsilon$, the latter number decays exponentially fast with $t$. It follows by the Borel-Cantelli lemma, that, for all $\theta \in C\cap \Delta_t$, with probability one, $Q^\omega_{t,\theta} (D)=0$, for all $t\geq t_5$.

Going back to eq.~\eqref{eq-Q-t-equals-sum-Q-t-theta}, we have
\begin{align*}
Q_t^{\omega}(B) &\leq (t+1)^{N^2} \max_{\theta \in C\cap \Delta_t} Q_{t,\theta}^{\omega}(D) \frac{C_{t,\theta}}{C_t}\\
& = 0,
\end{align*}
for all $t\geq t_5$. On the other hand, since $b>0$, we have that, for any point $(\theta,\xi) \in C\times D$, $J_{\theta}(\xi) - H(\theta)>0$. Hence, we obtain
\begin{equation*}
\inf_{(\theta,\xi)\in C\times D} I(\theta,\xi) = +\infty.
\end{equation*}
Combining the preceding two identities proves the upper bound for any boxes in Case 2. This completes the proof of the LDP upper bound for boxes.

\mypar{Upper bound for compact sets} We next extend the LDP upper bound from boxes to all \emph{compact}  sets. Fix a compact set $G\subseteq \mathbb R^{N^2+N}$. Fix an arbitrary $\alpha>0$. For each point $(\theta,\xi)\in G$ draw a box $B$ around $(\theta,\xi)$ of size (width) $\delta=\delta(\theta,\xi)$ such that the infimum of $I$ over $B$ is at least $I(\theta,\xi)- \alpha$,
\begin{equation*}
\inf_{(\theta^\prime, \xi^\prime) \in B_{(\theta, \xi) } (\delta(\theta,\xi)) }  I(\theta^\prime,\xi^\prime)\geq I(\theta,\xi) - \alpha.
\end{equation*}
From the family of boxes $\left\{ B_{(\theta, \xi)}{\color{black}(} \delta(\theta,\xi) {\color{black})}:\, (\theta, \xi) \in G\right\}$, we extract a finite cover of $G$, $\left\{ B_l: l=1,...,M\right\}$ (note that this is feasible due to the fact that $G$ is compact), {\color{black} where we denote} $B_l=B_{{\color{black}(} \theta_l,\xi_l{\color{black})}} (\delta_l)$ and $\delta_l$ is the appropriate box size. Then, we have
\begin{align*}
Q_t^\omega(G)\leq \sum_{l=1}^M Q_t^\omega(B_l).
\end{align*}
Since $M$ is finite, the above inequality implies
\begin{align*}
\limsup_{t\rightarrow +\infty} \frac{1}{t}\log Q_t^\omega(G)& \leq \max_{1\leq l\leq M} \limsup_{t\rightarrow +\infty} \frac{1}{t}\log Q_t^\omega(B_l)\nonumber\\
& \leq \max_{1\leq l\leq M} - \inf_{(\theta,\xi) \in B_l} I(\theta,\xi)\\
& \leq - \min_{1\leq l\leq M} I(\theta_l,\xi_l)+ \alpha\\
& \leq - \inf_{(\theta,\xi) \in G} I(\theta,\xi)+ \alpha.
\end{align*}
Noting that $\alpha$ can be chosen arbitrarily small proves the LDP upper bound for compact sets.

\mypar{Upper bound for closed sets} To complete the proof of the LDP upper bound, it remains to show that the upper bound holds for all \emph{closed} sets.  We do this by showing that the sequence of measures $Q_t^\omega$ is exponentially tight with probability one. By Lemma 1.2.18 from~\cite{DemboZeitouni93}, this together with the upper bound for all compact sets yields the upper bound for all closed sets.

\begin{lemma}
\label{lemma-Q-t-exp-tight} With probability $1$, the sequence of measures $Q_t^\omega$, $t=1,2,...$, is exponentially tight.
\end{lemma}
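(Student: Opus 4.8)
\emph{Proof proposal.} The plan is to split the Gauss--Markov type into its two components and handle them very differently. The $\Theta_t$-component is harmless: for every $s^t$ we have $\Theta_t(s^t)\in\Delta_t\subseteq\Delta$, and $\Delta$ is a closed, bounded (hence compact) subset of $\mathbb R^{N\times N}$, since all its defining relations are non-strict and each entry satisfies $0\le\theta_{ij}\le\sum_{i,j}\theta_{ij}=1$. So the only way mass can escape to infinity is through the $\mathcal Z_t^\omega$-component, and here the key point is that, conditionally on $s^t$, each coordinate $\mathcal Z_{t,i}^\omega(s^t)$ is a centered Gaussian with variance $\overline\Theta_{t,i}(s^t)/t=K_i(s^t)/t^2\le 1/t$ (and is identically $0$ when $K_i(s^t)=0$); its tails therefore decay at an exponential rate in $t$.

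Quantitatively, fix $\rho>0$ and let $G_\rho=\{(\theta,\xi)\in\mathbb R^{N^2+N}:\|\xi\|_\infty>\rho\}$. A union bound over the $N$ coordinates together with the standard Gaussian tail estimate gives, for each fixed $s^t\in\mathcal S^t$, $\mathbb P\!\left(\mathcal Z_t^\omega(s^t)\in G_\rho\right)\le N e^{-\rho^2 t/2}$. Averaging over $\mathcal S^t$ and recalling the definition~\eqref{def-Q-t} of $Q_t^\omega$, we get $\mathbb E\!\left[Q_t^\omega(G_\rho)\right]=\frac{1}{C_t}\sum_{s^t\in\mathcal S^t}\mathbb P\!\left(\mathcal Z_t^\omega(s^t)\in G_\rho\right)\le N e^{-\rho^2 t/2}$. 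Markov's inequality then yields $\mathbb P\!\left(Q_t^\omega(G_\rho)\ge e^{-\rho^2 t/4}\right)\le N e^{-\rho^2 t/4}$, which is summable over $t$, so by the Borel--Cantelli lemma there is an event of full probability on which $Q_t^\omega(G_\rho)\le e^{-\rho^2 t/4}$ for all sufficiently large $t$; in particular $\limsup_{t\rightarrow +\infty}\frac1t\log Q_t^\omega(G_\rho)\le-\rho^2/4$ on that event.

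To conclude, intersect these full-probability events over $\rho\in\mathbb N$ to obtain a single event $\Omega^\star$ with $\mathbb P(\Omega^\star)=1$ on which the last bound holds simultaneously for every positive integer $\rho$. Given any $M>0$, choose $\rho\in\mathbb N$ with $\rho^2/4>M$ and set $K_M=\Delta\times\{\xi\in\mathbb R^N:\|\xi\|_\infty\le\rho\}$, which is compact as a product of compact sets. Since $Q_t^\omega$ is supported on $\Delta\times\mathbb R^N$, we have $Q_t^\omega(K_M^{\mathrm{c}})\le Q_t^\omega(G_\rho)$, hence $\limsup_{t\rightarrow +\infty}\frac1t\log Q_t^\omega(K_M^{\mathrm{c}})\le-\rho^2/4<-M$ on $\Omega^\star$; this is precisely exponential tightness. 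The only delicate point is the bookkeeping needed to make a single null set serve for all the compact sets $K_M$ at once, which is taken care of by the countable intersection over integer $\rho$; the exponential rate itself is driven entirely by the elementary variance bound $K_i(s^t)/t^2\le 1/t$, so I do not expect any genuine obstacle.
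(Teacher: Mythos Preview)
Your argument is correct. The variance bound $\mathrm{Var}\,\mathcal Z_{t,i}^\omega(s^t)=\overline\Theta_{t,i}(s^t)/t\le 1/t$, the Gaussian tail estimate, Markov's inequality, Borel--Cantelli, and the countable intersection over integer $\rho$ all combine exactly as you describe. (A harmless constant is missing in the tail bound---the two-sided Gaussian tail gives $2Ne^{-\rho^2 t/2}$ rather than $Ne^{-\rho^2 t/2}$---but this does not affect the exponential rate.)

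Your route is genuinely different from the paper's. The paper does not estimate $Q_t^\omega$ directly at all; instead it observes that the rate function $I$ has compact support: $\theta$ is confined to the compact simplex $\Delta$, and the constraint $H(\theta)\ge J_\theta(\xi)$ together with $H(\theta)\le\log N$ forces $\xi_i^2\le 2\overline\theta_i\log N$, so the set $\{I<+\infty\}$ is bounded. The paper then asserts that compact support of $I$ suffices for exponential tightness, implicitly leaning on the LDP upper bound for boxes already established in the preceding subsection (outside the support, every box has $\inf I=+\infty$, hence $Q_t^\omega$-mass decaying faster than any exponential). Your approach, by contrast, is entirely self-contained: it never mentions $I$, works purely from the Gaussian structure of $\mathcal Z_t^\omega$, and makes the almost-sure statement explicit through Borel--Cantelli and the countable intersection over $\rho\in\mathbb N$. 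The paper's version is shorter but relies on machinery proved just before; yours is more elementary and arguably more transparent about why a \emph{single} null set works for all the compact exhausting sets simultaneously.
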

\begin{proof}
To show that $Q_t^\omega$ is exponentially tight it suffices to show that the rate function has compact support. To this end, note that the variable $\theta$ must belong to the compact set $\Delta$, as otherwise $I=+\infty$. Second, $I$ is finite at a given point $(\theta, \xi)$ only if there holds $H(\theta) \geq \sum_{i: \overline \theta_i>0} \frac{1}{\overline \theta_i} \frac{\xi_i^2}{2}$ and if $\xi_i=0$ for any $i$ such that $\overline \theta_i=0$. Note further that the maximal value of the entropy function $H$ on the compact set $\Delta$ equals $\log N$. From the preceding conditions, we thus obtain that, in order for $I$ to be finite at some given $(\theta,\xi)$, $\xi$ must satisfy $\xi_i^2 \leq \overline \theta_i \log N$ (note that the case $\overline \theta_i=0$ is automatically accounted for). This therefore proves that $I$ has compact support, and hence proves Lemma~\ref{lemma-Q-t-exp-tight}.
\end{proof}

\subsection{Proof of the LDP lower bound}
\label{subsec-LDP-LB}
Let $U$ be an arbitrary open set in $\mathbb R^{N^2 + N}$. Denote $a= \inf_{(\theta,\xi)\in U} I(\theta,\xi)$ and note that $a$ can either be a finite number or $+\infty$. If $a=+\infty$, then the lower bound holds trivially. Thus, in the remainder of the proof we assume that $a\in \mathbb R$.

We claim that, for any $\alpha>0$, there exists a point $(\theta^\star,\xi^\star)=(\theta^\star,\xi^\star)(\alpha)\in U$ such that
\begin{align}
  & I(\theta^\star,\xi^\star)\leq a+\alpha, \label{eq-LB-inf-1}\\
  \mathrm{and\;} & H(\theta^\star)- J_{\theta^\star} (\xi^\star) >0. \label{eq-LB-inf-2}
\end{align}
To prove this claim, consider an arbitrary fixed $\alpha>0$. Then, by the definition of $a$, there must exist $(\theta^\prime,\xi^\prime)\in U$ such that~\eqref{eq-LB-inf-1} holds. Note that $H(\theta^\prime)- J_{\theta^\prime} (\xi^\prime)$ can either be greater than $0$ or equal to $0$ (if $H(\theta^\prime)- J_{\theta^\prime} (\xi^\prime)<0$, this would contradict the fact that $a$ is finite). If   $H(\theta^\prime)- J_{\theta^\prime} (\xi^\prime)>0$, the claim is proven. Hence, suppose that $H(\theta^\prime)- J_{\theta^\prime} (\xi^\prime)=0$ . Recall that $U$ is an open set. By the definition of rate function $I$, $\theta^\prime$ must be strictly positive in at least one entry, say $\theta^\prime_i>0$.
By the fact that $U$ is open, there must exist a point $(\theta^{\prime \prime}, \xi^{\prime \prime}) \in U$, where
$\theta^{\prime \prime}=\theta^\prime$, $\xi_j^{\prime \prime}=\xi_j^\prime$, for all $j\neq i$, and
$\xi_i^{\prime \prime}$ is chosen such that {\color{black} $|\xi_i^{\prime \prime}| < |\xi_i^\prime|$}. For $\xi^{\prime \prime}$ there holds $J_{\theta^{\prime \prime}} (\xi^{\prime \prime})<J_{\theta^{\prime \prime}} (\xi^\prime)$. Thus, choosing
 $(\theta^\star,\xi^\star)=(\theta^{\prime \prime}, \xi^{\prime \prime})$ proves~\eqref{eq-LB-inf-2}.

Next, for each $t\geq 1$, pick an arbitrary point $\theta_t$ from the set of closest neighbors of $\theta^\star$\footnote{ {\color{black} We remark that there might exist more than one point in $\Delta_t$ that is closest to $\theta^\star$, rather than unique projection to $\Delta_t$. To indicate this, we use the notation $\mathrm{Argmin}$ to denote the \emph{set} of projections of $\theta^\star$ to $\Delta_t$, rather than $\mathrm{argmin}$, which is more commonly used in minimization problems where the minimizer is unique.} } in the set $\Delta_t$,
\begin{equation}
\label{def-theta-t-opt}
\theta_t\in\mathrm{Argmin}_{\theta \in \Delta_t, \overline \theta_{i}=0 \mathrm{\;if\;}\overline \theta^\star_i=0} \|\theta - \theta^\star\|.
\end{equation} % Ray Dahlio and other thinkers of the area
Note that, since the set $\Delta_t$ gets denser with $t$, we have that $\theta_t\rightarrow \theta^\star$, as $t$ goes to infinity. We show that there exists a box $D\subseteq \mathbb R^N$ (independent of $t$) such that, for all $t$ sufficiently large, $\{\theta_t\} \times D \subseteq U$. Since $(\theta^\star,\xi^\star) \in U$ and $U$ is open, we can find a sufficiently small box $B=C\times D$ centered at $(\theta^\star,\xi^\star)$, where $C$ is a box in $\mathbb R^{N\times N}$ and $D$ is a box in $\mathbb R^N$,  such that $B$ entirely belongs to $U$. Since $\theta_t \rightarrow \theta^\star$, the tail of the sequence $\theta_t$ must belong to $C$. Thus, there exists $t_6 = t_6 (\theta^\star, C, D)$ such that for all $t\geq t_6$, $\{\theta_t\}\times D \subseteq C\times D \subseteq U$.

Similarly as in eq.~\eqref{eq-Q-t-equals-sum-Q-t-theta} in the proof of the upper bound, we have
\begin{equation*}
Q_t^\omega(U) \geq Q_t^\omega(\{\theta_t\}\times D) = Q_{t,\theta_t}^\omega(D) \frac{C_{t,\theta_t}}{C_t}.
\end{equation*}
We first show that, with probability one, the empirical measures $Q^\omega_{t,\theta_t}(D)$, $\theta \in \Delta_t$, approach their respective expectations $q_{t,\theta} (D)$.

\begin{lemma}
\label{lemma-BC-2}
For any $\epsilon>0$, with probability one, there exists $t_7=t_7(\omega,\epsilon, D)$  such that
\begin{equation}
\label{eq-LB-Q_t-theta}
Q^\omega_{t,\theta_t}(D) \geq q_{t,\theta_t}(D) (1-\epsilon),
\end{equation}
for all $t\geq t_7$.
\end{lemma}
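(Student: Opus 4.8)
The plan is to prove the one-sided concentration bound~\eqref{eq-LB-Q_t-theta} via a Borel-Cantelli argument, exactly mirroring the structure used in Case~1 of the LDP upper bound (the event $\mathcal A_t$ and the bound~\eqref{eq-BC-for-event-A-t}), but now bounding the probability that $Q^\omega_{t,\theta_t}(D)$ falls \emph{below} its expectation $q_{t,\theta_t}(D)$ by a multiplicative factor $(1-\epsilon)$. The main point to exploit is that, for a fixed $t$ and fixed $\theta\in\Delta_t$, the collection $\left\{1_{\left\{\mathcal Z_t^\omega(s^t)\in D\right\}}: s^t\in\mathcal S^t_\theta\right\}$ consists of $C_{t,\theta}$ i.i.d.\ Bernoulli random variables with common success probability $q_{t,\theta}(D)$ (this was established in the proof of the upper bound, via the disjoint block theorem~\cite{Karr93}), so $C_{t,\theta}\,Q^\omega_{t,\theta}(D)$ is a $\mathrm{Binomial}(C_{t,\theta}, q_{t,\theta}(D))$ random variable.

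First I would fix $\epsilon>0$ and, for each $t\ge1$, define the bad event
\begin{equation*}
\mathcal C_t = \left\{\omega:\ Q^\omega_{t,\theta}(D) < q_{t,\theta}(D)(1-\epsilon)\ \text{for some}\ \theta\in\Delta_t\ \text{with}\ \overline\theta_i=0\ \text{if}\ \overline\theta^\star_i=0\right\}.
\end{equation*}
By the union bound over the at most $(t+1)^{N^2}$ points of $\Delta_t$, it suffices to show that each term $\mathbb P\left(Q^\omega_{t,\theta}(D) < q_{t,\theta}(D)(1-\epsilon)\right)$ decays exponentially in $t$, fast enough to dominate the polynomial factor $(t+1)^{N^2}$. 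For a single $\theta$, applying the multiplicative Chernoff bound for the lower tail of a binomial gives
\begin{equation*}
\mathbb P\left(Q^\omega_{t,\theta}(D) < (1-\epsilon)\,q_{t,\theta}(D)\right) \leq e^{-C_{t,\theta}\,q_{t,\theta}(D)\,\epsilon^2/2}.
\end{equation*}
Now I invoke Lemma~\ref{lemma-C-t-theta-bounds}, part~\ref{part-C-t-theta}, so $C_{t,\theta}\ge e^{tH(\theta)-t\epsilon'}$ for large $t$, and a matching \emph{lower} bound on $q_{t,\theta}(D)$ of the form $q_{t,\theta}(D)\ge e^{-t(\inf_{\xi\in D}J_\theta(\xi)+\epsilon')}$ for large $t$ (the analogue of Lemma~\ref{lemma-UB-q-t}; since $D$ is the fixed box constructed so that $\{\theta_t\}\times D\subseteq U$ around $(\theta^\star,\xi^\star)$, and $(\theta^\star,\xi^\star)$ satisfies~\eqref{eq-LB-inf-2}, for $t$ large the relevant exponent $H(\theta_t)-\inf_{\xi\in D}J_{\theta_t}(\xi)$ is bounded away from below by a positive constant). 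Hence $C_{t,\theta_t}\,q_{t,\theta_t}(D)\to+\infty$ at an exponential rate, so $e^{-C_{t,\theta_t}q_{t,\theta_t}(D)\epsilon^2/2}$ is super-exponentially small; summed against $(t+1)^{N^2}$ this is still summable in $t$. Borel-Cantelli then yields $\mathbb P(\mathcal C_t\ \text{i.o.})=0$, and on the complementary probability-one event we get $t_7=t_7(\omega,\epsilon,D)$ such that~\eqref{eq-LB-Q_t-theta} holds for all $t\ge t_7$ (applied along the subsequence $\theta=\theta_t$).

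The main obstacle I anticipate is a subtle one: the statement quantifies $\theta_t$ as a projection that changes with $t$, so the bad event must be a union over \emph{all} of $\Delta_t$ (or at least over a $t$-dependent neighborhood of $\theta^\star$), not just the single point $\theta_t$; this is why the $(t+1)^{N^2}$ union bound is essential, and why we need the exponential decay of the per-$\theta$ probability rather than merely $o(1)$. A second delicate point is ensuring $C_{t,\theta_t}q_{t,\theta_t}(D)$ genuinely blows up: this hinges on $H(\theta_t)-J_{\theta_t}(\xi)$ being eventually positive uniformly over $\xi\in D$, which follows from continuity of $H$ and $J_\cdot(\cdot)$ together with $\theta_t\to\theta^\star$ and the strict inequality~\eqref{eq-LB-inf-2}; one must also handle the degenerate coordinates where $\overline\theta^\star_i=0$ (hence $\overline{\theta_t}_i=0$ by the constraint in~\eqref{def-theta-t-opt}) separately, as there the corresponding marginal in~\eqref{eq-marginals} is a point mass and contributes no randomness. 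Everything else is routine and parallels the already-completed Case~1 argument.
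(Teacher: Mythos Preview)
Your core strategy is right and matches the paper's: show $\mathbb P\bigl(Q^\omega_{t,\theta_t}(D)<(1-\epsilon)q_{t,\theta_t}(D)\bigr)$ is summable in $t$ and invoke Borel--Cantelli, with the key input being that $C_{t,\theta_t}\,q_{t,\theta_t}(D)$ grows exponentially because $H(\theta^\star)-J_{\theta^\star}(\xi^\star)>0$ (eq.~\eqref{eq-LB-inf-2}), together with the lower bound on $q_{t,\theta_t}(D)$ (this is exactly Lemma~\ref{lemma-LB-on-q-t-theta} in the paper). The paper uses Chebyshev rather than the multiplicative Chernoff bound, obtaining $\mathbb P(\mathcal C_t)\le \bigl(\epsilon^2 C_{t,\theta_t} q_{t,\theta_t}(D)\bigr)^{-1}$, which already decays exponentially; your Chernoff bound is sharper but unnecessary here.

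Where you go astray is the union bound over all of $\Delta_t$. This is both unnecessary and, as you have written it, unjustified. It is unnecessary because the sequence $\theta_t$ is \emph{deterministic}: it is fixed once and for all in~\eqref{def-theta-t-opt} as a chosen element of the (non-random) set $\mathrm{Argmin}_{\theta\in\Delta_t}\|\theta-\theta^\star\|$, so at each $t$ there is exactly one bad event $\mathcal C_t=\{Q^\omega_{t,\theta_t}(D)<(1-\epsilon)q_{t,\theta_t}(D)\}$ to control, and Borel--Cantelli applies directly with no union. It is unjustified because your argument for exponential decay of $e^{-C_{t,\theta}q_{t,\theta}(D)\epsilon^2/2}$ silently switches from a generic $\theta\in\Delta_t$ to the specific $\theta_t$: the positivity of $H(\theta)-\inf_{\xi\in D}J_\theta(\xi)$ holds only for $\theta$ close to $\theta^\star$, not uniformly over $\Delta_t$, so the per-$\theta$ bound does not beat the $(t+1)^{N^2}$ factor for arbitrary $\theta$. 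Drop the union and work with the single deterministic $\theta_t$ at each $t$; then your argument (with either Chebyshev or Chernoff) goes through cleanly and coincides with the paper's.
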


\begin{proof}
Fix $\epsilon>0$ and for each $t\geq 1$ define
\begin{equation*}
\mathcal C_t=\left\{\omega: \left| \frac{Q^\omega_{t,\theta_t}(D)}{q_{t,\theta_t}(D)} -1 \right| \geq  \epsilon\right\}.
\end{equation*}
By Chebyshev's inequality~\cite{Karr93}, we have
\begin{align}
\label{eq-Chebyshev}
\mathbb P\left(\mathcal C_t\right)
& \leq \frac{\mathrm{Var \left[ 1_{\left\{\mathcal Z_t^{\omega} \in D\right\} }\right]} }{ \epsilon^2 C_{t,\theta_t} q^2_{t,\theta_t}(D)} \nonumber\\
& = \frac{q_{t,\theta_t}(D) (1-q_{t,\theta_t}(D))}{ \epsilon^2 C_{t,\theta_t} q^2_{t,\theta_t}(D)}\nonumber\\
& \leq \frac{1}{ \epsilon^2 C_{t,\theta_t} q_{t,\theta_t}(D)}.
\end{align}

We now lower bound $q_{t,\theta_t}(D)$.
\begin{lemma}
\label{lemma-LB-on-q-t-theta}
For each $\epsilon>0$ there exists $t_8=t_8(\epsilon, \theta^\star,D)$ such that, for all $t\geq t_8$,
\begin{equation}
\label{eq-LB-on-q-t-theta}
q_{t,\theta_t}(D) \geq e^{-t\epsilon} e^{-t \inf_{\eta\in D} J_{\theta_t} (\eta)}.
\end{equation}
\end{lemma}
The proof of Lemma~\ref{lemma-LB-on-q-t-theta} is given in Appendix~\ref{app-C}.

Having Lemma~\ref{lemma-LB-on-q-t-theta}, we are now ready to complete the proof of Lemma~\ref{lemma-BC-2}. Denote $\vartheta = H(\theta^\star) - J_{\theta^\star} (\xi^\star)$ and recall that, by~\eqref{eq-LB-inf-2}, $\vartheta>0$. Applying~\eqref{eq-LB-on-q-t-theta} and~\eqref{eq-C-t-theta-bounds} in~\eqref{eq-Chebyshev} for an arbitrary fixed $\epsilon$, we have that, for all $t\geq t_9=t_9(\omega, \epsilon, \theta^\star, D, P_0):=\max\{t_1,t_8\}$,
\begin{equation}
\label{eq-BC-2-almost}
\mathbb P\left(\mathcal C_t\right) \leq \frac{1}{\epsilon^2} \frac{1}{ e^{- 2 t \epsilon + H(\theta_t)
- \inf_{\eta \in D} J_{\theta_t} (\eta) }}.
\end{equation}
Since $H$ is a continuous function, and $\theta_t\rightarrow \theta^\star$, we have that $H(\theta_t) \geq H(\theta^\star) - \epsilon$, for all $t$ larger than some $t_{10}=t_{10}(\epsilon, \theta^\star)$. Second, because $\xi^\star \in D$, for any $\theta_t$, there must hold that
$\inf_{\eta \in D} J_{\theta_t} (\eta) \leq J_{\theta_t} (\xi^\star)$. Further, since
$J_{\theta}(\eta)= \sum_{i: \overline \theta_i>0} \frac{1}{\overline \theta_i} \frac{\xi_i^2}{2}$ is continuous in $\theta$ restricted to the coordinates $i$ in which $\theta^\star_i>0$, we have that, for all $t$ larger than some $t_{11}=t_{11}(\epsilon, \theta^\star)$,
$J_{\theta_t} (\xi^\star) \leq J_{\theta^\star} (\xi^\star)+\epsilon$ (note that for any $i$ such that $\theta_i^\star=0$ it must be that $\xi_i^\star=0$ -- hence $J_{\theta_t} (\xi^\star) = \sum_{i: \overline \theta_i^\star>0} \frac{1}{\overline \theta_{t,i}} \frac{{\xi_i^\star}^2}{2} $). Applying the preceding findings in~\eqref{eq-BC-2-almost} for $\epsilon = \vartheta/5>0$, we obtain
\begin{align}
\label{eq-BC-2-final}
\mathbb P\left(\mathcal C_t\right)& \leq \frac{1}{\epsilon^2} e^{4 t \epsilon - H(\theta^\star) +  J_{\theta^\star} (\xi^\star) }\\
& = \frac{1}{\epsilon^2} e^{-t \vartheta/5},
\end{align}
which holds for all $t\geq t_{12}=t_{12}(\omega, \epsilon, \theta^\star,D, P_0):=\max\{t_9, t_{10},t_{11}\}$. The claim of the lemma follows by the Borel-Cantelli lemma.
\end{proof}

We next combine the result of Lemma~\ref{lemma-BC-2} with the lower bounds on $q_{t,\theta_t}(D)$, $t=1,2,...$, given in Lemma~\ref{lemma-LB-on-q-t-theta}. To this end, fix an arbitrary $\epsilon>0$. Let $\Omega_1^\star$ denote the probability one set that verifies Lemma~\ref{lemma-BC-2}. We have that, for every $\omega \in \Omega_1^\star$, there exists $t_{13}=t_{13} (\omega, \epsilon, D, P_0)$ such that for all $t\geq t_{13}=t_{13}(\omega, \epsilon, D) = \max\{ t_{0}, t_{1}, t_{7}\}$
\begin{align*}
Q_t^\omega(U)&\geq q_{t,\theta_t}(D) (1-\epsilon) \frac{C_{t,\theta_t}}{C_t}\\
&\geq e^{-3 t\epsilon} e^{-t \inf_{\xi \in D} J_{\theta_t}(\xi) + H(\theta_t) - \log \rho_0} (1-\epsilon).
\end{align*}
Taking the logarithm and dividing by $t$, we obtain
\begin{equation}
\label{eq-LB-almost}
\frac{1}{t}\,\log Q_t^\omega(U) \geq -3 \epsilon -  \inf_{\xi \in D} J_{\theta_t}(\xi) + H(\theta_t) -
\log \rho_0+ \frac{\log (1-\epsilon)}{t}.
\end{equation}
As $t\rightarrow +\infty$, $\theta_t\rightarrow \theta^\star$ and we have that
$\inf_{\xi \in D} J_{\theta_t}(\xi) \rightarrow \inf_{\xi \in D} J_{\theta^\star}(\xi)$, and also
$H(\theta_t)\rightarrow H(\theta^\star)$. Thus, taking the limit in~\eqref{eq-LB-almost} yields
\begin{align*}
\liminf_{t\rightarrow +\infty} \frac{1}{t}\,\log Q_t^\omega(U) & \geq -3 \epsilon -
\inf_{\xi \in D} J_{\theta^\star}(\xi)+ H(\theta^\star) - \log \rho_0 \\
& \geq -3 \epsilon - I(\theta^\star,\xi^\star),
\end{align*}
where in the last inequality we used the fact that $\xi^\star \in D$. The latter bound holds for all $\epsilon>0$, and
hence taking the supremum over all $\epsilon>0$ yields
\begin{align*}
\liminf_{t\rightarrow +\infty} \frac{1}{t}\,\log Q_t^\omega(U) & \geq - I(\theta^\star,\xi^\star)\\
&\geq - \inf_{(\theta,\xi) \in U} I(\theta,\xi) - \alpha.
\end{align*}
Recalling that $\alpha$ is arbitrarily chosen, the lower bound is proven.

\section{Convexity of $\underline \zeta$}
\label{sec-Convexity}
In this section we prove that problem~\eqref{eq-main-LB-opt} can be reformulated as a convex optimization problem and hence it is easily solvable.
\begin{lemma}
\label{lemma-zeta-opt-cvx}
If $H(P)\geq \sum_{i=1}^N \pi_i\frac{\beta_i^2}{2}$, then the optimal value $\underline \zeta$ of~\eqref{eq-main-LB-opt} equals zero. Otherwise,~\eqref{eq-main-LB-opt} is equivalent to the following convex optimization problem and $\underline \zeta$ is computed as its optimal value:
\begin{equation}
\begin{array}[+]{lc}
{\color{black}\underset{\theta}{\mathrm{minimize}}} &  -\sum_{i,j=1}^N \theta_{ij} \log P_{ij} +  \sum_{i=1}^N {\overline \theta_i}  \frac{\beta_i^2}{2}
- 2 \sqrt {\sum_{i=1}^N {\overline \theta_i}  \frac{\beta_i^2}{2}} \sqrt {H(\theta)} \\
\mathrm{subject\; to} & \theta \in \Delta\\
\end{array}.
\label{opt-zeta-cvx}
\end{equation}
\end{lemma}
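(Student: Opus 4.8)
The plan is to dispose of the degenerate regime by an explicit optimizer and, in the main regime, to eliminate $\xi$ by partial minimization, recognize the resulting $\theta$-only problem, and verify its convexity. For the degenerate regime $H(P)\ge\sum_i\pi_i\beta_i^2/2$, I would plug into~\eqref{eq-main-LB-opt} the point $\theta^\star_{ij}=\pi_iP_{ij}$, $\xi^\star_i=\pi_i\beta_i$: since $\overline{\theta^\star}_i=\pi_i$, we get $D(\theta^\star\|P)=0$ and the second objective term vanishes, while $H(\theta^\star)=H(P)$ and $J_{\theta^\star}(\xi^\star)=\sum_i\pi_i\beta_i^2/2$, so the constraint $H(\theta^\star)\ge J_{\theta^\star}(\xi^\star)$ is exactly the hypothesis and $(\theta^\star,\xi^\star)$ is feasible; as the objective of~\eqref{eq-main-LB-opt} is a sum of two nonnegative terms, $\underline\zeta=0$.

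For the main regime, fix $\theta\in\Delta$ and minimize~\eqref{eq-main-LB-opt} over $\xi$. Substituting $u_i=\xi_i/\overline\theta_i$ on the coordinates with $\overline\theta_i>0$ and writing $\|v\|_\theta^2:=\sum_{i:\,\overline\theta_i>0}\overline\theta_iv_i^2$, this subproblem is $D(\theta\|P)+\min\{\tfrac12\|\beta-u\|_\theta^2:\ \|u\|_\theta^2\le 2H(\theta)\}$, i.e. a Euclidean projection of $\beta$ onto a ball in the weighted norm $\|\cdot\|_\theta$. Its value is $0$ if $g(\theta):=\sum_i\overline\theta_i\tfrac{\beta_i^2}{2}\le H(\theta)$, and otherwise $\tfrac12\big(\|\beta\|_\theta-\sqrt{2H(\theta)}\big)^2=\big(\sqrt{g(\theta)}-\sqrt{H(\theta)}\big)^2$, the projection being a positive rescaling of $\beta$. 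Using the identity $D(\theta\|P)=-H(\theta)-\sum_{ij}\theta_{ij}\log P_{ij}$, the $\xi$-minimized objective $V(\theta)$ equals the objective $F(\theta):=-\sum_{ij}\theta_{ij}\log P_{ij}+g(\theta)-2\sqrt{g(\theta)H(\theta)}$ of~\eqref{opt-zeta-cvx} on $\{g(\theta)\ge H(\theta)\}$, equals $D(\theta\|P)$ on $\{g(\theta)\le H(\theta)\}$, satisfies $V\le F$ throughout $\Delta$, and $V=F$ on $\{g(\theta)=H(\theta)\}$.

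It then remains to show $\min_{\theta\in\Delta}F(\theta)=\min_{\theta\in\Delta}V(\theta)=\underline\zeta$; since $F\ge V$, only the inequality $\min F\le\min V$ needs work, and this is the crux. Let $\theta_0\in\Delta$ attain $\min V$ (it exists, $V$ being continuous on the compact $\Delta$). If $g(\theta_0)\ge H(\theta_0)$ then $V(\theta_0)=F(\theta_0)\ge\min F$ and we are done; otherwise $V(\theta_0)=D(\theta_0\|P)$. In the main regime the zero-divergence type $\theta^\star$ from above satisfies $g(\theta^\star)=\sum_i\pi_i\tfrac{\beta_i^2}{2}>H(P)=H(\theta^\star)$, so $g-H$ changes sign along $[\theta_0,\theta^\star]\subseteq\Delta$ and, by the intermediate value theorem, some $\theta_1$ on this segment has $g(\theta_1)=H(\theta_1)$, whence $V(\theta_1)=F(\theta_1)=D(\theta_1\|P)$. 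Since $D(\cdot\|P)$ is convex (Lemma~\ref{lemma-H-functions-convexity}) with global minimum $0$ at $\theta^\star$, convexity along $[\theta_0,\theta^\star]$ yields $D(\theta_1\|P)\le D(\theta_0\|P)$, hence $\min F\le F(\theta_1)=D(\theta_1\|P)\le D(\theta_0\|P)=V(\theta_0)=\min V$, which gives the equivalence of~\eqref{eq-main-LB-opt} and~\eqref{opt-zeta-cvx}.

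Finally, for convexity of~\eqref{opt-zeta-cvx}: on the polytope $\Delta$ the maps $\theta\mapsto-\sum_{ij}\theta_{ij}\log P_{ij}$ and $\theta\mapsto g(\theta)=\sum_{ij}\theta_{ij}\tfrac{\beta_i^2}{2}$ are linear, while $\theta\mapsto\sqrt{g(\theta)H(\theta)}$ is concave, being the geometric mean $(a,b)\mapsto\sqrt{ab}$ — concave and coordinatewise nondecreasing on $\mathbb R_+^2$ — composed with the concave maps $g$ (linear) and $H$ (concave by Lemma~\ref{lemma-H-functions-convexity}), both nonnegative on $\Delta$; hence $F$ is convex and~\eqref{opt-zeta-cvx} is a convex program, with $\underline\zeta$ its optimal value. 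The main obstacle is the value-matching step: eliminating $\xi$ is routine, but $V$ and $F$ disagree on $\{g<H\}$, and showing that passing to the pointwise-larger $F$ preserves the optimum relies on the convexity of $D(\cdot\|P)$ together with the fact that, exactly in the main regime, $\theta^\star$ lies in $\{g>H\}$, which forces the minimum of $V$ over $\{g\le H\}$ onto the boundary $\{g=H\}$ where $V$ and $F$ coincide.
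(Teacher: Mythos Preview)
Your argument is correct and takes a genuinely different, considerably shorter route than the paper. Both proofs agree on the degenerate regime and on the partial minimization over $\xi$ (the paper reaches the same formula via a change of variables $x_i=\xi_i^2/2$ and KKT conditions, whereas your weighted-norm projection is cleaner). The real divergence is in the value-matching step. The paper splits $\Delta$ into $\{H\ge R\}$ and $\{H\le R\}$ (with $R=g$ in your notation), analyzes the first region by KKT conditions to force the optimizer onto the boundary, and then---to show that the \emph{unconstrained} convex problem~\eqref{opt-zeta-cvx} has a minimizer in $\{H\le R\}$---invokes a single-letter parametrization of candidate optimizers via the Perron value $\lambda(\alpha)$ and eigenvector $v(\alpha)$ of the matrix $M(\alpha)$, together with a separate lemma (Lemma~\ref{lemma-H-and-R-properties}, proved by a limiting-functions argument) establishing that $\alpha\mapsto H(\alpha)-\alpha^2 R(\alpha)$ is decreasing on $[0,1]$.

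Your intermediate-value-plus-convexity argument bypasses all of this: since $D(\cdot\|P)$ is convex with global minimum $0$ at $\theta^\star$, its restriction to any segment $[\theta_0,\theta^\star]$ is nonincreasing, so sliding from a putative minimizer $\theta_0\in\{g<H\}$ toward $\theta^\star\in\{g>H\}$ until hitting the boundary $\{g=H\}$ can only decrease $D$, and on that boundary $V$ and $F$ coincide. This is more elementary and entirely self-contained. What the paper's longer route buys is an explicit structural description of the optimizer~\eqref{eq-solution-form-theta} in terms of Perron data of $M(\alpha)$, which is used elsewhere (e.g., in relating~\eqref{opt-zeta-cvx} to the bound of~\cite{Agaskar15} when $\beta_i\equiv\beta$); your proof gives the equivalence of optimal values without producing that parametrization. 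Your convexity argument via the composition rule for the geometric mean is also a bit lighter than the paper's direct Hessian computation in Lemma~\ref{lemma-cvx-general}, though both are straightforward.
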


\begin{proof}

We start by taking out the dependence on vector $\xi$, which we do by solving the inner optimization in~\eqref{eq-main-LB-opt} over $\xi$, for a given $\theta$:
\begin{equation}
\begin{array}[+]{lc}
{\color{black}\underset{\xi}{\mathrm{minimize}}} &  \sum_{i: \overline \theta_i >0} \overline \theta_i \frac{(\beta_i- \frac{1}{\overline \theta_i}\xi_i)^2}{2}\\
\mathrm{subject\; to} & H(\theta)\geq \sum_{i: \overline \theta_i >0}\frac{1} {\overline \theta_i} \frac{\xi_i^2}{2}\\
\end{array}.
\label{opt-zeta-cvx-inner}
\end{equation}
Note that, for any given $\theta$, for each $i$, the corresponding optimal solution $\xi^\star_i(\theta)$ has the same sign as $\beta_i$. Hence, for each $i=1,...,N$, we can introduce the change of variables $x_i=\xi_i^2/2$ (optimal $\xi^\star$ is then obtained from optimal $x^\star$ by $\xi_i^\star  = \mathrm{sign}(\beta_i) \sqrt{2 x_i^\star}$, for any $i=1,...,N$). Optimization problem~\eqref{opt-zeta-cvx-inner} can now be written as
\begin{equation}
\begin{array}[+]{lc}
{\color{black}\underset{x}{\mathrm{minimize}}} &
\sum_{i: \overline \theta_i >0} {\overline \theta_i} \frac{\beta_i^2}{2} -
\sum_{i=1}^N {\overline \theta_i} |\beta_i| \sqrt{2x_i} + \sum_{i: \overline \theta_i >0}\frac{1}{\overline \theta_i} x_i\\
\mathrm{subject\; to} & H(\theta)\geq \sum_{i: \overline \theta_i >0} \frac{1} {\overline \theta_i}  x_i\\
& x\geq 0
\end{array}.
\label{opt-zeta-cvx-inner-ref}
\end{equation}%

It is easy to see that~\eqref{opt-zeta-cvx-inner-ref} is a convex optimization problem (in $x$) with linear constraints. If $\theta$ is such that $H(\theta)=0$, then $x^\star=0$ is a solution of~\eqref{opt-zeta-cvx-inner-ref} and thus the second term in~\eqref{eq-main-LB-opt} reduces to $\sum_{i: \overline \theta_i>0} \overline \theta_i\frac{\beta_i^2}{2}$, showing that the objective function in~\eqref{opt-zeta-cvx} is correctly evaluated (note that the third term in the objective in~\eqref{opt-zeta-cvx} vanishes for $H(\theta)=0$). Consider now those matrices $\theta \in \Delta$ such that $H(\theta)>0$. Then, there exists at least one feasible point $x$ in the interior of the constraint sets of~\eqref{opt-zeta-cvx-inner-ref}, and thus we can solve~\eqref{opt-zeta-cvx-inner-ref} by solving the corresponding KKT conditions~\cite{BoydsBook04}. We dualize only the first constraint and let $\lambda$ denote the Lagrange multiplier associated with this constraint. The corresponding Lagrangian function is given by
\begin{equation}
\label{eq-Lagrangian}
L(x,\lambda):=\sum_{i: \overline \theta_i >0}\frac{1}{\overline \theta_i} x_i - \sum_{i: \overline \theta_i >0} {\overline \theta_i} |\beta_i| \sqrt{2x_i}+ \lambda \left( \sum_{i: \overline \theta_i >0}\frac{1}{\overline \theta_i} x_i - H(\theta)\right),
\end{equation}
for $x\in \mathbb R^N$ and $\lambda \in \mathbb R$. Computing the partial derivatives with respect to $x$, we obtain, for each $i$ such that $\overline \theta_i>0$, $\frac{\partial}{\partial x_i} L(x,\lambda)= \frac{1}{\overline \theta_i} - \frac{{\overline \theta_i} |\beta_i|}{ \sqrt{2 x_i}} + \lambda \frac{1}{\overline \theta_i}$ (there is no constraint imposed on $x_i$ {\color{black} for $i$} such that $\overline \theta_i=0$). Thus the KKT conditions are:
\begin{equation}
\label{eq-KKT}
\left\{
\begin{array}{ll}
(1+\lambda)= \overline \theta_i^2 \frac{ |\beta_i|}{ \sqrt{2 x_i}},\;\mathrm{for}\;i\;\mathrm{s.t.}\;\overline \theta_i >0 \\
H(\theta)\geq \sum_{i: \overline \theta_i >0}\frac{1}{\overline \theta_i} x_i\\
\lambda \geq 0\\
\lambda \left(H(\theta)- \sum_{i: \overline \theta_i >0}\frac{1}{\overline \theta_i} x_i \right) =0
\end{array}
\right..
\end{equation}

Simple algebraic manipulations reveal that a solution to~\eqref{eq-KKT} is given by
\begin{equation}
\label{eq-solution-KKT-inner}
x_i^\star = \left\{
\begin{array}{ll}
\overline \theta_i^2 \frac{\beta_i^2}{2} \frac{H(\theta)}{{\color{black}R(\theta)}},
& \mathrm{if\;} H(\theta) < {\color{black} R(\theta)}\\
\overline \theta_i^2 \frac{\beta_i^2}{2}, & \mathrm{otherwise}
\end{array}
\right.,
\end{equation}
From~\eqref{eq-solution-KKT-inner} it can be seen that, for those $\theta$ such that $H(\theta) \geq {\color{black} R(\theta)}$, the optimal value of~\eqref{opt-zeta-cvx-inner-ref} equals $0$, and for $\theta$ such that $H(\theta) \leq {\color{black} R(\theta)}$, the optimal value
of~\eqref{opt-zeta-cvx-inner-ref} equals
\begin{equation}
\label{eq-inner-opt-objective-2}
\sum_{i=1}^N {\overline \theta_i} \frac{\beta_i^2}{2} - \sum_{i=1}^N {\overline \theta_i} \beta_i^2
\sqrt{\frac{H(\theta)}{\sum_{i=1}^N {\overline \theta_i} \frac{\beta_i^2}{2}}} + H(\theta)
= {\color{black} R(\theta)} - 2 \sqrt{ {\color{black} R(\theta)}}
\sqrt{H(\theta)}+H(\theta).
\end{equation}
We can therefore solve~\eqref{eq-main-LB-opt} by partitioning the candidate space $\theta \in \Delta$ into the following two regions:
1) $\mathcal R_1=\left\{\theta\in \Delta:\,H(\theta) \geq {\color{black} R(\theta)}\right\}$; and
2) $\mathcal R_2=\left\{\theta\in \Delta:\,H(\theta) \leq {\color{black} R(\theta)}\right\}$, and then finding the minimum, and the corresponding optimizer, among the optimal values of the two optimization problems with the above defined constraint sets. For region $\mathcal R_1$, we have to solve:
\begin{equation}
\begin{array}[+]{ll}
{\color{black}\underset{\theta}{\mathrm{minimize}}} &   D(\theta|| P) \\
\mbox{subject\; to} & H(\theta)\geq \sum_{i=1}^N {\overline \theta_i} \frac{\beta_i^2}{2}\\
& \theta \in \Delta
\end{array}.
\label{opt-zeta-cvx-1}
\end{equation}

For region $\mathcal R_2$, we have to solve
\begin{equation}
\begin{array}[+]{ll}
{\color{black}\underset{\theta}{\mathrm{minimize}}} &   - \sum_{i=1}^N \sum_{j=1}^N \theta_{ij} \log P_{ij} + \sum_{i=1}^N {\overline \theta_i} \frac{\beta_i^2}{2} - 2 \sqrt{ \sum_{i=1}^N {\overline \theta_i} \frac{\beta_i^2}{2}} \sqrt{H(\theta)} \\
\mbox{subject\; to} & H(\theta)\leq \sum_{i=1}^N {\overline \theta_i} \frac{\beta_i^2}{2}\\
& \theta \in \Delta
\end{array},
\label{opt-zeta-cvx-2}
\end{equation}
where we note that the objective is obtained by cancelling out the term $H(\theta)$ in the relative entropy,
$D(\theta||P) = - \sum_{i=1}^N \sum_{j=1}^N \theta_{ij} \log P_{ij}- H(\theta)$, with the one
in~\eqref{eq-inner-opt-objective-2}.

We show that, when $H(P) \geq R(P)=\sum_{i=1}^N \pi_i \frac{\beta_i^2}{2}$,
the corresponding solution is found by solving~\eqref{opt-zeta-cvx-1} and the optimal value of~\eqref{eq-main-LB-opt} equals zero; otherwise, solution of~\eqref{eq-main-LB-opt} is found by solving~\eqref{opt-zeta-cvx-2}.

Suppose that $H(P) \geq \sum_{i=1}^N \pi_i \frac{\beta_i^2}{2}$. It is easy to verify that,
$\theta^\star_{ij}= P_{ij} \pi_i$, $i,j=1,...,N$ is a solution to~\eqref{opt-zeta-cvx-1}. Since in this case $D(\theta||P)=0$, and since the optimal value of~\eqref{eq-main-LB-opt} is always non-negative, it follows that $\theta^\star$ is an optimizer of~\eqref{eq-main-LB-opt}.

Suppose now that $H(P) {\color{black}<} \sum_{i=1}^N \pi_i \frac{\beta_i^2}{2}$. Applying Lemma~\ref{lemma-H-functions-convexity}, we have that problem~\eqref{opt-zeta-cvx-1} is convex. Dualizing the first constraint only, the resulting KKT conditions are given as follows:
\begin{equation}
\label{eq-KKT-cvx-1}
\left\{\begin{array}{l}
(1+\lambda) \log\frac{\theta_{ij}}{{\overline \theta_i}} - \log P_{ij} +\lambda \frac{\beta_i^2}{2}= 0\\
H(\theta) \geq R(\theta) \\ %\label{eq-Slater}
\lambda\geq 0\\
\lambda\, \left(R(\theta) - H(\theta)\right)= 0\\
\theta \in \Delta
\end{array}\right.,
\end{equation}
%
%where we defined $R(\theta):=\sum_{i=1}^N \overline \theta_i \frac{\beta_i^2}{2}$, $\theta\in \mathbb R^{N\times N}$.
In order to apply the KKT conditions theorem, we analyze now the existence of Slater's point~\cite{BoydsBook04}. Suppose first that for all $\theta \in \Delta$ there holds $H(\theta)=R(\theta)$. As this region of points is already contained in the constraint set of optimization problem~\eqref{opt-zeta-cvx-2}, it follows that~\eqref{eq-main-LB-opt} can be solved by
solving~\eqref{opt-zeta-cvx-2}.

Suppose now that there exists a point $\theta\in \Delta$ such that $H(\theta)>R(\theta)$. If $\theta>0$, then $\theta$ is a Slater's point; if $\theta$ is not strictly positive then, by continuity of $H$ and $R$, we can find $\theta^\prime>0$ in the close neighborhood of $\theta$ that again satisfies $H(\theta^\prime)>R(\theta^\prime)$. By the KKT theorem, we conclude that, since such a point $\theta$ exists, then we can find a solution~\eqref{opt-zeta-cvx-1} by solving the KKT conditions~\eqref{eq-KKT-cvx-1}. {\color{black} Denote a solution of~\eqref{eq-KKT-cvx-1} by $(\theta^\star,\lambda^\star)$. Since we assumed that $H(P)< \sum_{i=1}^N \pi_i \frac{\beta_i^2}{2}$, the first three KKT conditions imply that $\lambda^\star>0$ (if $\lambda^\star$ were equal to zero, then from the first condition in~\eqref{eq-KKT-cvx-1} we have that, for each $i,j$, $\theta^\star_{ij}/\overline {\theta^\star}_i=P_{ij}$. But this then violates the second condition which requires that $H(\theta^\star)=H(P)\geq R(\theta^\star)=R(P)$)}. When combined with the fourth KKT condition, this yields that, at the optimal point $\theta^\star$, there holds $H(\theta^\star)= R(\theta^\star)$. Similarly as in the analysis in the preceding paragraph, this region of points is already contained in the constraint set of optimization problem~\eqref{opt-zeta-cvx-2} and hence a solution of~\eqref{eq-main-LB-opt} can be found by solving~\eqref{opt-zeta-cvx-2}.

In the following, we show in fact a stronger claim: when $H(P)< \sum_{i=1}^N \pi_i \frac{\beta_i^2}{2}$,
solution of~\eqref{eq-main-LB-opt} is found by solving the convex relaxation~\eqref{opt-zeta-cvx}
of~\eqref{opt-zeta-cvx-2}. We do this by  showing that, under the latter condition, there exists a solution $\theta^\star$ of~\eqref{opt-zeta-cvx} that satisfies
\begin{equation}\label{eq-theta-star-exists}
H(\theta^\star) \leq \sum_{i=1}^N {\overline \theta_i}^\star \frac{\beta_i^2}{2}.
\end{equation}

Note that, because~\eqref{opt-zeta-cvx} is convex, with affine, non-empty constraints, the optimizer is found as a solution to the following KKT conditions:
\begin{equation}
\label{eq-KKT-cvx-2}
\left\{\begin{array}{l}
 - \log P_{ij} + \frac{\beta_i^2}{2}\left(1 -   \frac{\sqrt{H(\theta)}}{ \sqrt{R(\theta)}}\right) +
 \frac{\sqrt{R(\theta)}}{ \sqrt{H(\theta)}} \log\frac{\theta_{ij}}{{\overline \theta_i}}  +  \mu_i - \mu_j+\nu= 0\\
\theta \in \Delta\\
\end{array}\right..
\end{equation}
where $\mu_i$ is the Lagrange multiplier corresponding to the constraint $e_i^\top \theta 1 = 1^\top \theta e_i$, for $i=1,..., N$, and $\nu$ is the Lagrange multiplier corresponding to the constraint $1^\top \theta 1 = 1$ (recall the definition of set $\Delta$ in~\eqref{def-Delta}). Denoting
\begin{equation}
\label{def-alpha}
\alpha = \frac{\sqrt {H(\theta)}}{\sqrt {R(\theta)}},
\end{equation}
we obtain from~\eqref{eq-KKT-cvx-2} that a solution of~\eqref{opt-zeta-cvx} must be of the form
\begin{equation}
\label{eq-solution-form-theta}
\frac{\theta_{ij}}{{\overline \theta_i}} = \frac{P_{ij}^\alpha e^{- \alpha (1-\alpha) \frac{\beta_i^2}{2}} v_j(\alpha) }{  v_i(\alpha) \lambda(\alpha)},
\end{equation}
where $\lambda(\alpha)$ and $v(\alpha)$ are, respectively, the Perron value and the right Perron vector of the matrix $M(\alpha)$ defined by
\begin{equation}
\label{def-M-alpha}
[M(\alpha)]_{ij}:=P_{ij}^\alpha e^{- \alpha (1-\alpha) \frac{\beta_i^2}{2}},
\end{equation}
for $i,j=1,...,N$, and $\alpha \in \mathbb R$. Hence, we have that the set of solution candidates can be parameterized by a parameter $\alpha \in \mathbb R$, and if for some $\alpha$ condition~\eqref{def-alpha} is satisfied, then the corresponding $\theta$ is a solution of~\eqref{opt-zeta-cvx}.

For $\alpha \in \mathbb R$, let $H(\alpha)$ denote the value of the entropy function $H$ for matrix $\theta$ parameterized with $\alpha$ as in~\eqref{eq-solution-form-theta}, i.e., $H(\alpha)=H(\theta(\alpha))$, $\alpha\in \mathbb R$, and define, similarly, $R(\alpha)=R(\theta(\alpha))$, $\alpha\in \mathbb R$. The following lemma is the core of the proof of the existence of $\theta^\star$ that satisfies~\eqref{eq-theta-star-exists}. The proof of this result is given in Appendix~\ref{app-D}; it is based on the limiting sequence of functions technique, see, e.g.,~\cite{Vasek80}.

\begin{lemma}
\label{lemma-H-and-R-properties}
There holds
\begin{enumerate}
\item \label{part-1-H-and-R-properties} $H(1)= H(P)$ and $R(1)= \sum_{i=1}^N \pi_i \frac{\beta_i^2}{2}$;
\item \label{part-2-H-and-R-properties} $H(0) = \log \rho_0$;
\item \label{part-3-H-and-R-properties} $H(\alpha) - \alpha^2 R(\alpha)$ is decreasing for $\alpha \in [0,1]$.
\end{enumerate}
\end{lemma}

Corollary~\ref{corollary-alpha-star} is immediately obtained from parts~\ref{part-1-H-and-R-properties},
\ref{part-2-H-and-R-properties}, and~\ref{part-3-H-and-R-properties} of Lemma~\ref{lemma-H-and-R-properties} and the fact that $\rho_0>1$ (proven in Lemma~\ref{lemma-C-t-theta-bounds}). In particular, by the assumption that $H(P)<\sum_{i=1}^N \pi_i\frac{\beta_i^2}{2}$ and the fact that $\log \rho_0>0$, we have that $\alpha \mapsto H(\alpha)- \alpha^2 R(\alpha)$ is strictly positive at $\alpha=0$, strictly negative at $\alpha=1$, and is decreasing on the interval $[0,1]$. Hence, the claim of Corollary~\ref{corollary-alpha-star} follows.

\begin{corollary}\label{corollary-alpha-star}
If $H(P)\leq \sum_{i=1}^N \pi_i \frac{\beta_i^2}{2}$, then there exists $\alpha^\star \in (0,1)$ such that $H(\alpha^\star)= {\alpha^\star}^2 R(\alpha^\star)$.
\end{corollary}

Corollary~\ref{corollary-alpha-star} asserts that, when $H(P) \leq \sum_{i=1}^N \pi_i \frac{\beta_i^2}{2}$, then there must exist $\alpha^\star \in (0,1)$ that satisfies~\eqref{def-alpha}. We then have $H(\theta(\alpha^\star))\leq R (\theta(\alpha^\star))$, which proves that problems~\eqref{opt-zeta-cvx} and~\eqref{opt-zeta-cvx-2} have the same optimal value. This finally implies that, when $H(P) \leq \sum_{i=1}^N \pi_i \frac{\beta_i^2}{2}$, then $\underline \zeta$ is computed by solving~\eqref{opt-zeta-cvx}.

We now prove that~\eqref{opt-zeta-cvx} is convex. The set $\Delta$ is defined through linear equalities and hence is convex. Also, the first two terms of the objective function are linear, hence convex. Thus, if we show that the last term of the objective is convex, we prove the claim.  The latter follows as a corollary of the following more general result, which we prove here.

\begin{lemma}
\label{lemma-cvx-general} Let $g,h: \mathbb R^d \mapsto \mathbb R$ be two non-negative, concave functions. Then, function $f: \mathbb R^d \mapsto \mathbb R$ defined by
\begin{equation}
\label{eq-def-f}
f(x):= - \sqrt{g(x)} \sqrt{h(x)},\;\;\; x\in \mathbb R^d,
\end{equation}
is convex.
\end{lemma}
The proof of Lemma~\ref{lemma-cvx-general} is given in Appendix~\ref{app-D}.

In order to apply Lemma~\ref{lemma-cvx-general} we need to verify that functions $\sum_{i=1}^N {\overline \theta_i} \frac{\beta_i^2}{2}$ and $H(\theta)$ are concave in $(\theta, {\overline \theta})$. The former is linear and thus concave, and concavity of the latter is proven in Lemma~\ref{lemma-H-functions-convexity}.
\end{proof}

{\color{black}
\section{Numerical results}
\label{sec-Num-Results}
The goal of this section is to show how the lower bound on the error exponent can be computed efficiently, and also to verify its tightness. To this end, in Subsection~\ref{subsec-FrankWolfe} we first illustrate how problem~\eqref{opt-zeta-cvx} can be solved numerically. For this purpose, we chose Frank-Wolfe, or conditional gradient method,~\cite{Jaggi13} (see the details below), but one can also use other algorithms, such as projected gradient~\cite{NocedalWright06}. In Subsection~\ref{subsec-Num-Comparison} we then illustrate tightness of the bound~\eqref{opt-zeta-cvx} by comparing the numerical solution with the true error exponent value obtained through Monte Carlo simulations.

\subsection{Frank-Wolfe optimization}
\label{subsec-FrankWolfe}

Frank-Wolfe method is an iterative projection-free, gradient based method, which at each iteration minimizes the linear approximation of the function (at the current candidate point) over the given domain. To explain the method in our setup, denote the cost function in~\eqref{lemma-zeta-opt-cvx} and its gradient, respectively by $\mathcal F:\mathbb R_{+}^{N\times N}\mapsto \mathbb R$ and $\nabla \mathcal F: \mathbb R_{+}^{N\times N}\mapsto \mathbb R^{N\times N}$.
%\begin{equation}\label{eq-mathcal-F}
%\mathcal F (\theta)= -\sum_{i,j=1}^N \theta_{ij} \log P_{ij} +  \sum_{i=1}^N {\overline \theta_i}  \frac{\beta_i^2}{2}
%- 2 \sqrt {\sum_{i=1}^N {\overline \theta_i}  \frac{\beta_i^2}{2}} \sqrt {H(\theta)}.
%\end{equation}
%and let also $\nabla \mathcal F\left( \theta\right)= $ denote the gradient matrix of $\mathcal F$ at an arbitrary $\theta\in \mathbb R_{+}^{N\times N}$.
It is easy to show that $\partial \mathcal F\left(\theta \right)/\partial \theta_{ij}= - \log P_{ij} + \beta_i^2/2 - {\sqrt{H(\theta)}}/{\sqrt{R(\theta)}} \beta_i^2/2 + {\sqrt{R(\theta)}}/{\sqrt{H(\theta)}} \log \left(\theta_{ij}/\bar \theta_i\right)$, at $\theta\in \mathbb R_{+}^{N\times N}$. The pseudocode of the Frank-Wolfe algorithm is given in Algorithm~\ref{Alg-Frank-Wolfe}.
\begin{algorithm}
 \caption{Frank-Wolfe algorithm} \label{Alg-Frank-Wolfe}
 \begin{algorithmic}[1]
 \renewcommand{\algorithmicrequire}{\textbf{Input:}}
 \renewcommand{\algorithmicensure}{\textbf{Output:}}
 \renewcommand{\algorithmicprint}{\textbf{break}}
 \REQUIRE $P$, $\beta_i$, $i=1,...,N$, K, $\epsilon$
 \ENSURE  $\theta^{\mathrm{opt}}$
 %\\ \textit{Initialisation}
  \STATE \textit{Initialisation} : Let $\theta^{(0)} \in \Delta$.
 %\\ \textit{LOOP Process}
  \FOR {$k = 1$ to $K$}
  \STATE Compute $G_k=\nabla \mathcal F\left( \theta^{(k)}\right)$
  \STATE Compute $X: = {\arg \min}_{X\in \Delta} \mathrm{trace} \left(X^\top G_k\right)$
  \STATE Perform line search
  \\$\gamma:={\arg \min}_{\gamma\in [0,1]} \mathcal F\left(\theta^{(k)} + \gamma\left( X-\theta^{(k)}\right)\right)$
  \STATE Update $\theta^{(k+1)}:= (1-\gamma)\theta^{(k)} + \gamma X $
  \IF { $\left| \mathcal F(\theta^{(k+1)}) - \mathcal F(\theta^{(k)})\right|\leq \epsilon$}
  \PRINT
  \ENDIF
  \ENDFOR
 \RETURN $\theta^{\mathrm{opt}}=\theta^{(k)}$
 \end{algorithmic}
 \end{algorithm}

As in our case the domain is linear (recall~\eqref{lemma-zeta-opt-cvx} and the definition of $\Delta$ in~\eqref{def-Delta}), each Frank-Wolfe optimization step is a linear program (LP), which we state below for completeness.
\begin{equation}
\begin{array}[+]{lc}
{\color{black}\underset{X}{\text{minimize}}} &  \mathrm{trace}\left( X^\top G_k \right) \\
\mathrm{subject\; to} & 1^\top X 1= 1\\
& e_i^\top X 1= 1^\top X e_i,\mathrm{\;for\;}i=1,...,N\\
& X_{ij}=0\mathrm{\;if\;}P_{ij}=0,\mathrm{\;for\;}i,j=1,...,N\\
& X \in\mathbb R_{+}^{N\times N}.
\end{array}.
\label{eq-WF-LP-step}
\end{equation}
To exploit the sparsity of $\theta$ (which follows from the condition in $\Delta$ that assigns to $\theta$ the same sparsity structure as in the transition matrix $P$), in our implementation we represented $\theta$, and hence $X$, as column vectors; e.g., for a chain on $N$ nodes, the number of optimization variables in~\eqref{eq-WF-LP-step} is then $3N$ (as opposed to $N^2$, if we were to optimize an $N$ by $N$ matrix).

\subsection{Numerical comparison}
\label{subsec-Num-Comparison}
\mypar{Simulation setup} We consider two different simulation setups. In the first setup, we consider a chain graph on $N=50$ nodes, with transition probabilities to the neighboring nodes equal to $1/3$, and the probability of staying at the same node hence equal also to $1/3$. In the second setup, we consider a star graph on $N=51$ nodes, again with uniform transitions at all nodes (including the nodes themselves). In both setups, we assume that there are two possible mean values: $\beta_1>0$ and $\beta_2>0$. For the chain graph, we let the two mean values alternate along the chain. Similarly, with the star graph, the two values alternate on the leafs, while the central node has mean value equal to $\beta_2$.  We set $\beta_1=1$ and vary $\beta_2$ from $0$ to $20$.

\mypar{Monte Carlo paths}  To compute the true error exponent, we use the fact the error exponent $\zeta$ equals to the limit $\kappa$ of the scaled log-likelihood ratios~\eqref{eq-asymptotic-KL-rate}, together with the fact that $L_t(\mathbfcal X^t)$ can be expressed iteratively trough the product~\eqref{eq-zeta-via-Lyap-exp}. That is, we let the product in~\eqref{eq-zeta-via-Lyap-exp} grow over iterations $t$, and compute the corresponding Monte Carlo path of $1/t \log L_t(\mathbfcal X^t)$ (for higher values of $\beta_2$, we use a renormalization of the product~\eqref{eq-zeta-via-Lyap-exp}, to prevent numerical underflow). As an estimate for the true error exponent, we used $1/T \log L_T(\mathbfcal X^t)$, for $T=100000$.

\mypar{Threshold values $\beta_2^\star$} For the chain topology, from the fact that $P$ is symmetric, we have that the stationary distribution equals $\pi=(1/N,...,1/N)^\top$. Hence, from~\eqref{eq-detectability-condition} we obtain that $\beta_2^\star= {\sqrt{ 4 H(P)- \beta_1^2}}=1.8424$, where $H(P)=\log(3)=1.0986$. For the star topology, it can be shown that the stationary distribution equals $\pi=( N/(3N-2), 2/(3N-2),...,2/(3N-2))^\top$. Hence, $\beta_2^\star= {\sqrt{ 2 \frac{3N-2}{2N-1}H(P) - \frac{N-1}{2N-1}\beta_1^2} }=2.2019$, where $H(P)=1.7870$.

\begin{figure}[thp]
	\centering
	\includegraphics[height=7cm, trim={4cm 8.2cm 4cm 8.5cm},clip]{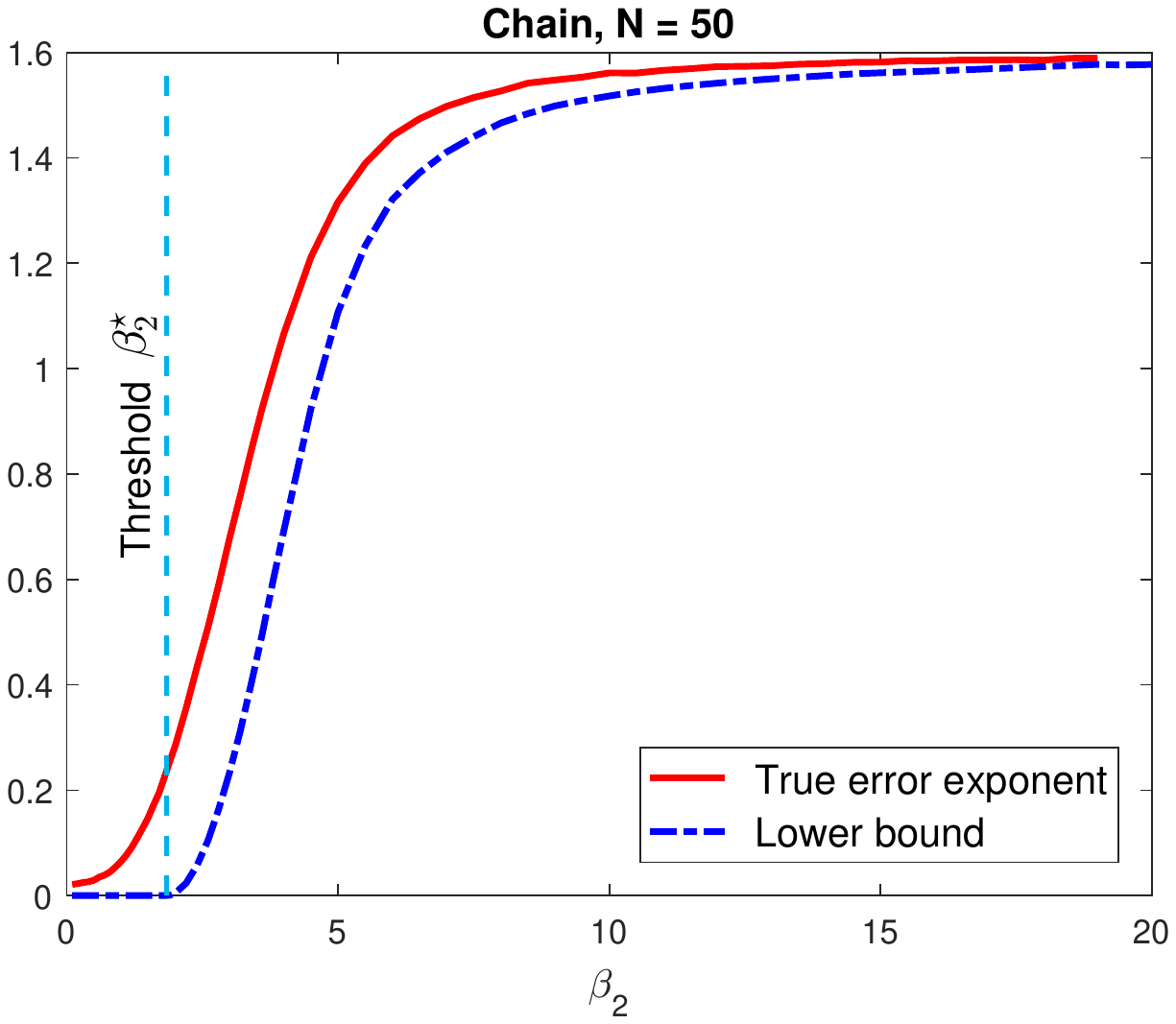}
    \includegraphics[height=7cm, trim={4cm 8.2cm 4cm 8.5cm},clip]{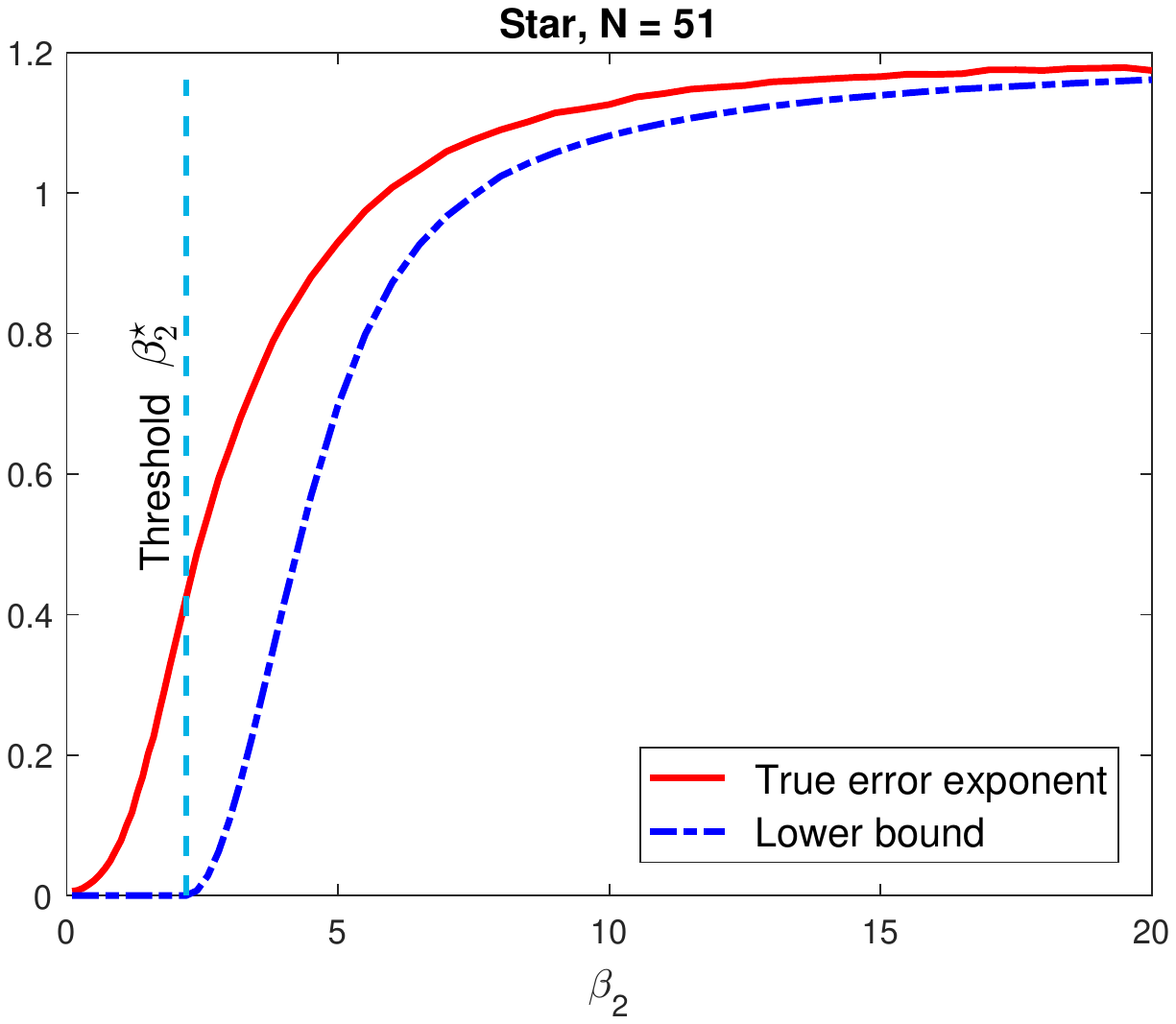}
	\caption{Simulation setup: $\mu_1=1$, $\mu_2 \in [0,20]$. Red full line plots the true error exponent; blue dotted line plots the lower bound~\eqref{opt-zeta-cvx}; light blue vertical line marks the threshold value $\beta_2^\star$ obtained from~\eqref{eq-detectability-condition}. \textbf{Top}: Chain graph on $N=50$ nodes with alternating $\beta_1$ and $\beta_2$ mean values. \textbf{Bottom}: Star graph on $N=51$ nodes with the central node's mean equal to $\beta_2$ and equal number of leaf nodes having mean values $\beta_1$ and $\beta_2$.}
	\label{Fig-Comparison}
\end{figure}

Results are shown in Figure~\ref{Fig-Comparison}. From Figure~\ref{Fig-Comparison} it can be observed that both the error exponent and the lower bound curve increase with the increase of $\beta_2$, which is expected. Also, the lower bound curve closely follows the shape of the curve corresponding to the true error exponent, while at higher values of $\beta_2$ the two curves are very close. It is interesting to note that both curves saturate (and, moreover, as it seems, at the same value). This is due to the fact that an adversarial nature can always pick the sequence of states whose Markov type is such that the transitions from low SNR to high SNR nodes occur rarely (and the low SNR nodes are more inert). To illustrate this effect, denote $Q^\star_{ij}=\theta^\star_{ij}/\overline \theta^\star_i$, where $\theta^\star$ is the solution of~\eqref{opt-zeta-cvx} and note that $Q^\star$ could be interpreted as the empirical transition matrix of the mentioned adversarial sequence. For example, with $\beta_2=4$, we obtain, for a low SNR node $i$ that the probability of staying in the same state is $Q^\star_{i\,i}=0.6493$, while the probabilities of transitioning to the high SNR neighbors are much lower, equal to $Q^\star_{i\,i-1}=Q^\star_{i\,i+1}=0.2621$. With a higher value of $\beta_2$, this difference is even more extreme, and, e.g., for $\beta_2=15$ the transition probabilities become $Q^\star_{ii}=0.9998$, while $Q^\star_{i\,i+1}=Q^\star_{i\,i-1}=0.0001$.

}

\section{Conclusion}
\label{sec-Conclusion}
We addressed the problem of detecting a random walk on a graph, when the observation interval grows large, under the assumption of heterogeneous graph nodes. We illustrate our methodology by devising a random access mechanism, which exploits Markov chain frequency hopping across different access channels to combat possibly unknown, frequency selective fading. This could be of interest for future Narrow Band IoT communications standards for which a similar random access model and similar unpredictable communication environments are envisioned.

Using the notion of Gauss-Markov type, introduced in Section~{III}, we provided in Theorem~{VIII} a lower bound on the Neyman Pearson error exponent. The bound is in the form of an optimization problem, and it has a very interesting interpretation in terms of the Gauss-Markov type: over all $H_0$-feasible Gauss-Markov types, the solution is the one which has the highest probability under $H_1$ state of nature, i.e., under the presence of the random walk. We prove convexity of the lower bound {\color{black} and illustrate how it can be efficiently evaluated using the Frank-Wolfe method for convex optimization.  Finally, we provide numerical results that the derived bound closely follows the true error exponent and is very tight in certain regions of SNRs.}

%\appendix
\section*{Appendix}
\renewcommand{\thesubsection}{\Alph{subsection}}

\subsection{ \color{white}A}
\label{app-A}
%\lipsum[2]

\mypar{Proof of Proposition~\ref{prop-UB}}

\begin{proof}
By Jensen's inequality we have
\begin{equation*}
\sum_{s^t \in \mathcal S^t} P(s^t) e^{\sum_{k=1}^t \beta_{s_k} X_{s_k,k}- \frac{\beta_{s_k}^2}{2}} \geq
e^{ \sum_{s^t  \in \mathcal S^t} P(s^t)\left\{ \sum_{k=1}^t \beta_{s_k} X_{s_k,k}- \frac{\beta_{s_k}^2}{2}  \right\}}.
\end{equation*}
Taking the logarithm and using monotonicity of the expectation, the latter inequality then implies
\begin{align}
\label{eq-Jensen-2}
\zeta &\leq - \frac{1}{t}\mathbb E_1 \left[  \sum_{s^t \in \mathcal S^t} P(s^t)\left\{ \sum_{k=1}^t \beta_{s_k} X_{s_k,k}- \frac{\beta_{s_k}^2}{2}  \right\} \right]\nonumber\\
&= - \frac{1}{t} \sum_{s^t \in \mathcal S^t} P(s^t)\left\{ \sum_{k=1}^t \beta_{s_k} \mathbb E_1[X_{s_k,k}]- \frac{\beta_{s_k}^2}{2}  \right\} \nonumber\\
&= \frac{1}{t} \sum_{s^t \in \mathcal S^t} P(s^t)\sum_{k=1}^t \frac{\beta_{s_k}^2}{2}.
\end{align}

We now observe that the sum in~\eqref{eq-Jensen-2} equals the expected value of the sum of the SNR values at nodes visited until time $t$, i.e.,
\begin{align}
\label{eq-expected-value-SNR-until-t}
\mathbb E_1\left[ \sum_{k=1}^t \frac{\beta_{S_k}^2}{2}\right] & = \sum_{s^t} P(s^t)\mathbb E_1\left[ \sum_{k=1}^t \frac{\beta_{S_k}^2}{2}\right |  S^t=s^t] \\
&= \sum_{s^t} P(s^t)\sum_{k=1}^t \frac{\beta_{s_k}^2}{2}.
\end{align}
On the other hand, the left hand-side in~\eqref{eq-expected-value-SNR-until-t} can also be written as
\begin{align}
\label{eq-expected-value-SNR-until-t-alternative}
\mathbb E_1\left[ \sum_{k=1}^t \frac{\beta_{S_k}^2}{2}\right] & =
\mathbb E_1\left[ \sum_{k=1}^t \sum_{i=1}^N 1_{\left\{ S_k=i\right\}} \frac{\beta_i^2}{2}\right]\nonumber\\
&= \sum_{k=1}^t \sum_{i=1}^N \mathbb P_1\left(S_k=i\right) \frac{\beta_i^2}{2}\nonumber\\
&= \sum_{i=1}^N \left\{\sum_{k=1}^t \mathbb P_1\left(S_k=i\right) \right\} \frac{\beta_i^2}{2}.\nonumber
\end{align}
To prove~\eqref{eq-main-UB}, it only remains to show that, for any $i=1,..,N$,
\begin{equation}
\label{eq-show-by-Cesaro}
\lim_{t\rightarrow +\infty} \frac{1}{t}\sum_{k=1}^t \mathbb P_1\left(S_k=i\right) = \pi_i.
\end{equation}
It is easy to see that $\mathbb P_1\left(S_k=i\right) = v^\top P^{k-1} e_i$. Further, since $P$ is irreducible and aperiodic, with stationary distribution $\pi$, we have that the Ces\'aro averages of $P^k$ converge to $1 \pi^\top$, i.e.,
\begin{equation*}
\lim_{t\rightarrow +\infty} \frac{1}{t}\sum_{k=1}^t P^{k-1} = 1 \pi^\top.
\end{equation*}
Finally, noting that $v^\top 1 = 1$ establishes~\eqref{eq-show-by-Cesaro} and hence proves~\eqref{eq-main-UB}.
\end{proof}

\mypar{Proof of Lemma~\ref{lemma-H-functions-convexity}}
\begin{proof}
We show that $H(\theta)$ is concave by proving that its Hessian is negative semi-definite at  $\theta\geq 0$ (and thus for all $\theta \in \Delta$). Recall that
\begin{equation*}
H(\theta)= - \sum_{i=1}^N \sum_{j=1}^N \theta_{ij} \log \frac{\theta_{ij}}{ \sum_{k=1}^N \theta_{ik}}.
\end{equation*}
It can be shown that the first and second order partial derivatives of $H(\theta)$ are given by
\begin{equation*}
\frac{\partial} {\theta_{ij}} H(\theta) = - \log \theta_{ij} + \log \left(\sum_{k=1}^N \theta_{ik}\right), \mathrm{\;for\;every\;}(i,j),
\end{equation*}
\begin{equation}
\label{eq-partial-2nd-order}
\frac{\partial^2} {\theta_{ij} \theta_{lm}} H(\theta) =
\left\{ \begin{array}{ll}
- \frac{1}{\theta_{ij}} + \frac{1} {\sum_{k=1}^N \theta_{ik}}, & \mathrm{if\;}i=l, j=m\\
\frac{1} {\sum_{k=1}^N \theta_{ik}}, & \mathrm{if\;} i= l, j\neq m\\
0, & \mathrm{otherwise}
\end{array}\right..
\end{equation}
From~\eqref{eq-partial-2nd-order} we see that the Hessian matrix $\nabla^2 H(\theta)$ takes the following block-diagonal form
\begin{equation*}
\nabla^2 H(\theta) = \mathrm{Diag}\left(B_1,...,B_N\right),
\end{equation*}
where, for each $i$, $B_i$ is an $N$ by $N$ matrix given by
\begin{equation*}
B_i = - \mathrm{Diag} \left(\frac{1}{\theta_{i1}}, \frac{1}{\theta_{i2}}, ..., \frac{1}{\theta_{iN}}\right) +
\frac{1}{\sum_{k=1}^N \theta_{ik}} 1 1^\top.
\end{equation*}
It suffices to show that each $B_i$ is, at each $\theta$, negative semi-definite. To this end, fix an arbitrary $i$, and pick an arbitrary vector $q\in \mathbb R^N$. We have,
\begin{equation}
\label{eq-quadratic-form-B_i}
q^\top B_i q = - \sum_{j=1}^N \frac{q_j^2}{\theta_{ij}} + \frac{(1^\top q)^2}{\sum_{k=1}^N \theta_{ik}}.
\end{equation}
It suffices to restrict the attention to an arbitrary $q$ such that $q\geq 0$ and $1^\top q=1$. Then,~\eqref{eq-quadratic-form-B_i} simplifies to
\begin{equation}
\label{eq-quadratic-form-B_i-simplified}
q^\top B_i q = - \sum_{j=1}^N \frac{q_j^2}{\theta_{ij}} + \frac{1}{\sum_{k=1}^N \theta_{ik}}.
\end{equation}
Exploting now Jensen's inequality with convex combination $q_1,...,q_N$, applied to the function $z\mapsto 1/z$ at points $z_j= \theta_{ij}/q_j$, yields
\begin{equation}
\sum_{j=1}^N q_j \frac{q_j}{\theta_{ij}} \geq \frac{1}{ \sum_{j=1}^N q_j \frac{\theta_{ij}}{ q_j }} = \frac{1}{\sum_{j=1}^N \theta_{jk}}.
\end{equation}
Thus, the quadratic form $q\top B_i q$ is at every $\theta$ smaller or equal than $0$, proving that $B_i$ is negative semi-definite. Since $i$ was arbitrary, it follows that $\nabla^2 H(\theta)$ is negative semi-definite at every $\theta$, which finally proves that $H$ is concave.

To prove part~\ref{part-2-H-relative-convex}, we only need to note that $H(\cdot||P)$ can be represented as a sum of three convex functions, $-H(\theta)$, $\sum_{ij}\theta_{ij} \log P_{ij}$, which is linear, hence convex, and the last is a sum of convex functions $\overline {\theta}_i\log \overline {\theta}_i$.
\end{proof}

\mypar{Proof of Lemma~\ref{lemma-C-t-theta-bounds}}

\begin{proof}
We first prove part~\ref{part-C-t}. Due to the fact that the initial distribution of the Markov chain is $\pi>0$ (and hence the chain can start from any state), it is easy to see that $C_t = 1^\top P_0^{t-1} 1$. Thus,
\begin{equation*}
\left\|P_0^{t-1}\right\|_{\infty} \leq C_t \leq N \left\|P_0^{t-1}\right\|_{\infty}.
\end{equation*}
Let $\rho$ denote the spectral radius of a square matrix and let $\rho_0$ denote the spectral radius of $P_0$. Then, by the property of spectral radius that asserts that $\rho(A)\leq \|A\|$ for any square matrix $A$ and for any matrix norm $\|\cdot\|$, we have that $\rho_0^t = \rho\left(P_0^t\right) \leq \left\|P_0^t\right\|_{\infty}$, for any $t\geq 1$. Also, by Gelfand's formula (see Theorem 8.5.1 in~\cite{MatrixAnalysis}), for any $\epsilon>0$, for all $t$ greater than some $t_0=t_0(\epsilon)$, there holds $\left\|P_0^t\right\|_{\infty} \leq \rho_0^t e^{t\epsilon}$. Summarizing, the result follows.

To prove that $\rho_0>1$, note that, since $P$ is irreducible and aperiodic, it must be that $P_0$ is also irreducible and aperiodic. Then, there exists a finite positive integer $M$ such that $P_0^M$ is strictly positive. Since each entry of $P_0^M$ is at least one, using the property that the spectral radius of an arbitrary square matrix is greater or equal than its minimal row sum~\cite{MatrixAnalysis}, we have that $\rho(P_0^M) \geq N$. It follows that $\rho(P_0) = (\rho(P_0^M))^{1/M} >1$.

To prove part~\ref{part-C-t-theta}, we use the following result, the proof of which can be found in~\cite{Hollander00}, Chapter II, Section II.2; we note that the proof of this result is based on finding the number of Euler circuits on the graph with the set of vertices $V=\{1,...,N\}$ and with the set of arcs such that the number of arcs from $i$ to $j$ equals $k_{ij}$, for each pair $(i,j)$, $i,j=1,...,N$.
\begin{lemma}
\label{lemma-via-Euler}
For each $t\geq 1$ and $\theta \in \Delta_t$ there holds
\begin{equation}
\label{eq-via-Euler-counts}
\frac{\prod_{i:\,{k_i}>0} ({ k_i}-1)!}{\prod_{i,j} k_{ij}!} \leq C_{t,\theta}
\leq N \frac{\prod_{i:\,{ k_i}>0} { k_i}!}{\prod_{i,j} k_{ij}!},
\end{equation}
where $k_{ij}=k_{ij}(\theta)$, $i,j=1,...,N$, is the matrix that verifies the fact that $\theta$ belongs to $\Delta_t$, and, for each $i$, $ k_i = \sum_{j=1}^N k_{ij}$.
\end{lemma}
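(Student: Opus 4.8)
The plan is to reconstruct the classical Euler-circuit counting argument (the one carried out in~\cite{Hollander00}, Chapter~II). Fix $t\ge 1$ and $\theta\in\Delta_t$, and let $k_{ij}=\theta_{ij}t$ and $k_i=\overline\theta_i t$ be the associated integers; without loss of generality I would assume $\mathcal S^t_\theta\neq\emptyset$, since otherwise $C_{t,\theta}=0$ and such a $\theta$ is irrelevant (in that case the support of $\theta$ fails to be strongly connected, i.e.\ $\theta\notin\Delta^\star_t$ in the notation of the earlier footnote). The first step is to associate to $\theta$ the directed multigraph $G_K$ on the vertex set $V'=\{i:k_i>0\}$ having exactly $k_{ij}$ parallel arcs from $i$ to $j$. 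The balance condition $\sum_j k_{ij}=\sum_j k_{ji}$ that is part of the definition of $\Delta_t$ says that in $G_K$ every vertex has in-degree equal to out-degree, both equal to $k_i$, so $G_K$ is Eulerian; and since some sequence realizes $\theta$, $G_K$ is strongly connected.

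The second step is the bookkeeping that links $\mathcal S^t_\theta$ to Euler circuits of $G_K$. Regard the $k_{ij}$ parallel $i\to j$ arcs as labelled. A sequence $s^t=(s_1,\ldots,s_t)\in\mathcal S^t_\theta$ determines the cyclic list of vertex steps $(s_1\!\to\!s_2),\ldots,(s_{t-1}\!\to\!s_t),(s_t\!\to\!s_1)$, of which exactly $k_{ij}$ are of type $i\to j$; distributing the labelled arcs over the steps of each type produces $\prod_{i,j}k_{ij}!$ labelled closed Euler walks, each carrying a marked starting position (inherited from $s_1$). Conversely every such labelled walk arises this way, so the set of labelled closed Euler walks with a marked starting position maps $\prod_{i,j}k_{ij}!$-to-one onto $\mathcal S^t_\theta$. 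Moreover a labelled Euler circuit viewed up to rotation has exactly $t$ such rooted representatives (its $t$ arcs are pairwise distinct, so rotation acts freely). Writing $\mathrm{ec}(G_K)$ for the number of Euler circuits of $G_K$ counted up to rotation, this yields the identity $C_{t,\theta}\prod_{i,j}k_{ij}! = t\cdot\mathrm{ec}(G_K)$.

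The third step is to invoke the BEST theorem: for a connected Eulerian digraph, $\mathrm{ec}(G_K)=t_w(G_K)\prod_{i\in V'}(k_i-1)!$, where $t_w(G_K)$ is the number of spanning arborescences of $G_K$ oriented toward an arbitrary fixed root $w\in V'$ (independent of the choice of $w$). Combining the last two displays gives the exact formula
\begin{equation*}
C_{t,\theta}=\frac{t\,t_w(G_K)\prod_{i:\,k_i>0}(k_i-1)!}{\prod_{i,j}k_{ij}!},
\end{equation*}
so the lemma reduces to bounding the prefactor $t\cdot t_w(G_K)$ between $1$ and $N\prod_{i:k_i>0}k_i$. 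The lower bound is immediate: $t\ge 1$ and $t_w(G_K)\ge 1$, since a connected Eulerian digraph has at least one arborescence toward $w$. For the upper bound, selecting an outgoing arc at each vertex of $V'\setminus\{w\}$ over-counts arborescences, so $t_w(G_K)\le\prod_{i\in V',\,i\neq w}k_i$; choosing $w$ to be a vertex of maximal $k_i$ makes $t=\sum_{i\in V'}k_i\le|V'|k_w\le Nk_w$, hence $t\cdot t_w(G_K)\le N\prod_{i\in V'}k_i$, which yields the claimed upper bound on $C_{t,\theta}$.

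I expect the main obstacle to be the combinatorial bookkeeping of the second step — pinning down the factor $t$ from the choice of starting position and the factor $\prod_{i,j}k_{ij}!$ from relabelling parallel arcs, and verifying that the strong-connectivity hypothesis required by the BEST theorem is genuinely in force (which is exactly the content of $\mathcal S^t_\theta\neq\emptyset$). Once the exact identity for $C_{t,\theta}$ is in hand, the two-sided estimate is elementary from $1\le t\cdot t_w(G_K)\le N\prod_{i:k_i>0}k_i$.
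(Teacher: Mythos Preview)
Your argument is correct and is precisely the Euler-circuit/BEST-theorem derivation that the paper invokes: the paper does not spell out a proof but refers to~\cite{Hollander00}, Chapter~II, noting that the bound comes from counting Euler circuits on the multigraph with $k_{ij}$ arcs from $i$ to $j$. Your reconstruction---the exact identity $C_{t,\theta}\prod_{i,j}k_{ij}!=t\cdot\mathrm{ec}(G_K)$ followed by BEST and the elementary two-sided estimate $1\le t\cdot t_w(G_K)\le N\prod_{i:k_i>0}k_i$---matches that approach.
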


To complete the proof of Lemma~\ref{lemma-C-t-theta-bounds}, we use Stirling approximation inequalities, which assert that, for all non-negative integers $k$, there holds
\begin{equation}
\label{eq-Stirling-inequalities}
\sqrt{2 \pi}\, \sqrt{k}  \left(\frac{k}{e}\right)^k   \leq   k ! \leq
e\, \sqrt{k}   \left(\frac{k}{e}\right)^k.
\end{equation}
Fixing an arbitrary $t\geq 1$ and $\theta\in \Delta_t$, we apply~\eqref{eq-Stirling-inequalities} to~\eqref{eq-via-Euler-counts}. Exploiting now the fact that $\theta_{ij}=0$ if and only if $k_{ij}=0$, and also that for $\theta_{ij}>0$, $1\leq k_{ij}\leq t$ and $1\leq k_i\leq t$, we obtain
\begin{equation*}
\frac{1}{t^N} \frac{  (2 \pi)^{\frac{N}{2}} } { e^{\frac{N^2}{2}} t^{\frac{N^2}{2}} } e^{t H(\theta)} \leq C_{t,\theta} \leq \frac{e^N t^{\frac{N}{2}}} { (2 \pi)^{\frac{N^2}{2}}}    e^{t H(\theta)}.
\end{equation*}
Finally, noting that, on both sides of the preceding inequality, the factors that premultiply $e^{t H(\theta)}$ are polynomial in $t$, and hence are dominated by any exponential function $e^{\epsilon t}$, $\epsilon>0$, for sufficiently large $t$, the claim in part~\ref{part-C-t-theta} follows.
\end{proof}

\subsection{\color{white}B}
\label{app-B}

\mypar{Proof of Lemma~\ref{lemma-main-proof-step-1}}
\begin{proof}
The main technical result behind Lemma~\ref{lemma-main-proof-step-1} is (a variant of) Slepian's lemma, see~\cite{ZeitouniNotes16}.

\begin{lemma} (Slepian's lemma~\cite{ZeitouniNotes16})
\label{lemma-Slepian}
Let the function $\phi: \mathbb R^L \mapsto \mathbb R$ satisfy
\begin{equation}
\label{slepian-first-condition}
\lim_{\|x\| \rightarrow +\infty} \phi(x) e^{- \alpha \|x\|^2}=0,\mathrm{\;for\;all\;}\alpha>0.
\end{equation}
Suppose that $\phi$ has nonnegative mixed derivatives,
\begin{equation}
\label{slepian-second-condition}
\frac{\partial^2 \phi}{\partial x_l \partial x_m} \geq 0, \mathrm{\;for\;}l\neq m.
\end{equation}
Then, for any two independent zero-mean Gaussian vectors $X$ and $Z$ taking values in $\mathbb R^L$ such that $\mathbb E_X[X_l^2]= \mathbb E_Z[Z_l^2]$ and $\mathbb E_X[X_l X_m] \geq  \mathbb E_Z[Z_l Z_m]$ there holds $\mathbb E_X[\phi(X)] \geq \mathbb E_Z [\phi(Z)]$, where $\mathbb E_X$ and $\mathbb E_Z$, respectively, denote expectation operators on probability spaces on which $X$ and $Z$ are defined.
\end{lemma}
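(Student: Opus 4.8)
The plan is to prove Lemma~\ref{lemma-Slepian} by the classical one‑parameter Gaussian interpolation (``heat‑flow'') argument. Write $\Sigma^X=\mathbb E_X[XX^\top]$ and $\Sigma^Z=\mathbb E_Z[ZZ^\top]$; the hypotheses say precisely that $\Sigma^X_{ll}=\Sigma^Z_{ll}$ for every $l$ and $\Sigma^X_{lm}\ge\Sigma^Z_{lm}$ for $l\neq m$. Realize $X$ and $Z$ as independent vectors on a common probability space and, for $u\in[0,1]$, set $Y_u:=\sqrt{u}\,X+\sqrt{1-u}\,Z$, a centered Gaussian vector in $\mathbb R^L$ with covariance $u\,\Sigma^X+(1-u)\,\Sigma^Z$; thus $Y_1$ has the law of $X$ and $Y_0$ the law of $Z$. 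Setting $\psi(u):=\mathbb E[\phi(Y_u)]$, the desired inequality $\mathbb E_X[\phi(X)]\ge\mathbb E_Z[\phi(Z)]$ is exactly $\psi(1)\ge\psi(0)$, which I would obtain from $\psi'(u)\ge 0$ on $(0,1)$.

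The heart of the matter is the interpolation identity
\begin{equation*}
\psi'(u)\;=\;\frac{1}{2}\sum_{l,m}\bigl(\Sigma^X_{lm}-\Sigma^Z_{lm}\bigr)\,\mathbb E\!\left[\frac{\partial^2\phi}{\partial x_l\,\partial x_m}(Y_u)\right].
\end{equation*}
I would derive it by differentiating $\psi(u)=\mathbb E\bigl[\phi(\sqrt u\,X+\sqrt{1-u}\,Z)\bigr]$ under the expectation, obtaining $\psi'(u)=\tfrac{1}{2\sqrt u}\sum_l\mathbb E[\partial_l\phi(Y_u)\,X_l]-\tfrac{1}{2\sqrt{1-u}}\sum_l\mathbb E[\partial_l\phi(Y_u)\,Z_l]$, and then applying Gaussian integration by parts (Stein's identity) to each term together with $\mathbb E[X_l(Y_u)_m]=\sqrt u\,\Sigma^X_{lm}$ and $\mathbb E[Z_l(Y_u)_m]=\sqrt{1-u}\,\Sigma^Z_{lm}$ (which hold because $X\perp Z$). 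Granting the identity, the conclusion is immediate: the diagonal terms $l=m$ vanish since $\Sigma^X_{ll}=\Sigma^Z_{ll}$, and for $l\neq m$ both factors $\Sigma^X_{lm}-\Sigma^Z_{lm}\ge 0$ (hypothesis) and $\mathbb E[\partial_l\partial_m\phi(Y_u)]\ge 0$ (condition~\eqref{slepian-second-condition}, a pointwise bound) are nonnegative; hence $\psi'\ge 0$ and integrating over $[0,1]$ finishes the argument.

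The only real obstacle is the technical justification of the differentiation under the expectation and of the Gaussian integration‑by‑parts steps when $\phi$ is merely assumed to obey the growth condition~\eqref{slepian-first-condition}. I would dispose of this in the standard way: first establish the identity, and hence the inequality, for $\phi$ smooth with all partial derivatives bounded (where every step is routine), and then pass to a general $\phi$ by mollification, replacing $\phi$ with $\phi_\delta:=\phi*\rho_\delta$ for a Gaussian kernel $\rho_\delta$ of width $\delta$. Convolution preserves the sign condition~\eqref{slepian-second-condition}, and~\eqref{slepian-first-condition} ensures that $\phi_\delta$ together with all its derivatives are dominated, uniformly for small $\delta$, by functions $C_\alpha e^{\alpha\|x\|^2}$ with $\alpha>0$ arbitrarily small; hence dominated convergence against the Gaussian laws of $X$ and $Z$ gives $\mathbb E[\phi_\delta(X)]\to\mathbb E_X[\phi(X)]$ and $\mathbb E[\phi_\delta(Z)]\to\mathbb E_Z[\phi(Z)]$ as $\delta\downarrow 0$, so the inequality for each $\phi_\delta$ passes to the limit. (If $u\,\Sigma^X+(1-u)\,\Sigma^Z$ happens to be singular for some $u$, one adds $\varepsilon I$ and lets $\varepsilon\downarrow 0$ at the end.) Everything else is bookkeeping.
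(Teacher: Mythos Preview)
Your interpolation argument is correct and is exactly the standard ``smart path'' proof of Slepian-type comparison inequalities. Note, however, that the paper does not actually prove Lemma~\ref{lemma-Slepian}: it is quoted as a known result from the cited reference~\cite{ZeitouniNotes16} and then applied directly to $\phi_t$ and the vectors $\overline{\mathcal X}$, $\overline{\mathcal Z}$. So there is no ``paper's own proof'' to compare against; what you have written is essentially the proof one would find in that reference.
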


%Fix $t\geq 1$. For each $s^t$, define
%\begin{align}
%\label{eq-X-and-Z}
%\mathcal X_{s^t}&:=\sum_{i: k_i>0} \beta_i \sum_{k\in [1,t]: s_k=i} X_{s_k,k}\\
%\mathcal Z_{s^t}&:=\sum_{i: k_i>0} \beta_i \sum_{k\in [1,t]: s_k=i} Z_{s^t,k}.
%\end{align}
%In total, we have $L = C_t$ Gaussian random variables for both $\mathcal X$ and $\mathcal Z$.
We apply Lemma~\ref{lemma-Slepian} to function $\phi_t$ in~\eqref{def-phi-t} and random variables $\overline {\mathcal X} (s^t)$ and $ \overline {\mathcal Z} (s^t)$, $s^t\in \mathcal S^t$, defined, respectively, in~\eqref{eq-def-cal-X-beta} and~\eqref{eq-def-cal-Z-beta}. We first verify the conditions of the lemma. Note that vectors $\overline {\mathcal X} $ and $ \overline  {\mathcal Z}$ are of dimension $L=C_t$.

Since $\phi_t$ grows linearly in $\|x\|$, it is easy to see that the first condition of the lemma is satisfied. Considering the second condition,
\begin{equation*}
\frac{\partial \phi_t}{\partial x_{s^t}} = -  \frac{P(s^t) e^ {- \frac{\beta(s^t)^2}{2}+ x_{s^t}}}{\sum_{s^t} P(s^t) e^ {- \frac{\beta(s^t)^2}{2}+x_{s^t}}}
\end{equation*}
and so
\begin{equation*}
\frac{\partial^2 \phi_t}{\partial x_{s^t} \partial x_{{s^t}^\prime}} = \frac{ P(s^t) e^ {- \frac{\beta(s^t)^2}{2}+ x_{s^t}} P({s^t}^\prime) e^ {- \frac{\beta_{{s^t}^\prime}^2}{2}+x_{{s^t}^\prime}}}
{\left(\sum_{s^t} P(s^t) e^ {- \frac{\beta(s^t)^2}{2}+ x_{s^t}} \right)^2},
\end{equation*}
which is non-negative for all $x\in \mathbb R^{C_t}$.

Considering further the conditions on $\overline{\mathcal X}$ and $\overline {\mathcal Z}$, we have $\mathbb E_0[\overline {\mathcal X} (s^t)^2] = \mathbb E[\overline {\mathcal Z} (s^t)^2]=\sum_{i=1}^{N} {K_i}(s^t) \frac{\beta_i^2}{2}$, for each $s^t \in \mathcal S^t$. Also, it is easy to verify that $\mathbb E_0[\overline {\mathcal X} (s^t) \overline {\mathcal X}({s^t}^\prime) ] = \sum_{k: s_k=s_k^\prime} \frac{\beta_{s_k}^2}{2} \geq 0 = \mathbb E[ \overline {\mathcal Z} (s^t) \overline {\mathcal Z} ({s^t}^\prime) ]$. Hence, all the conditions for applying Lemma~\ref{lemma-Slepian} to $\phi_t$ and random vectors $\overline {\mathcal X}$ and $\overline {\mathcal Z}$ are fulfilled. Thus, by Lemma~\ref{lemma-Slepian} the claim of Lemma~\ref{lemma-main-proof-step-1} follows.
\end{proof}

\mypar{Proof of Lemma~\ref{lemma-main-proof-step-2}}
\begin{proof}
Part~\ref{part-UI} can be proven by a simple modification of the proof in Appendix E of~\cite{Agaskar15}, hence we omit the proof here.

We prove part~\ref{part-almost-sure} by applying Varadhan's lemma, where we use the LDP for Gauss-Markov types from Theorem~\ref{theorem-Q-t-satisfies-LDP}. We use the following version of Varadhan's lemma which assumes LDP with probability one, in the sense of Theorem~\ref{theorem-Q-t-satisfies-LDP}; the proof of Lemma~\ref{lemma-Varadhan-wp1} is omitted, but we remark that it follows the line of the proof of Varadhan's lemma~\cite{DemboZeitouni93} (for deterministic sequences of measures) with each application of LDP bound (upper or lower) being used in the probability one sense, and then finally using the fact that countable intersection of probability one sets is also a probability one set.

\begin{lemma}[Varadhan's lemma~\cite{DemboZeitouni93}] \label{lemma-Varadhan-wp1}
 Suppose that for every measurable set $G$ the sequence of (random) probability measures $\mu^\omega_t$ with probability one satisfies, respectively, the LDP upper and lower bound with rate function $I$. Further, let $F: \mathbb{R}^D\mapsto \mathbb{R}$ be an arbitrary continuous function. If the tail condition~\eqref{eq-tail-condition} below holds with probability one,
 \begin{equation}\label{eq-tail-condition}
\lim_{M \rightarrow + \infty} \limsup_{t\rightarrow +\infty} \frac{1}{t}\,\log \int_{x: F(x)\geq M } e^{t F(x)} d\mu_t^\omega(x) = -\infty,
 \end{equation}
 then, with probability one,
 \begin{equation}\label{eq-Varadhan}
\lim_{t\rightarrow +\infty} \frac{1}{t} \log \sum_{x} e^{t F(x)} d\mu_t^\omega(x) = \sup_{x\in \mathbb{R}^D} F(x) - I(x).
 \end{equation}
\end{lemma}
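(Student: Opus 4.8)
The plan is to follow the classical proof of Varadhan's lemma~\cite{DemboZeitouni93} almost verbatim, split into the lower bound $\liminf_t \frac1t\log\int e^{tF}\,d\mu_t^\omega \geq \Lambda$ and the upper bound $\limsup_t \frac1t\log\int e^{tF}\,d\mu_t^\omega \leq \Lambda$, where $\Lambda:=\sup_{x}\,(F(x)-I(x))$, and to deal with the randomness of the $\mu_t^\omega$ by noting that the argument only ever invokes the LDP bounds on \emph{balls}. Concretely, the first step is to fix the countable family $\mathcal B$ of all open balls in $\mathbb R^D$ with rational centre and rational radius, together with their closures. By hypothesis each of the countably many LDP bounds~\eqref{def-LDP-upper}--\eqref{def-LDP-lower} attached to members of $\mathcal B$ holds with probability one; intersecting these, and intersecting further with the probability-one event on which the tail condition~\eqref{eq-tail-condition} holds and (in the application to $Q_t^\omega$) with the probability-one event from Lemma~\ref{lemma-Q-t-exp-tight} on which $\mu_t^\omega$ is exponentially tight, produces a single full-probability event $\Omega^\star$ on which everything below becomes deterministic, so that it suffices to prove~\eqref{eq-Varadhan} for each fixed $\omega\in\Omega^\star$.

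For the lower bound I would fix a point $y$ with $I(y)<+\infty$ and a $\delta>0$, use continuity of $F$ to pick $B\in\mathcal B$ with $y\in B$ and $F>F(y)-\delta$ on $B$, bound $\frac1t\log\int e^{tF}d\mu_t^\omega \geq F(y)-\delta+\frac1t\log\mu_t^\omega(B)$, and then apply the LDP lower bound together with $y\in B^{\mathrm o}$ to get $\liminf_t \geq F(y)-\delta-I(y)$; letting $\delta\downarrow 0$ and taking the supremum over admissible $y$ gives $\liminf_t \geq \Lambda$. For the upper bound, assuming $\Lambda<+\infty$ (otherwise the lower bound already completes the proof), I would split the integral into four pieces: $\{F<-M\}$, where $e^{tF}\le e^{-tM}$ and $\mu_t^\omega$ is a probability measure; $\{F\ge M\}$, controlled for $M$ large by the tail condition; $\{-M\le F<M\}\setminus K$, bounded by $e^{tM}\mu_t^\omega(K^{\mathrm c})$ for a compact $K$ from exponential tightness chosen to beat $e^{tM}$; and the compact remainder $K':=K\cap\{-M\le F\le M\}$. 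On $K'$ I would cover each $y$ by a ball $B_y\in\mathcal B$ with $|F-F(y)|<\delta$ on $\overline{B_y}$; since $F(z)-I(z)\le\Lambda$ and $F(z)\ge F(y)-\delta$ for $z\in\overline{B_y}$, this forces $\inf_{\overline{B_y}}I\ge F(y)-\delta-\Lambda$, so the LDP upper bound gives $\limsup_t \frac1t\log\int_{B_y}e^{tF}d\mu_t^\omega \le F(y)+\delta-\inf_{\overline{B_y}}I\le\Lambda+2\delta$. Extracting a finite subcover of $K'$ and using $\limsup_t\frac1t\log(\sum_l a^{(l)}_t)=\max_l\limsup_t\frac1t\log a^{(l)}_t$ for finitely many non-negative sequences then gives $\limsup_t \le\Lambda+2\delta$; letting $\delta\downarrow 0$ and combining with the lower bound shows the limit in~\eqref{eq-Varadhan} exists and equals $\Lambda$.

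I expect the main obstacle to be the measure-theoretic bookkeeping, not any new estimate. Naively the lower-bound step seems to require the LDP on uncountably many sets (one shrinking neighbourhood of each $y$), for which one cannot discard a null set at each $y$; the point to get right is that, because rational balls form a base of the topology, the entire argument --- the per-point estimates in the lower bound and, after the compactness reduction, the finitely many estimates in the upper bound --- uses only members of the fixed countable family $\mathcal B$, so one countable intersection of probability-one events is enough. The only other delicate point is the reduction of the upper bound to a compact set: this is where the bare LDP plus the tail condition does not suffice, and I would lean on the exponential tightness of $Q_t^\omega$, equivalently the compactness of the support of $I$, established in Lemma~\ref{lemma-Q-t-exp-tight}; with that in hand, the rest is exactly the deterministic proof of~\cite{DemboZeitouni93}.
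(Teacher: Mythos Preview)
Your proposal is correct and follows precisely the route the paper indicates: the paper omits the proof and merely remarks that it ``follows the line of the proof of Varadhan's lemma~\cite{DemboZeitouni93} \ldots\ with each application of LDP bound (upper or lower) being used in the probability one sense, and then finally using the fact that countable intersection of probability one sets is also a probability one set.'' Your write-up is in fact more explicit than the paper's sketch---you spell out the countable base of rational balls that makes the intersection-of-probability-one-sets argument work, and you flag the compactness reduction via the exponential tightness of Lemma~\ref{lemma-Q-t-exp-tight}---but the underlying strategy is identical.
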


We apply Varadhan's lemma~\ref{lemma-Varadhan-wp1} to compute the limit of the sequence $\phi_t (\overline {\mathcal Z})$, $t=1,2,...$. To this end, observe that $P(s^t)$, $\beta(s^t)$, and $\overline {\mathcal Z}(s^t)$ can be written in terms of $\Theta_t$ and $\mathcal Z_{t}$ as
\begin{align}
P(s^t)& = \frac{\pi_{s_1}}{P_{s_t, s_1}} e^{\sum_{i,j=1}^N K_{ij} (s^t) \log P_{ij} }, \label{eq-P-s-t-expressed} \\
\beta^2(s^t)& = \sum_{i=1}^N K_i(s^t) \beta_i^2, \nonumber\\
\overline {\mathcal Z}(s^t) & = t\, \sum_{i=1}^N \beta_i \,\mathcal Z_{t,i}^{\omega}(s^t).\nonumber
\end{align}
(We remark that, in~\eqref{eq-P-s-t-expressed}, it might happen that some feasible sequences $s^t$ begin and end with such $s_1=j$ and $s_t=i$ for which the transition $s_t$ to $s_1$ has zero probability, $P_{s_t,s_1}=P_{ij}=0$. Then, since $s^t$ is feasible, it must be that $K_{ij}(s^t) = 1$ (counting exactly this last, artificially added transition from $s_t$ to $s_1$), and we have that the two zero terms -- $P_{s_t,s_1}$ and $e^{ {K_{ij}(s^t)}\log P_{ij}} = P_{ij}^{K_{ij}(s^t)}$ -- will cancel out.)

Hence, we can write
\begin{align}
\phi_t &  =-\frac{1}{t} \log \left(\sum_{s^t \in \mathcal S^t} \frac{\pi_{s_1}}{P_{s_t, s_1}} e^{t \left(  \sum_{i,j=1}^N [\Theta_t]_{ij} (s^t) \log P_{ij} - \sum_{i=1}^N \overline \Theta_{t,i} (s^t)  \frac{\beta_i^2}{2}+ \sum_{i=1}^N \beta_i  \mathcal Z_{t,i}^\omega (s^t)\right)} \right)\nonumber\\
&=- \frac{1}{t} \log C_t - \frac{1}{t} \log \int_{\theta}\int_{\xi} e^{ t F (\theta,\xi)} Q_t^\omega (d \theta, d \xi),\label{eq-eta-Varadhan-form}
\end{align}
where $F: \mathbb R^{N^2+N}\mapsto \mathbb R$ is defined as
\begin{equation}
F (\theta,\xi):=\sum_{i,j=1}^N \theta_{ij} \log P_{ij} - \sum_{i=1}^N \overline{\theta}_i \frac{\beta_i^2}{2} + \sum_{i=1}^N \beta_i \xi_i,
\end{equation}
for $(\theta,\xi) \in \mathbb{R}^{N^2+N}$. By Lemma~\ref{lemma-C-t-theta-bounds}, we have that the limit of the first term equals $\log \rho_0$. To apply Varadhan's lemma to compute the limit of the second term in~\eqref{eq-eta-Varadhan-form} we first need to verify that $F$ satisfies tail condition~\eqref{eq-tail-condition}.
%
%We need to show that, with probability one, the following condition is fulfilled for function $F$:
%%
%\begin{equation}
%\label{eq-Varadhans-condition}
%\lim_{M\rightarrow + \infty} \limsup_{t\rightarrow +\infty} \frac{1}{t}\,\log \int_{(\theta,\xi): F(\theta,\xi)\geq M } e^{t F((\theta,\xi))} Q_t^\omega(d\theta,d\xi) = -\infty.
%\end{equation}
From the conditions that define the support of $I$, it is easy to conclude that the rate function has compact support. More particularly, the support $\mathcal D_I$ of the rate function will satisfy $\mathcal D_I\subseteq B_0:=\Delta \times D_0$, where $D_0 = [-\log N, +\log N]^N$  (see the proof of exponential tightness of the sequence $Q_t^\omega$, Lemma~\ref{lemma-Q-t-exp-tight}). It can be shown (similarly as in the proof of the upper bound, Case 2: $b>0$), that $Q_t^\omega (B_0)=0$, for all $t$ sufficiently large. Choosing $M_0:=\max_{(\theta,\xi) \in B_0} F(\theta,\xi)$ (note that $F$ is continuous and hence by Weierstrass theorem achieves maximum on a compact set), we have that, for each $M\geq M_0$,  with probability one, the integral in~\eqref{eq-tail-condition} equals zero for all $t$ sufficiently large. Thus, the sequence of measures $Q_t^\omega$, $t\geq 1$, with probability one, satisfies condition~\eqref{eq-tail-condition}.

By Varadhan's lemma and Lemma~\ref{lemma-C-t-theta-bounds}, part~\ref{part-C-t}, we therefore obtain that, with probability one,
\begin{align}
\label{eq-Varadhan-last-step}
 \lim_{t\rightarrow +\infty} \phi_t = \log \rho_0 - \sup_{(\theta,\xi) \in \mathbb R^{N^2+N }} F(\theta,\xi) - I(\theta,\xi).
\end{align}
It is easy to see that the optimization problem in~\eqref{eq-Varadhan-last-step} is equivalent to the one given in eq.~\eqref{eq-main-LB-opt} of Theorem~\ref{theorem-main}. This proves the claim of part~\ref{part-almost-sure} and completes the proof of the lemma.
\end{proof}

%%%%%%%%%%%%%%%%%%%%%%%%%%%%%%%%%%%%%%%%%%%%%%%%%%%%
\subsection{\color{white}C}
\label{app-C}

\mypar{Proof of Lemma~\ref{lemma-UB-q-t}}
\begin{proof}
Suppose first that there exists $i$ such that $\overline \theta_i=0$ and $0\notin D_i$. Then, by~\eqref{eq-marginals} and~\eqref{eq-product-of-marginals}, we have $q_{t,\theta}(D)=0$.  On the other hand, by the definition of $J_{\theta}$ (see Theorem~\ref{theorem-Q-t-satisfies-LDP}), we have that in this case $J_{\theta}(\xi)=+\infty$, which proves that~\eqref{eq-UB-q-t} is true.
Thus, in the remainder of the proof we assume that $0\in D_i$ for each $i$ such that $\overline \theta_i=0$.

 Fix an arbitrary $\epsilon>0$, and let $D_i=\left[\xi_i,\xi_i+\delta_i\right]$, for some $\xi_i$ and $\delta_i>0$, $i=1,...,N$. Let also $C_{\delta}= \prod_{i: \overline \theta_i>0} \delta_i$ and denote with $N_1$ the number of non-zero elements of $\overline{\theta}$. We show that, for any $\theta \in \Delta_t$,
 \begin{equation}
 \label{eq-UB-q-t-zoom-in}
 q_{t,\theta}(D) \leq \frac{t^{N_1}}{ (2 \pi)^\frac{N_1}{2} }\, C_{\delta}\, e^ {- t \inf_{\xi \in D} J_\theta (\xi)}.
 \end{equation}
It is easy to see that, for the given value of $\epsilon>0$, there exists a finite $t_2$ (that depends on $D$ and $\epsilon$) such that $\frac{t^{N_1}}{ (2 \pi)^\frac{N_1}{2} } C_{\delta} \leq e^{t \epsilon}$, for all $t\geq t_2$. Hence, if we show that the inequality~\eqref{eq-UB-q-t-zoom-in} holds, the claim of Lemma~\ref{lemma-UB-q-t} is proven.

To this end, fix $D_i$'s of the form described in the preceding paragraph and consider an arbitrarily chosen $\theta \in \Delta_t$. For each $i$, denote $k_i=\overline \theta_i t$. Note that, for each $i$ such that $\overline \theta_i>0$, $k_i$ is a positive integer by the fact that $\theta \in \Delta_t$. Consider a fixed $i$ that satisfies the preceding condition. Then, $k_i\geq 1$, and also since, for any $\eta_i \in D_i$, $e^{- \frac{t^2}{k_i} \frac{\eta_i^2}{2}} \leq e^{ - \,t \,\inf_{\eta_i\in D_i} \frac{1}{\overline \theta_i}  \frac{\eta_i^2}{2}}$, we obtain from~\eqref{eq-marginals} that
\begin{align*}
[q_{t,\theta}]_i(D_i) & \leq \frac{t} {\sqrt{2 \pi}}  e^{- t \inf_{\eta_i\in D_i} \frac{1}{\overline \theta_i} \frac{\eta_i^2}{2}} {\color{black} \int_{D_i}} d\eta_i\\
&= \frac{t} {\sqrt{2 \pi}}\, \delta_i\, e^{- t \inf_{\eta_i\in D_i} \frac{1}{\overline \theta_i} \frac{\eta_i^2}{2}} .
\end{align*}
Applying the preceding inequality in~\eqref{eq-product-of-marginals} for all $i$ such that $ k_i\geq 1$, together with~\eqref{eq-marginals} for all $i$ such that $\overline \theta_i=0$ proves~\eqref{eq-UB-q-t}.
\end{proof}

\mypar{Proof of Lemma~\ref{lemma-LB-on-q-t-theta}}
\begin{proof}
Consider first $i$ such that $\overline \theta_i^\star=0$. By~\eqref{def-theta-t-opt}, we then have $\theta_{t,i}=0$, for all $t\geq 1$. Further, by~\eqref{eq-LB-inf-2} we have that $\xi^\star =0$, and hence, the constructed interval $D_i$ contains zero. Since $\theta_{t,i}=0$, by~\eqref{eq-marginals} we therefore have $[q_{t,\theta_t}]_i(D_i)=1$.

Let $N_1$ be the number of non-zero entries in $\overline \theta^\star$. Since $\theta_t\rightarrow \theta^\star$, for any $\delta>0$, there exists $t_8^\prime = t_8^\prime (\delta,\theta^\star)$ such that for all $t\geq t_8^\prime$, there holds $\theta_{i,t} \leq (\theta^\star_i -\delta, \theta^\star_i +\delta)$. Take $\delta$ such that $\theta^\star_i -\delta>0$ for each $i$ such that $\overline \theta^\star>0$. Then, for any $t\geq t_8^\prime$ the set of non-zero elements of $\overline \theta_t$ is the same as the set of non-zero elements of $\theta^\star$, and hence, for any $\eta \in \mathbb R^N$, $J_{\theta_t}(\eta) = \sum_{i: \overline \theta^\star_i>0} \frac{1}{\overline \theta_{t,i}} \frac{\beta_i^2}{2}$. Using the fact that $D$ is a box and decomposing $q_{t,\theta_t}(D)=\prod_{i: \overline \theta_i^\star>0}[q_{t,\theta_t}]_i(D_i)$, we see that we prove the claim of the lemma if we show that, for any $i$, for any $\epsilon>0$, there exists $t_{8,i}=t_{8,i}(\epsilon, i, D_i, \theta^\star)$ such that for all $t\geq t_{8,i}$ there holds
\begin{equation}
\label{eq-LB-q-t-theta-i}
[q_{t,\theta_t}]_i(D_i) \geq e^{-t \epsilon/N_1} e^{-t \inf_{\eta_i \in D_i} \frac{1}{\theta_{i,t}} \frac{\eta_i^2}{2}}.
\end{equation}
We prove~\eqref{eq-LB-q-t-theta-i} by considering separately three cases with respect to $D_i$: 1) $\xi_i\geq 0$; 2) $\xi_i+\delta_i\leq 0$; and 3) $\xi_i\leq 0 \leq \xi_i+\delta_i$. In all three cases, we will be using the well-known bounds on the Q-function, which assert that, for an arbitrary $\alpha>0$ and $a\in \mathbb R$, there holds
\begin{equation}
\label{eq-Q-fcn-bounds}
\frac{a}{ \alpha^2 + a^2} e^{-\alpha \frac{a^2}{2}} \leq \int_{x\geq a} e^{-\alpha \frac{x^2}{2}} dx \leq \frac{1}{\alpha a} e^{-\alpha \frac{a^2}{2}}.
\end{equation}
Fix $t\geq t_8^\prime$. To simplify the notation, we drop index $t$ and denote $\theta_t$ by $\theta$, and similarly for $\overline \theta_t$. As before, we denote $k_i = t \overline \theta_i$. Note that due to the fact that $\theta\in \Delta$, we have that $k_i$ is an integer upper bounded by $t$, which together with the assumption that $\overline \theta_i>0$ implies $1\leq k_i \leq t$.

\mypar{Case 1: $\xi_i\geq 0$} We have (see~\eqref{eq-marginals}),
\begin{align}
[q_{t,\theta}]_i(D_i) & = \frac{t}{\sqrt{2 \pi k_i}} \int_{ \xi_i \leq \eta_i \leq \xi_i+\delta_i} e^{-\frac{t^2}{k_{i}} \frac{\eta_i^2}{2}}\nonumber\\
& \geq
\frac{\sqrt{ t}}{\sqrt{2 \pi}} \int_{\eta_i \geq \xi_i} e^{-\frac{t^2}{k_{i}} \frac{\eta_i^2}{2}} -
\frac{\sqrt{ t}}{\sqrt{2 \pi}} \int_{\eta_i \geq \xi_i + \delta_i} e^{-\frac{t^2}{k_{i}} \frac{\eta_i^2}{2}}\nonumber\\
& \geq
\frac{\sqrt{ t}}{\sqrt{2 \pi}} \frac{\xi_i}{ \frac{t^4}{k_{i}^2}  + \xi_i^2} e^{-\frac{t^2}{k_{i}} \frac{\xi_i^2}{2}} -
\frac{\sqrt{ t}}{\sqrt{2 \pi}}  \frac{1}{ \frac{t^2}{k_{i}} (\xi_i+\delta_i) } e^{-\frac{t^2}{k_{i}} \frac{ (\xi_i+\delta_i)^2}{2}}\nonumber\\
& \geq
\frac{\sqrt{ t}}{\sqrt{2 \pi}} \frac{\xi_i}{ t^4  + \xi_i^2} e^{-\frac{t^2}{k_{i}} \frac{\xi_i^2}{2}} -
\frac{\sqrt{ t}}{\sqrt{2 \pi}}  \frac{1}{t  (\xi_i+\delta_i)} e^{-\frac{t^2}{k_{i}} \frac{ (\xi_i+\delta_i)^2}{2}}\nonumber\\
& = \frac{\sqrt{ t}}{\sqrt{2 \pi}} \frac{\xi_i}{ t^4  + \xi_i^2}  e^ {- t \inf_{\eta_i \in D_i} \frac{1}{\overline \theta_i} \frac{\eta_i^2}{2} }
\left(1-  \frac{1}{t^{3/2}} \frac{t^4+\xi_i^2}{\xi_i(\xi_i+\delta_i)}  e^ {- t \frac{1}{\overline \theta_i} \alpha_i} \right)\label{eq-three-terms},
\end{align}
where $\alpha_i = \frac{ (\xi_i+\delta_i)^2 }{2}- \frac{\xi_i^2}{2}>0$, and where in the first inequality we used that $k_i \leq t$, and in the second we used that $1\leq k_i \leq t$.

Recall that, for all $t\geq t_8^\prime$, $\overline \theta_{i,t}\leq \overline \theta^\star_i+\delta$, thus  $\frac{1}{\overline \theta_{i,t}} \alpha_i \geq \frac{1}{\theta_i^\star + \delta} \alpha_i> 0$. This further implies that the term in the brackets in~\eqref{eq-three-terms} must converge to $1$ as $t$ increases, and hence, can be lower bounded by a positive constant between $0$ and $1$, e.g., by $1/2$ for sufficiently large $t$. On the other hand, the first term in~\eqref{eq-three-terms} decays slower than exponential, and thus the product of the first and the third term can be lower bounded by $e^{-t\epsilon/N_1}$, for $t$ sufficiently large. Thus,~\eqref{eq-LB-q-t-theta-i} holds for all $t$ greater than some $t_{8,i}=t_{8,i}(\epsilon, D_i, \theta_i^\star)$.

\mypar{Case 2: $\xi_i+\delta_i\leq 0$} Similarly as in the previous case, it can be shown here that
\begin{align*}
[q_{t,\theta}]_i(D_i) \geq \frac{\sqrt{ t}}{\sqrt{2 \pi}} \frac{  \left|\xi_i+ \delta_i\right|}{ t^4  + \left(\xi_i+ \delta_i\right)^2}  e^ {- t \inf_{\eta_i \in D_i} \frac{1}{\overline \theta_i} \frac{\eta_i^2}{2}}
\left(1-  \frac{1}{t^{3/2}} \frac{ t^4  + \left(\xi_i+ \delta_i\right)^2}
{\left|\xi_i+\delta_i\right|\left|\xi_i\right|}  e^ {- t  \frac{1}{\overline \theta_i} \beta_i} \right) ,
\end{align*}
where $\beta_i = \frac{\xi_i^2}{2}- \frac{ (\xi_i+\delta_i)^2}{2}>0$. From here, the proof is analogous as in Case 1.

\mypar{Case 3: $\xi_i \leq 0 \leq \xi_i+\delta_i$} In this case, we write
\begin{align*}
[q_{t,\theta}]_i(D_i) & = 1- \frac{t}{\sqrt{2 \pi k_i}} \int_{\eta_i \geq |\xi_i|} e^{- \frac{t^2}{k_i} \frac{\eta_i^2}{2}}d \eta_i - \frac{t}{\sqrt{2 \pi k_i}} \int_{\eta_i \geq \xi_i+ \delta_i} e^{-\frac{t^2}{k_i} \frac{\eta_i^2}{2}}d \eta_i\\
& = 1 -  \frac{t}{\sqrt{2 \pi k_i}} \frac{k_i}{t^2} \frac{1}{|\xi_i|} e^{-\frac{t^2}{k_{i}} \frac{ \xi_i^2}{2}} - \frac{t}{\sqrt{2 \pi k_i}} \frac{ k_i}{t^2} \frac{1}{\xi_i+\delta_i} e^{-\frac{t^2}{ k_{i}} \frac{ \left(\xi_i+\delta_i\right)^2}{2}}.
\end{align*}
It is easy to see that the second and the third term in the last equation go to zero as $t$ goes to infinity. Thus, for sufficiently large $t$, we have
\begin{align*}
[q_{t,\theta}]_i(D_i) & \geq  1/2.
\end{align*}
Noting that $e^{-t \inf_{\eta_i \in D_i} \frac{1}{\overline\theta_i} \frac{\eta_i^2}{2}}=1$ and that, for any $\epsilon>0$, $1/2\geq e^{-t\epsilon/N}$ for $t$ sufficiently large, proves the claim.
\end{proof}

%%%%%%%%%%%%%%%%%%%%%%%%%%%%%%%%%%%%%%%%%%%%%%%%%%%%%
\subsection{\color{white}D}
\label{app-D}
\mypar{Proof of Lemma~\ref{lemma-H-and-R-properties}}
\begin{proof}
Part~\ref{part-1-H-and-R-properties} is trivial, and the proof of part~\ref{part-2-H-and-R-properties}
can be found in~\cite{Vasek80}. We next prove part~\ref{part-3-H-and-R-properties}. We first show that the function
$\log \lambda (\alpha)$ is convex.

\begin{lemma}
\label{lemma-convexity-of-log-lambda}
 Function $\alpha \mapsto \log \lambda(\alpha)$ is convex on $\mathbb R$.
\end{lemma}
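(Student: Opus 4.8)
The plan is to show that every nonzero entry of $M(\alpha)$ is a \emph{log-convex} function of $\alpha$, and then to propagate this log-convexity through matrix powers and through the limit that defines the Perron value $\lambda(\alpha)$. First I would write the entries from~\eqref{def-M-alpha} in exponential form: for $(i,j)\in E$,
\[
[M(\alpha)]_{ij} = \exp\!\left( \alpha\log P_{ij} - \alpha(1-\alpha)\tfrac{\beta_i^2}{2} \right) = \exp\!\left( f_{ij}(\alpha) \right),
\]
where $f_{ij}(\alpha) := \alpha\left(\log P_{ij} - \tfrac{\beta_i^2}{2}\right) + \alpha^2\tfrac{\beta_i^2}{2}$, while $[M(\alpha)]_{ij}=0$ for $(i,j)\notin E$ (this convention is consistent with $M(0)=P_0$ and $\lambda(0)=\rho_0$, used in part~\ref{part-2-H-and-R-properties}). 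Since $f_{ij}''(\alpha)=\beta_i^2\ge 0$, each $f_{ij}$ is convex, so each nonzero entry of $M(\alpha)$ is log-convex in $\alpha$.

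Next, since $M(\alpha)$ has the same sparsity pattern as $P$ — which is irreducible and aperiodic, hence $P_0$ is primitive — the matrix $M(\alpha)$ is primitive for every $\alpha\in\mathbb R$, and therefore the Perron value satisfies the growth-rate identity
\[
\log\lambda(\alpha) = \lim_{t\to+\infty} \frac{1}{t}\log\left( 1^\top M(\alpha)^t\, 1 \right).
\]
Expanding the matrix power as a sum over admissible paths of length $t$ in the graph,
\[
1^\top M(\alpha)^t\, 1 = \sum_{(i_0,i_1,\dots,i_t)} \exp\!\left( \sum_{k=0}^{t-1} f_{i_k i_{k+1}}(\alpha) \right),
\]
each summand is the exponential of a finite sum of the convex functions $f_{i_k i_{k+1}}$, hence is log-convex in $\alpha$; and a finite sum of log-convex functions is again log-convex (by Hölder's inequality: for $\theta\in[0,1]$, $\sum_m g_m(\theta a + (1-\theta)b)\le \sum_m g_m(a)^\theta g_m(b)^{1-\theta}\le \big(\sum_m g_m(a)\big)^\theta\big(\sum_m g_m(b)\big)^{1-\theta}$). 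Thus, for each fixed $t$, the map $\alpha\mapsto \tfrac{1}{t}\log\big(1^\top M(\alpha)^t 1\big)$ is convex, and since a pointwise limit of convex functions is convex, $\log\lambda(\alpha)$ is convex on $\mathbb R$.

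The only genuinely delicate point is the first display: one must ensure the growth-rate characterization of $\lambda(\alpha)$ is valid for \emph{all} $\alpha$, which is precisely why it matters that the support of $M(\alpha)$ is fixed (equal to $E$, independent of $\alpha$), so that primitivity holds uniformly; with that convention the rest is routine. As an alternative, one can simply invoke Kingman's theorem on the log-convexity of the Perron root of a matrix whose entries are log-convex functions of a parameter, of which Lemma~\ref{lemma-convexity-of-log-lambda} is a direct instance.
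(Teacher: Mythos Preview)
Your proof is correct. Both your argument and the paper's share the same high-level skeleton: approximate $\log\lambda(\alpha)$ by a sequence $\tfrac{1}{t}\log\big(\text{sum over length-$t$ paths}\big)$, show convexity of each term in the sequence, and pass to the pointwise limit. Where you differ is in how the convexity at finite $t$ is obtained. The paper works with $f_t(\alpha)=\tfrac{1}{t}\log\sum_{s^t}P^\alpha(s^t)e^{-\alpha(1-\alpha)\beta^2(s^t)/2}$ and computes $f_t''$ by hand, splitting it into a manifestly nonnegative piece and a piece controlled by Cauchy--Schwarz on the vectors $\{\sqrt{h_{s^t}}\}$ and $\{\sqrt{h_{s^t}}\,g_{s^t}\}$. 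You instead observe that every nonzero entry of $M(\alpha)$ is log-convex (the exponent is a convex quadratic in $\alpha$), so each path weight is a product of log-convex functions, hence log-convex, and a finite sum of log-convex functions is log-convex by H\"older. Your argument is more structural---it is exactly Kingman's log-convexity theorem for the Perron root, as you note---and it avoids the explicit derivative bookkeeping; the paper's direct computation is in effect re-deriving the ``sum of log-convex is log-convex'' fact in this particular instance. Your care about the fixed sparsity pattern (so that primitivity and the growth-rate identity hold for all $\alpha$, including $\alpha=0$ with $M(0)=P_0$) is the right way to handle the only delicate point.
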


\begin{proof}
For any $\alpha\in \mathbb R$, let $f_t(\alpha)$ be defined as
\begin{equation}
\label{eq-def-f-t}
f_t(\alpha) = \frac{1}{t} \log \sum_{s^t} P^\alpha(s^t) e^{-\alpha (1-\alpha) \frac{\beta^2(s^t)}{2}}.
\end{equation}
Recalling the definition of the matrix $M(\alpha)$, it is easy to see that the sum in~\eqref{eq-def-f-t} equals
$\pi^\top M(\alpha)^{t-1} d(\alpha)$, where $d(\alpha) \in \mathbb R^N$ is a vector whose $i$-th component equals
$e^{-\alpha (1-\alpha) \frac{\beta^2_i}{2}}$. Using the fact that $\pi, d(\alpha)>0$, we obtain by Gelfand's formula (see Theorem 8.5.1 in~\cite{MatrixAnalysis}):
\begin{equation}
\label{eq-f-t-limit}
\lim_{t\rightarrow +\infty} f_t(\alpha) = \log \lambda(\alpha).
\end{equation}
Computing the first order derivative of $f_t$, we obtain
\begin{align}
\label{eq-f-t-first-order}
\frac{d}{d\alpha} f_t(\alpha) & = \frac{1}{t}  \frac{\sum_{s^t} P^\alpha(s^t) e^{-\alpha (1-\alpha) \frac{\beta^2(s^t)}{2}}
\left( \log P(s^t) +  (2\alpha -1) \frac{\beta^2(s^t)}{2} \right)}
{\sum_{s^t} P^\alpha(s^t) e^{-\alpha (1-\alpha) \frac{\beta^2(s^t)}{2}}}.
\end{align}
To ease the derivations, for any $s^t$, denote
$h_{s^t} (\alpha)= P^\alpha(s^t) e^{-\alpha (1-\alpha) \frac{\beta^2(s^t)}{2}}$ and
$g_{s^t}(\alpha)= \log P(s^t) +  (2\alpha -1) \frac{\beta^2(s^t)}{2}$, and note that
$\frac{d}{d \alpha} h_{s^t}(\alpha)= h_{s^t}(\alpha) g_{s^t}(\alpha)$. Then, it can be shown that the second order derivative of $f_t$ is given by
\begin{align}
\label{eq-f-t-second-order}
\frac{d^2}{d\alpha^2} f_t(\alpha) & = \frac{1}{t}\frac{\left(\sum_{s^t} h_{s^t}(\alpha)\right)
\left( \sum_{s^t} h_{s^t}(\alpha) g^2_{s^t}(\alpha)  +  h_{s^t}(\alpha) \beta^2(s^t) \right) -
\left(\sum_{s^t} h_{s^t}(\alpha) g_{s^t}(\alpha)\right)^2  }{ \left(\sum_{s^t} h_{s^t} (\alpha)\right)^2 }\nonumber\\
&= \frac{1}{t}\frac{\left(\sum_{s^t} h_{s^t}(\alpha)\right)
\left(\sum_{s^t} h_{s^t}(\alpha) \beta^2(s^t) \right) }{ \left(\sum_{s^t} h_{s^t} (\alpha)\right)^2 }\, +\,\nonumber\\
 & \phantom{=}\,\,\,\,\,\frac{1}{t} \frac{\left(\sum_{s^t} h_{s^t}(\alpha)\right)
 \left(\sum_{s^t} h_{s^t}(\alpha) g^2_{s^t}(\alpha)\right) -
\left(\sum_{s^t} h_{s^t}(\alpha) g_{s^t}(\alpha)\right)^2  } { \left(\sum_{s^t} h_{s^t} (\alpha)\right)^2}.\nonumber
\end{align}
The first summand in the preceding equation is positive due to the positivity of $h_{s^t} (\alpha)$ and of
$\frac{\beta^2(s^t)}{2}$. The second summand is non-negative by Cauchy-Schwartz inequality, applied to vectors
$\left\{\sqrt{h_{s^t}}\right\}_{s^t \in \mathcal S^t}$ and
$\left\{\sqrt{h_{s^t}} g_{s^t}  \right\}_{s^t \in \mathcal S^t}$. Therefore, we have that, for each $t$,
$\frac{d^2}{d\alpha^2} f_t(\alpha)\geq 0$, thus each $f_t$ is convex. Since convexity is preserved under passage to
the limit, the claim follows.
\end{proof}

For any $\alpha\in \mathbb R$, let $g_t(\alpha)$ be defined as
\begin{equation}
\label{eq-def-g-t}
g_t(\alpha) = \frac{1}{t} \log \sum_{s^t} P^\alpha(s^t) e^{-\alpha (1-\alpha) \frac{\beta^2(s^{t-1})}{2}}
v_{s_t} (\alpha).
\end{equation}
It is easy to see that the sum in~\eqref{eq-def-g-t} equals
$\pi^\top M(\alpha)^{t-1} v(\alpha) = \lambda(\alpha)^{t-1} \pi^\top v(\alpha)$. Thus
\begin{equation}\label{eq-g-t-equals}
g_t(\alpha)=\frac{1}{t}\log \pi^\top v(\alpha) +\frac{t-1}{t} \log \lambda (\alpha),
\end{equation}
which implies the following pointwise limit, on $\alpha\in \mathbb R$:
\begin{equation}
\label{eq-g-t-limit}
\lim_{t\rightarrow +\infty} g_t(\alpha) = \log \lambda(\alpha).
\end{equation}

We next compute the first order derivative of $g_t$,
\begin{align}
\label{eq-first-deriv-of-g-t}
\frac{d}{d\alpha} g_t(\alpha) & = \frac{1}{t}  \frac{\sum_{s^t} P^\alpha(s^t) e^{-\alpha (1-\alpha) \frac{\beta^2(s^{t-1})}{2}}  \left(\left( \log P (s^t) +  (2\alpha -1) \frac{\beta^2(s^{t-1})}{2} \right) v_{s_t} (\alpha) +  \frac{d}{d\alpha} v_{s_t} (\alpha)\right) }{\sum_{s^t} P^\alpha(s^t) e^{-\alpha (1-\alpha) \frac{\beta^2(s^{t-1})}{2}}v_{s_t} (\alpha)} \nonumber\\
& = \frac{1}{t}  \frac{\sum_{s^t} P^\alpha(s^t) e^{-\alpha (1-\alpha) \frac{\beta^2(s^{t-1})}{2}} v_{s_t}(\alpha) \log P(s^t) }{\lambda(\alpha)^{t-1} \pi^\top v(\alpha)} +\nonumber\\
&\frac{ (2\alpha -1)}{t}  \frac{\sum_{s^t} P^\alpha(s^t) e^{-\alpha (1-\alpha) \frac{\beta^2(s^{t-1})}{2}}  v_{s_t} (\alpha)    \frac{\beta^2(s^{t-1})}{2}  }{\lambda(\alpha)^{t-1} \pi^\top v(\alpha)}+
 \frac{1}{t}  \frac{\sum_{s^t} P^\alpha(s^t) e^{-\alpha (1-\alpha) \frac{\beta^2(s^{t-1})}{2}}
 \frac{d}{d\alpha} v_{s_t} (\alpha) }{\lambda(\alpha)^{t-1} \pi^\top v(\alpha)}.
\end{align}

We show that the sequence $\frac{d}{d\alpha} g_t(\alpha)$ is pointwise convergent on $\mathbb R$. We prove this by separately proving pointwise convergence for each of the three summation terms in~\eqref{eq-first-deriv-of-g-t}. The last term converges to zero as
$t\rightarrow +\infty$. To see this, note that:
\begin{align}
- \lambda(\alpha)^t v_{s_0}(\alpha) \delta \leq \sum_{s^t} P^\alpha(s^t) e^{-\alpha (1-\alpha) \frac{\beta^2(s^{t-1})}{2}}  \frac{d}{d\alpha} v_{s_t} (\alpha) \leq \lambda(\alpha)^t v_{s_0}(\alpha) \delta,
\end{align}
where $\delta= \max_{i=1,...,N} \frac{1}{v_i(\alpha)} \left| \frac{d}{d\alpha} v_i(\alpha)\right|$.

Recall now equation~\eqref{eq-solution-form-theta} and let $Q$ denote the respective matrix that defines the solution
$\theta$, i.e., for any $\alpha \in \mathbb R$ and any $i,j=1,...,N$, let:
\begin{equation}\label{eq-ref-matrix-Q}
[Q(\alpha)]_{ij}=\frac{P_{ij}^\alpha e^{-\alpha(1-\alpha) \frac{\beta_i^2}{2}} v_j(\alpha)}{\lambda(\alpha) v_i(\alpha)}.
\end{equation}
We observe that, for any $\alpha  \in \mathbb R$, $Q (\alpha)$ respects the sparsity pattern of matrix $P$. Thus, for any $\alpha$, $Q(\alpha)$ is irreducible and aperiodic, and therefore has a unique stationary distribution; to be consistent with~\eqref{eq-solution-form-theta}, we denote the stationary distribution of $Q(\alpha)$ by $\overline \theta=\overline \theta (\alpha)$, for any $\alpha\in\mathbb R$.

For any initial state $s_1$, denote $P\left(s^t|s_1\right)= P_{s_1s_2}\cdot \ldots \cdot P_{s_{t-1}s_t}$, and similarly for $Q\left(s^t|s_1\right)$. It is easy to verify that, for any $s^t$, $t\geq 2$, the following relation holds between $P\left(s^t|s_1\right)$ and $Q\left(s^t|s_1\right)$,
\begin{equation}
\label{eq-Q-s-t}
\frac{P^\alpha(s^t|s_1) e^{-\alpha (1-\alpha) \frac{\beta^2(s^{t-1})}{2}}  v_{s_t} (\alpha)}
{ \lambda(\alpha)^{t-1} v_{s_1}(\alpha) } = Q\left(s^t|s_1\right).
\end{equation}

Considering now the second term in~\eqref{eq-first-deriv-of-g-t}, we have for any fixed $s^t$,
\begin{align}\label{eq-second-term}
\frac{P^\alpha(s^t) e^{-\alpha (1-\alpha) \frac{\beta^2(s^{t-1})}{2}}  v_{s_t} (\alpha)\frac{\beta^2(s^{t-1})}{2}}
{\lambda(\alpha)^{t-1} \pi^\top v(\alpha)} & =  \frac{v_{s_1} (\alpha)}
{\pi^\top v(\alpha)}\frac{\pi_{s_1}  P^\alpha(s^t|s_1) e^{-\alpha (1-\alpha) \frac{\beta^2(s^{t-1})}{2}}
v_{s_t} (\alpha) \frac{\beta^2(s^{t-1})}{2} }{ \lambda(\alpha)^{t-1} v_{s_1}(\alpha) } \nonumber \\
&= \frac{\pi_{s_1}   v_{s_1} (\alpha)}{\pi^\top v(\alpha)}  Q\left(s^t|s_1\right)  \frac{\beta^2(s^{t-1})}{2}.
\end{align}
We show that the following limit holds, for any initial state $s_1$:
\begin{equation}\label{eq-limit-sum-Q-beta}
\lim_{t\rightarrow +\infty} \frac{1}{t} \sum_{s^t\setminus s_1} Q\left(s^t|s_1\right)  \frac{\beta^2(s^{t-1})}{2}
= \sum_{i=1}^{N} \overline{\theta}_i \frac{\beta_i^2}{2}.
\end{equation}
By simple algebraic manipulations, we obtain
\begin{align}
\sum_{s^t\setminus s_1} Q\left(s^t|s_1\right)  \frac{\beta^2(s^{t-1})}{2} & =
\sum_{s^t\setminus s_1} Q\left(s^t|s_1\right) \sum_{k=2}^{t} \frac{\beta_{s_{k-1}}^2}{2}\nonumber \\
%& = \sum_{k=2}^{t} \sum_{s^{k-1} \setminus s_1}  \sum_{s^t\setminus s^{k-1}}
%Q(s^{k-1}|s_1) Q(s^t|s_{k-1})  \frac{\beta_{s_{k-1}}^2}{2} \nonumber \\
& = \sum_{k=2}^{t} \sum_{s^{k-1}\setminus s_1} \frac{\beta_{s_{k-1}}^2} {2} Q\left(s^{k-1}|s_1\right)
\sum_{s^t\setminus s^{k-1}} Q\left(s^t|s_{k-1}\right)  \nonumber\\
& = \sum_{k=2}^{t} \sum_{s^{k-1}\setminus s_1} \frac{\beta_{s_{k-1}}^2} {2} Q\left(s^{k-1}|s_1\right) \label{eq-sums-up-to-1}\\
& = \sum_{k=2}^{t} \sum_{s_{k-1}} \frac{\beta_{s_{k-1}}^2} {2} Q\left(s_{k-1}|s_1\right), \label{eq-sums-up-to-1-second}
\end{align}
where in~\eqref{eq-sums-up-to-1} we use that, for any $s_{k-1}$, $\sum_{s^t\setminus s^{k-1}} Q\left(s^t|s_{k-1}\right)=1$, and in~\eqref{eq-sums-up-to-1-second} we use that, for any fixed $s_1$ and $s_{k-1}$, $\sum_{s^{k-2}\setminus s_1} Q\left(s^{k-1}|s_1\right)=Q\left(s_{k-1}|s_1\right)$. Since $Q$ is stochastic, irreducible and aperiodic, with right Perron vector $\overline \theta$, we have $Q^t \longrightarrow  1 \overline \theta^\top$, as $t \rightarrow +\infty$. Noting that $Q(s_{k-1}|s_1)= e_{s_1}^\top Q^{k-1} e_{s_{k-1}}$, for any $s_1$ and $s_{k-1}$, the limit~\eqref{eq-limit-sum-Q-beta} follows.

Going back to~\eqref{eq-first-deriv-of-g-t}, and combining~\eqref{eq-second-term} and~\eqref{eq-limit-sum-Q-beta}, we have that the limit of the second term in~\eqref{eq-first-deriv-of-g-t}, as $t\rightarrow +\infty$, equals
\begin{align}\label{eq-second-term-final}
&\lim_{t\rightarrow +\infty} \frac{ (2\alpha -1)}{t}  \frac{\sum_{s^t} P^\alpha(s^t)
e^{-\alpha (1-\alpha) \frac{\beta^2(s^{t-1})}{2}}  v_{s_t} (\alpha)    \frac{\beta^2(s^{t-1})}{2}  }
{\lambda(\alpha)^{t-1} \pi^\top v(\alpha)}\nonumber\\
& = \lim_{t\rightarrow +\infty} (2\alpha -1) \sum_{s_1}  \frac{\pi_{s_1}   v_{s_1} (\alpha)}{\pi^\top v(\alpha)} \frac{1}{t} \sum_{s^t\setminus s_1} Q\left(s^t|s_1\right)  \frac{\beta^2(s^{t-1})}{2}  \nonumber\\
 & = (2\alpha -1) \sum_{j=1}^N  \frac{\pi_{j}   v_{j} (\alpha)}{\pi^\top v(\alpha)}
  \sum_{i=1}^{N} \overline{\theta}_i \frac{\beta_i^2}{2} \nonumber \\
 & = (2\alpha -1) \sum_{i=1}^{N} \overline{\theta}_i \frac{\beta_i^2}{2},
\end{align}
where the last equality follows from the fact that $\left\{\frac{\pi_{j}   v_{j} (\alpha)}{\pi^\top v(\alpha)}: j=1,...,N\right\}$ are convex multipliers.

Finally, we consider the first term in~\eqref{eq-first-deriv-of-g-t}. Expanding the term under the logarithm to complete $Q\left(s^t|s_1\right)$ (see eq~\eqref{eq-Q-s-t}), we obtain
\begin{align}\label{eq-first-term}
& \frac{P^\alpha(s^t) e^{-\alpha (1-\alpha) \frac{\beta^2(s^{t-1})}{2}} v_{s_t}(\alpha) \log P(s^t)}
 {\lambda(\alpha)^{t-1} \pi^\top v(\alpha)} \nonumber \\
 & = \frac{1}{\alpha}\frac{\pi_{s_1} v_{s_1} (\alpha)}{ \pi^\top v(\alpha)} Q\left(s^t|s_1\right)
 \log \left(\frac{\pi_{s_1}  Q\left(s^t|s_1\right) \lambda(\alpha)^{t-1} v_{s_1}(\alpha)}
 {e^{-\alpha (1-\alpha) \frac{\beta^2(s^{t-1})}{2}} v_{s_t}(\alpha)} \right) \nonumber \\
 & =  \frac{\pi_{s_1} v_{s_1} (\alpha)}{\pi^\top v(\alpha)}
 \left( \frac{1}{\alpha} Q\left(s^t|s_1\right)\log \pi_{s_1}\frac{v_{s_1}(\alpha)}{v_{s_t}(\alpha)} +
 \frac{1}{\alpha}  Q\left(s^t|s_1\right)\log Q\left(s^t|s_1\right) \right.\nonumber\\
& \phantom{=} \left.+ \frac{t-1}{\alpha} Q\left(s^t|s_1\right) \log \lambda(\alpha) + (1-\alpha)Q\left(s^t|s_1\right)  \frac{\beta^2(s^{t-1})}{2} \right).
\end{align}
Since $\pi_{s_1}\frac{v_{s_1}(\alpha)}{v_{s_t}(\alpha)}$ is bounded, it is easy to see that
\begin{equation}\label{eq-first-term-first}
\lim_{t\rightarrow +\infty} \frac{1}{t\alpha} \sum_{s^t} Q\left(s^t|s_1\right)
\log \pi_{s_1}\frac{v_{s_1}(\alpha)}{v_{s_t}(\alpha)}=0.
\end{equation}
Further, using AEP~\cite{Cover06}, it can be shown that for any $s_1$,
\begin{equation}\label{eq-first-term-second}
\lim_{t\rightarrow +\infty} \frac{1}{t} \sum_{s^t} Q\left(s^t|s_1\right)
\log Q\left(s^t|s_1\right) = -H(\alpha).
\end{equation}
As for the limit corresponding to the last term in~\eqref{eq-first-term}, we use~\eqref{eq-limit-sum-Q-beta}.
Summarizing~\eqref{eq-second-term-final},\eqref{eq-first-term-first},\eqref{eq-first-term-second},
\eqref{eq-limit-sum-Q-beta} yields that the sequence of first order derivatives of $g_t$ is pointwise convergent with the following limit:
\begin{align}\label{eq-first-order-der-g-t-limit}
\lim_{t\rightarrow +\infty} \frac{d}{d \alpha} g_t(\alpha)& = (2\alpha-1) R(\alpha) - \frac{1}{\alpha} H(\alpha)
+ \frac{1}{\alpha}\log \lambda(\alpha) + (1-\alpha) R(\alpha)\\
&= \alpha R(\alpha) - \frac{1}{\alpha} H(\alpha) + \frac{1}{\alpha}\log \lambda(\alpha).
\end{align}

We now recall expression~\eqref{eq-g-t-equals}, and recall that, from Lemma~\ref{lemma-convexity-of-log-lambda}, we know that $\lambda$ is differentiable and that each component of $v$ is differentiable. From~\eqref{eq-g-t-equals} it is easy then to show that the sequence $\frac{d}{d\alpha} g_t(\alpha)$ is uniformly differentiable on $\alpha \in [0,1]$. By Theorem 7.17 from~\cite{Rudin76}, we have that, for each $\alpha \in [0,1]$,
\begin{align}\label{eq-first-order-der-g-t-limit-lambda}
\lim_{t\rightarrow +\infty} \frac{d}{d \alpha} g_t(\alpha) & = \frac{d}{d \alpha} \lim_{t\rightarrow +\infty} g_t(\alpha)\nonumber \\
&= \frac{d}{d \alpha} \log \lambda(\alpha).
\end{align}
Multiplying with $\alpha$ in~\eqref{eq-first-order-der-g-t-limit}, and rearranging the terms, we get
\begin{equation*}
H(\alpha) - \alpha^2 R(\alpha)= \log \lambda(\alpha) - \alpha \frac{d}{d\alpha} \log \lambda(\alpha).
\end{equation*}
Computing now the first order derivative on both sides yields
\begin{equation*}
\frac{d}{d\alpha} \left(H(\alpha) - \alpha^2 R(\alpha)\right) = - \frac{d^2}{d\alpha^2} \log \lambda(\alpha).
\end{equation*}
By Lemma~\ref{lemma-convexity-of-log-lambda}, we know that $ \log \lambda(\alpha)$ is convex,
implying that the right hand side of the preceding equation is negative. This completes the proof of part~\ref{part-3-H-and-R-properties}.

\end{proof}

\mypar{Proof of Lemma~\ref{lemma-cvx-general}}
\begin{proof}
We prove that $f$ is convex by showing that its Hessian is a positive semi-definite matrix at every point $x\in \mathbb R^d$. It is easy to see that the gradient of $f$ at $x$, $\nabla f(x)$, is given by
\begin{equation}
\label{eq-grad-f}
\nabla f(x) = - \frac{1}{2}\frac{\sqrt{h(x)}}{\sqrt{g(x)}} \nabla g(x) - \frac{1}{2}\frac{\sqrt{g(x)}}{\sqrt{h(x)}} \nabla h(x),
\end{equation}
where $\nabla g(x)$ and $\nabla h(x)$ denote the gradients of $g$ and $h$, respectively, at $x$.
Further, it can be shown that the Hessian of $f$ at $x$, $\nabla^2 f(x)$, equals
\begin{align}
\label{eq-Hessian-f}
\nabla^2 f(x) & = - \frac{1}{2}\frac{\sqrt{h(x)}}{\sqrt{g(x)}} \nabla^2 g(x)
- \frac{1}{2}\frac{\sqrt{g(x)}}{\sqrt{h(x)}} \nabla^2 h(x) \\
& - \frac{1}{2} \nabla g(x) \frac{ \frac{1}{2} \frac{\sqrt{g(x)}}{\sqrt{h(x)}} \nabla^\top h(x)
 - \frac{1}{2} \frac{\sqrt{h(x)}}{\sqrt{g(x)}} \nabla^\top g(x)  }{g(x)}\nonumber \\
& - \frac{1}{2} \nabla h(x) \frac{ \frac{1}{2} \frac{\sqrt{h(x)}}{\sqrt{g(x)}} \nabla^\top g(x)
- \frac{1}{2} \frac{\sqrt{g(x)}}{\sqrt{h(x)}} \nabla^\top h(x)  }{h(x)}.
\end{align}
Since $g$ and $h$ are concave, the first two terms in~\eqref{eq-Hessian-f} are positive semi-definite matrices. Rearranging the remaining two terms, we obtain that the sum of the last two terms in~\eqref{eq-Hessian-f} equals
\begin{align*}
& \frac{1}{4} \frac{1}{g^{\frac{3}{2}}(x) h^{\frac{3}{2}}(x)} \left( h^2(x) \nabla g(x) \nabla g(x)^\top - g(x)h(x)\nabla g(x) \nabla h(x)^\top   \right.\\
&\phantom{\frac{1}{4} \frac{1}{g^{\frac{3}{2}}(x) h^{\frac{3}{2}}(x)}} \left.+\, g^2(x) \nabla h(x) \nabla h(x)^\top - g(x)h(x)\nabla h(x) \nabla h(x)^\top \right)\nonumber\\
& = \frac{1}{4} \frac{1}{g^{\frac{3}{2}}(x) h^{\frac{3}{2}}(x)} \left( h(x) \nabla g(x) - g(x) \nabla h(x) \right) \left( h(x) \nabla g(x) - g(x) \nabla h(x) \right)^\top,
\end{align*}
which is also a positive semi-definite matrix. Since $x$ was arbitrary, this proves that $f$ is convex.
\end{proof}

\bibliographystyle{IEEEtran}
\bibliography{IEEEabrv,BibliographyRandom_walk}
\end{document}